\journal{}
\begin{document}

\begin{frontmatter}

\title{Security evaluation of cyber networks under advanced persistent threats}

\cortext[cor1]{Corresponding author}

\author[rvt,rvt2]{Lu-Xing Yang}
\ead{ylx910920@gmail.com}

\author[rvt3]{Pengdeng Li}
\ead{1414797521@qq.com}

\author[rvt3]{Xiaofan Yang\corref{cor1}}
\ead{xfyang1964@gmail.com}

\author[rvt]{Luosheng Wen}
\ead{wls@cqu.edu.cn}

\author[rvt3]{Yingbo Wu}
\ead{wyb@cqu.edu.cn}

\author[rvt4]{Yuan Yan Tang}
\ead{yytang@umac.mo}

\address[rvt]{School of Mathematics and Statistics, Chongqing University, Chongqing, 400044, China}

\address[rvt2]{Faculty of Electrical Engineering, Mathematics and Computer Science, Delft University of Technology,  Delft, GA 2600, The Netherlands}

\address[rvt3]{School of Software Engineering, Chongqing University, Chongqing, 400044, China}

\address[rvt4]{Department of Computer and Information Science, The University of Macau, Macau}

\begin{abstract}
	
This paper is devoted to measuring the security of cyber networks under advanced persistent threats (APTs). First, an APT-based cyber attack-defense process is modeled as an individual-level dynamical system. Second, the dynamic model is shown to exhibit the global stability. On this basis, a new security metric of cyber networks, which is known as the limit security, is defined as the limit expected fraction of compromised nodes in the networks. Next, the influence of different factors on the limit security is illuminated through theoretical analysis and computer simulation. This work helps understand the security of cyber networks under APTs.
	
\end{abstract}

\begin{keyword}
cybersecurity \sep advanced persistent threat \sep cyber attack-defense \sep dynamic model \sep global stability \sep security metric


\MSC 34D05 \sep 34D20 \sep 34D23 \sep 68M99

\end{keyword}

\end{frontmatter}



\section{Introduction}

Cyberspace has come to be an integral part of our society. Government agencies, schools, hospitals, corporations, financial institutions and other organizations ceaselessly collect, process, and store a great deal of data on computers and transmit these confidential data across networks to other computers \cite{Kitchin1998, Dodge2000}. However, cyberspace is vulnerable to a wide range of cyber threats. Sophisticated cyber perpetrators exploit vulnerabilities to steal information and money and develop capabilities to disrupt and destroy essential cyber services. In light of the risk and potential consequences of cyber attacks, strengthening the security and resilience of cyberspace has become an important mission. Cybersecurity is committed to protecting computers, networks, programs and data from unintended or unauthorized access, change, or destruction \cite{Shoemaker2011, Kostopoulos2012, Singer2014}. You cannot manage if you cannot measure. Before working out cyberspace security solutions, the security of cyber networks must be evaluated \cite{Jaquith2007, Jensen2009, Cheng2014}.   

Advanced persistent threats (APTs) are a newly emerging class of cyber attacks. With a clear goal, an APT attack is highly targeted, well-organized, well-resourced, covert and long-term \cite{Tankard2011, ChenP2014, HuPF2015}. APTs pose a severe threat to cyberspace, because they invalidate conventional cyber defense mechanisms. In the last decade, the number of APTs increased rapidly and numerous security incidents were reported all over the world \cite{Rass2017}. For the purpose of resisting APTs, it is vital to evaluate the security of cyber networks under APTs. However, due to the persistence of APTs, existing security evaluation methods are not applicable to APTs \cite{Phillips1998, Kotenko2006, Frigault2008, Lippmann2012, Yusuf2017}. Recently, Pendleton et al. \cite{Pendleton2017} considered the expected fraction of compromised nodes in a cyber network as a security metric of the network. As the fraction is varying over time, its availability is questionable. 

To measure the security of a cyber network under APTs, an APT-based cyber attack-defense process must be modeled as a continuous-time dynamical system. The individual-level dynamical modeling technique, which has been applied to areas such as epidemic spreading \cite{Mieghem2009, Mieghem2011, Sahneh2012}, malware spreading \cite{XuSH2012a, XuSH2012b, XuSH2014, YangLX2015, YangLX2017a, YangLX2017b, WuYB2017, YangLX2017c}, rumor spreading \cite{YangLX2017d, YangLX2017e} and viral marketing \cite{ZhangTR2017b}, is especially suited to the modeling of APT-based cyber attack-defense processes, because the topological structure of the targeted cyber network can be accommodated \cite{XuSH2014a}. Towards this direction, a number of APT-based cyber attack-defense models have been proposed \cite{LuWL2013, XuSH2015a, ZhengR2015}. In particular, Zheng et al. \cite{ZhengR2016} found that a special APT-based cyber attack-defense model exhibits a global stability.

This paper focuses on estimating security of cyber networks under APTs. First, an APT-based cyber attack-defense process is modeled as an individual-level dynamical system. Second, the dynamic model is shown to exhibit the global stability. On this basis, a new security metric of cyber networks, which is known as the limit security, is defined as the limit expected fraction of compromised nodes in the networks. Next, the influence of different factors on the limit security is illuminated through theoretical analysis and computer simulation. This work helps understand the security of cyber networks under APTs.

The remaining materials are organized this way. Section 2 derives an APT-based cyber attack-defense model. Section 3 shows the global stability of the model and defines the limit security of cyber networks. The influence of different factors on the limit security is made clear in Sections 4 and 5. Finally, Section 6 closes this work.

\newtheorem{rk}{Remark}
\newproof{pf}{Proof}
\newtheorem{thm}{Theorem}
\newtheorem{lm}{Lemma}
\newtheorem{exm}{Example}
\newtheorem{cor}{Corollary}
\newtheorem{de}{Definition}
\newtheorem{cl}{Claim}
\newtheorem{pro}{Proposition}
\newtheorem{con}{Conjecture}
\newtheorem{expe}{Experiment}

\newproof{pfle2}{Proof of Lemma 2}
\newproof{pfth1}{Proof of Theorem 1}
\newproof{pfth4}{Proof of Theorem 4}
\newproof{pfcl1}{Proof of Claim 1}

\section{The modeling of APT-based cyber attack-defense processes}

For the purpose of evaluating the security of a cyber network under APTs, understanding the relevant cyber attack-defense process is requisite. This is the goal of this section.

\subsection{The cyber network}

Let $G = (V, E)$ denote the network interconnecting computers in a given cyber network, where $V = \{1, \cdots, N\}$, each node represents a computer in the cyber network, and there is an edge from node $i$ to node $j$ if and only if computer $i$ is allowed to deliver messages directly to computer $j$ through the network. Let $\mathbf{A} = \left(a_{ij}\right)_{N \times N}$ denote the adjacency matrix for $G$. Hereafter, $G$ is assumed to be strongly connected. 

In what follows, it is assumed that, at any time, every node in the cyber network is either \emph{secure} or \emph{compromised}, where all secure nodes are under the defender's control, and all compromised nodes are under the attacker's control. Let $X_i(t)$ = 0 and 1 denote that node $i$ is secure and compromised at time $t$, respectively. Then the state of the cyber network at time $t$ is represented by the vector 
\[
\mathbf{X}(t) = (X_1(t), X_2(t), \cdots, X_N(t)).
\]
Let $S_i(t)$ and $C_i(t)$ denote the probability of node $i$ being secure and compromised at time $t$, respectively.
\[
S_i(t) = \Pr\{X_i(t) = 0\}, \quad C_i(t) = \Pr\{X_i(t) = 1\}.
\] 
As $S_i(t)+C_i(t)\equiv 1$, the vector
$
\mathbf{C}(t) = (C_1(t), \cdots, C_N(t))^T
$
represents the expected state of the cyber network at time $t$.

\subsection{The attack and defense mechanisms}

The threat of an APT attack to the cyber network is twofold.

\begin{itemize}
	
	\item \emph{External attack}, which is conducted by the external attacker, with the intent of compromising the secure nodes in the network. The attack strength to secure node $i$ is $\alpha x_i$, where $\alpha$ stands for the technical level of external attack, $x_i$ stands for the resource per unit time used for attacking node $i$, $\sum_{i=1}^Nx_i > 0$.
	
	\item \emph{Internal infection}, which is caused by the compromised nodes in the network, with the intent of compromising the secure nodes in the network. The infection strength of compromised node $i$ to secure node $j$ is $\beta a_{ij}$, where $\beta$ stands for the technical level of internal infection. The combined infection strength to secure node $i$ at time $t$ is $f\left(\beta\sum_{j=1}^{N}a_{ji}1_{\{x_j(t) = 1\}}\right)$, where $1_A$ stands for the indicator function of event $A$, $f(0)=0$, $f(x) \leq x$ for all $x \geq 0$, $f$ is strictly increasing and concave, and $f$ is second continuously differentiable.
	
\end{itemize}

We refer to the vector $\mathbf{x} = (x_1, \cdots, x_N)$ as an attack scheme. The resource per unit time for the attack scheme $\mathbf{x}$ is $||\mathbf{x}||_1 = \sum_{i=1}^Nx_i$, where $||\cdot||_1$ stands for the 1-norm of vectors. .  

The defense of the cyber network against APTs is also twofold.

\begin{itemize}
	
	\item Prevention, which aims to prevent the secure nodes in the cyber network from being compromised. The prevention strength of secure node $i$ is $\delta y_i$, where $\delta$ stands for the technical level of prevention, $y_i > 0$ stands for the resource per unit time used for preventing node $i$.
	
	\item Recovery, which is intended to recover the compromised nodes in the cyber network. The recovery strength of compromised $i$ is $\gamma z_i$, where $\gamma$ stands for the technical level of recovery, $z_i > 0$ stands for the resource per unit time for recovering node $i$.
\end{itemize}

We refer to the vector $\mathbf{y} = (y_1, \cdots, y_N)$ as a prevention scheme, the vector $\mathbf{z} = (z_1, \cdots, z_N)$ as a recovery scheme, and the vector $(\mathbf{y}, \mathbf{z})$ as a defense scheme. The resources per unit time for the prevention scheme $\mathbf{y}$, the recovery scheme $\mathbf{y}$ and the defense scheme $(\mathbf{y}, \mathbf{z})$ are $||\mathbf{y}||_1$, $||\mathbf{z}||_1$ and $||\mathbf{y}||_1 + ||\mathbf{z}||_1$, respectively.

Let $\mathbf{w} = (w_1, \cdots, w_N)$ denote an attack scheme or a prevention scheme or a recovery scheme. In the subsequent study, the following two schemes will be used.

\begin{itemize}
	
\item The degree-first scheme: $w_i$ is linearly proportional to the out-degree of node $i$. Formally,
\[
  \mathbf{w} = ||\mathbf{w}||_1 \cdot \left(\frac{\sum_{j = 1}^N a_{1j}}{\sum_{i, j=1}^Na_{ij}}, \frac{\sum_{j = 1}^N a_{2j}}{\sum_{i, j=1}^Na_{ij}}, \cdots, \frac{\sum_{j = 1}^N a_{Nj}}{\sum_{i, j=1}^Na_{ij}}\right).
\]

\item The degree-last scheme: $w_i$ is inversely linearly proportional to the out-degree of node $i$. Formally,
\[
\mathbf{w} = ||\mathbf{w}||_1 \cdot \left(\frac{\frac{1}{\sum_{j = 1}^N a_{1j}}}{\sum_{i=1}^N \frac{1}{\sum_{j=1}^Na_{ij}}}, \frac{\frac{1}{\sum_{j = 1}^N a_{2j}}}{\sum_{i=1}^N \frac{1}{\sum_{j=1}^Na_{ij}}}, \cdots, \frac{\frac{1}{\sum_{j = 1}^N a_{Nj}}}{\sum_{i=1}^N \frac{1}{\sum_{j=1}^Na_{ij}}}\right).
\]

\item The uniform scheme: all $w_i$ are identical. Formally,
\[
\mathbf{w} = ||\mathbf{w}||_1 \cdot \left(\frac{1}{N}, \frac{1}{N}, \cdots, \frac{1}{N}\right).
\]

\end{itemize}

\subsection{The modeling of APT-based cyber attack-defense processes}

For the purpose of modeling APT-based cyber attack-defense processes, the following assumptions are made.

\begin{enumerate}

\item [(A$_1$)] Due to external attack, at any time secure node $i$ gets compromised at rate $\frac{\alpha x_i}{\delta y_i}$. This assumption is rational, because the rate is proportional to the attack strength and is inversely proportional to the prevention strength.

\item [(A$_2$)] Due to internal infection, at any time secure node $i$ gets compromised at rate $\frac{f\left(\beta\sum_{j=1}^{N}a_{ji}1_{\{x_j(t) = 1\}}\right)}{\delta y_i}$. This assumption is rational, because the rate is proportional to the combined infection strength and is inversely proportional to the prevention strength.

\item [(A$_3$)] Due to recovery, at any time compromised node $i$ becomes secure at rate $\gamma z_i$. This assumption is rational, because the rate is proportional to the recovery strength.

\end{enumerate}

Next, let us model the cyber attack-defense process. Let $\Delta t$ be a very small time interval. Following the above assumptions, we have that, for $i = 1, \cdots, N, t \geq 0$,
\[
\begin{split}
\Pr\{X_i(t+\Delta t) &= 1\mid X_i(t)=0\} = \frac{\Delta t}{\delta y_i}\left[\alpha x_i + f\left(\beta\sum_{j=1}^{N}a_{ji}C_j(t)\right)\right] + o(\Delta t), \\
\Pr\{X_i(t+\Delta t) & = 0\mid X_i(t)=1\} = \gamma z_i\Delta t + o(\Delta t).
\end{split}
\]
Invoking the total probability formula, rearranging the terms, dividing both sides by $\Delta t$, and letting $\Delta t\rightarrow 0$, we get a dynamic model as follows.
\begin{equation}
\frac{dC_i(t)}{dt}= \frac{\alpha x_i}{\delta y_i} - \left(\frac{\alpha x_i}{\delta y_i} + \gamma z_i\right)C_i(t) + \frac{1}{\delta y_i}[1 - C_i(t)]f\left(\beta\sum_{j=1}^{N}a_{ji}C_j(t)\right), \quad t \geq 0, i = 1, \cdots, N.
\end{equation}

\noindent We refer to the model as the generic secure-compromised-secure (GSCS) model, because the function $f$ meets a set of generic conditions. The diagram of state transitions of node $i$ under this model is given in Fig. 1. To a certain extent, the GSCS model accurately captures APT-based cyber attack-defense processes.

\begin{figure}[H]
	\centering
	\includegraphics[width=0.5\textwidth]{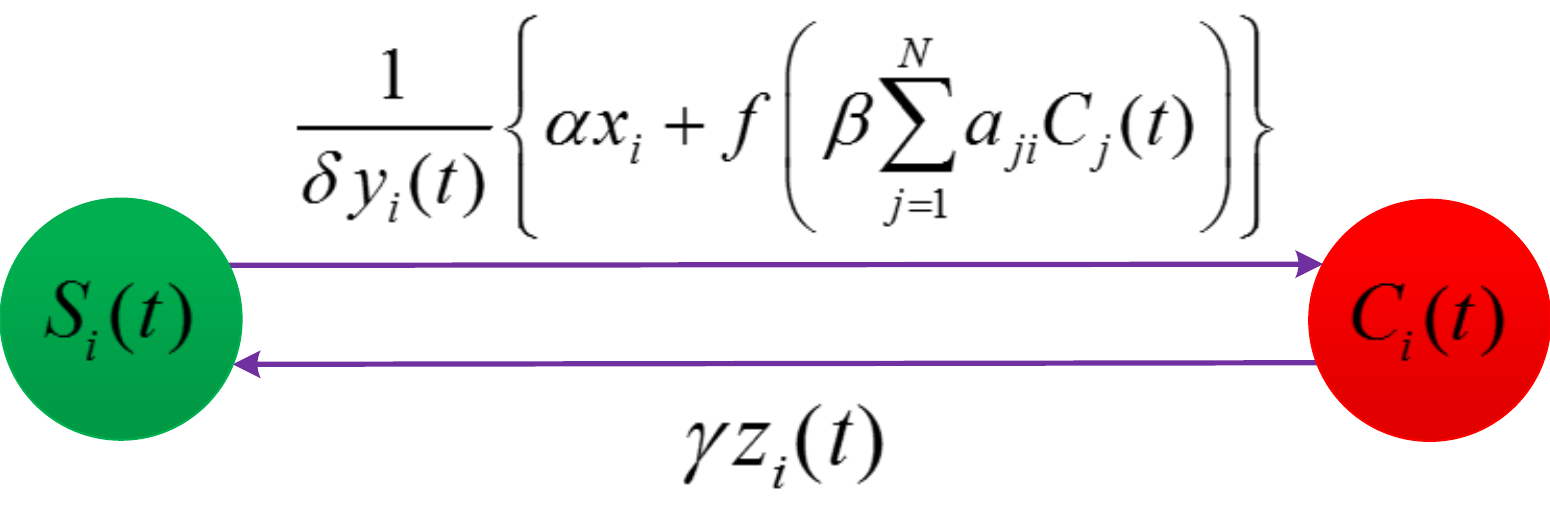}
	\caption{Diagram of state transitions of node $i$ under the GSCS model.}
\end{figure}

Let
\[
\Omega=\left\{(c_1,c_2,\cdots,c_N)^T\in \mathbb{R}_+^{N}\mid c_i\leq 1, i = 1, \cdots, N\right\}.
\]
It is trivial to show that $\mathbf{C}(t) \in \Omega$ for $t \geq 0$.

\section{Theoretical analysis of the GSCS model}

This section is dedicated to studying the dynamical properties of the GSCS model.

\subsection{Preliminaries}

For fundamental knowledge on differential dynamical systems, see Ref. \cite{Khalil2002}.

\begin{lm} (Chaplygin Lemma, see Theorem 31.4 in \cite{Szarski1965}) Consider a smooth $n$-dimensional system of differential equations 
	\[
	\frac{d\mathbf{x}(t)}{dt} = \mathbf{f}(\mathbf(\mathbf{x}(t)), \quad t \geq 0
	\]
	and the corresponding system of differential inequalities 
	\[
	\frac{d\mathbf{y}(t)}{dt} \geq \mathbf{f}(\mathbf(\mathbf{y}(t)), \quad t \geq 0
	\]
	with $\mathbf{x}(0) = \mathbf{y}(0)$. Suppose that for any $a_1, \cdots, a_n \geq 0$, there hold
	\[
	f_i(x_1+a_1, \cdots, x_{i-1}+a_{i-1}, x_i, x_{i+1} + a_{i+1}, \cdots, x_n + a_n) \geq f_i(x_1, \cdots, x_n), \quad i = 1, \cdots, n.
	\]
	Then $\mathbf{y}(t) \geq \mathbf{x}(t)$ for $t \geq 0$.
\end{lm}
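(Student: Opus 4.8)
The plan is to argue by contradiction using a first-touching-time analysis, after introducing a small perturbation that upgrades the monotonicity hypothesis from a non-strict to a strict inequality. Write $g_i(t) = y_i(t) - x_i(t)$, so that $g_i(0) = 0$ and the goal is to show $g_i(t) \geq 0$ for every $i$ and every $t \geq 0$. The hypothesis on $\mathbf{f}$ is precisely the \emph{quasimonotone} (Kamke--M\"uller) condition: each $f_i$ is nondecreasing in every argument except possibly the $i$-th. The obstruction to a naive comparison is that this condition, applied at a point where two trajectories touch, yields only $\dot g_k \geq 0$ rather than a strict sign, which does not by itself exclude a tangential crossing; the perturbation is designed exactly to remove this degeneracy.

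First I would replace the solution $\mathbf{x}$ by a family of strict sub-solutions. For $\epsilon > 0$ let $\mathbf{x}^\epsilon$ solve $\dot{x}_i^\epsilon = f_i(\mathbf{x}^\epsilon) - \epsilon$ with $\mathbf{x}^\epsilon(0) = \mathbf{x}(0) = \mathbf{y}(0)$. Since $\mathbf{f}$ is smooth, hence locally Lipschitz, these solutions exist, are unique, and depend continuously on $\epsilon$, so $\mathbf{x}^\epsilon \to \mathbf{x}$ uniformly on every compact time interval as $\epsilon \to 0^+$. I claim $y_i(t) > x_i^\epsilon(t)$ for all $i$ and all $t > 0$. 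At $t = 0$ the trajectories coincide, and for each $k$ one has $\frac{d}{dt}(y_k - x_k^\epsilon)\big|_{t=0} \geq f_k(\mathbf{y}(0)) - f_k(\mathbf{x}^\epsilon(0)) + \epsilon = \epsilon > 0$, so the gap opens immediately and $y_i > x_i^\epsilon$ holds componentwise for all small $t > 0$.

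The core of the argument is to rule out any later crossing. Suppose $t^\ast = \sup\{T : \mathbf{y}(t) > \mathbf{x}^\epsilon(t) \text{ componentwise on } (0,T]\}$ is finite. By continuity $\mathbf{y}(t^\ast) \geq \mathbf{x}^\epsilon(t^\ast)$ and some index $k$ satisfies $y_k(t^\ast) = x_k^\epsilon(t^\ast)$ while $y_k(t) > x_k^\epsilon(t)$ just before $t^\ast$; hence the left-hand derivative gives $\frac{d}{dt}(y_k - x_k^\epsilon)(t^\ast) \leq 0$. On the other hand, setting $a_i = y_i(t^\ast) - x_i^\epsilon(t^\ast) \geq 0$ for $i \neq k$ and using $y_k(t^\ast) = x_k^\epsilon(t^\ast)$, the quasimonotone inequality yields $f_k(\mathbf{y}(t^\ast)) \geq f_k(\mathbf{x}^\epsilon(t^\ast))$, whence $\frac{d}{dt}(y_k - x_k^\epsilon)(t^\ast) \geq f_k(\mathbf{y}(t^\ast)) - f_k(\mathbf{x}^\epsilon(t^\ast)) + \epsilon \geq \epsilon > 0$, a contradiction. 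Therefore $t^\ast = \infty$. Fixing any $t \geq 0$ and letting $\epsilon \to 0^+$ in $y_i(t) \geq x_i^\epsilon(t)$ then gives $y_i(t) \geq x_i(t)$, which is the assertion.

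The step I expect to be the main obstacle is exactly the equality case just handled: without the $-\epsilon$ perturbation the touching-time analysis stalls at the non-strict bound $\dot g_k(t^\ast) \geq 0$, and the whole difficulty is to manufacture a strict inequality there. The strict sub-solution $\mathbf{x}^\epsilon$ accomplishes this cleanly, and the only remaining technical care is in justifying the passage to the limit $\mathbf{x}^\epsilon \to \mathbf{x}$, for which the smoothness of $\mathbf{f}$ together with standard continuous-dependence theorems for ODEs is enough.
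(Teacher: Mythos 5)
Your argument is correct, but note that the paper itself contains no proof of this statement: it is quoted as Lemma 1 (Chaplygin Lemma) with a citation to Theorem 31.4 of Szarski's \emph{Differential Inequalities} and used as an off-the-shelf comparison tool. The only meaningful comparison is therefore with the classical literature proof, and your proposal essentially reproduces it: the hypothesis is the Kamke--M\"uller quasimonotone condition, and the standard argument is exactly your two-step scheme --- replace $\mathbf{x}$ by the strict subsolution $\mathbf{x}^{\epsilon}$ with $\dot{x}_i^{\epsilon} = f_i(\mathbf{x}^{\epsilon}) - \epsilon$, rule out a first touching time $t^{\ast}$ by playing the nonpositive left derivative of $y_k - x_k^{\epsilon}$ at $t^{\ast}$ against the strict lower bound $\epsilon$ furnished by quasimonotonicity at the touching component, and recover the unperturbed claim by letting $\epsilon \to 0^{+}$ via continuous dependence. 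Two minor points to tighten in a write-up, neither of which affects correctness: (i) $\mathbf{x}^{\epsilon}$ is only guaranteed to exist and converge to $\mathbf{x}$ on compact intervals $[0,T]$ within the common existence interval of $\mathbf{x}$ and $\mathbf{y}$, with $\epsilon$ small depending on $T$, so the crossing argument should be run on each fixed $[0,T]$ and $T$ then taken arbitrary, rather than concluding $t^{\ast} = \infty$ in one stroke; (ii) if $\mathbf{y}$ satisfies the differential inequality only in the Dini sense rather than being classically differentiable, the left derivative at $t^{\ast}$ should be replaced by the corresponding Dini derivative, and the contradiction survives verbatim. What your self-contained proof buys over the paper's bare citation is transparency about where each hypothesis enters: quasimonotonicity is used only at the touching index, and smoothness of $\mathbf{f}$ only through local Lipschitz continuity (for uniqueness and continuous dependence), so the lemma in fact holds under weaker regularity than stated.
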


For fundamental knowledge on fixed point theory, see Ref. \cite{Agarwal2001}.

\begin{lm} (Brouwer Fixed Point Theorem, see Theorem 4.10 in \cite{Agarwal2001}) Let $D$ be a nonempty, bounded, closed and convex subset of $\mathbb{R}^n$, and let $f: D \rightarrow D$ be a continuous function. Then $f$ has a fixed point.
\end{lm}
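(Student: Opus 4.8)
The statement is the classical Brouwer Fixed Point Theorem, so a ``proof'' here means reproducing the standard argument that underlies the cited reference \cite{Agarwal2001}. The plan is two-staged: first reduce the case of an arbitrary nonempty, bounded, closed and convex $D\subseteq\mathbb{R}^n$ to the case of a closed Euclidean ball, and then settle the ball case via the non-retraction principle.

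For the reduction I would use the nearest-point projection. Since $D$ is nonempty, closed and convex, for every $u\in\mathbb{R}^n$ there is a unique point $r(u)\in D$ minimizing $\|u-\cdot\|$, and the resulting map $r:\mathbb{R}^n\to D$ is continuous (indeed $1$-Lipschitz) and satisfies $r|_D=\mathrm{id}$. Because $D$ is bounded, fix a closed ball $B\supseteq D$ and set $\tilde f=f\circ r|_B:B\to B$. This composite is continuous and its image lies in $f(D)\subseteq D$; hence any fixed point $x=\tilde f(x)$ automatically lies in $D$, where $r(x)=x$ and therefore $\tilde f(x)=f(x)$. Thus a fixed point of $\tilde f$ on $B$ is a fixed point of $f$ on $D$, and it suffices to prove the theorem for the ball $B$.

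For the ball case I would argue by contradiction. Suppose $g:B\to B$ is continuous and $g(x)\neq x$ for all $x\in B$. Then one can define $\rho:B\to\partial B$ by following the ray emanating from $g(x)$ through $x$ until it meets the boundary sphere; solving the quadratic for the ray parameter shows $\rho$ is well defined and continuous precisely because $x-g(x)\neq 0$, and $\rho(x)=x$ whenever $x\in\partial B$. Hence $\rho$ is a continuous retraction of $B$ onto $\partial B$, which is impossible.

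The impossibility of such a retraction is the genuine obstacle; everything above is elementary. I would establish it by either of two routes. The topological route notes that a retraction would make the identity of $\partial B$ factor as $\partial B\hookrightarrow B\xrightarrow{\rho}\partial B$ through the contractible ball, forcing the induced map on $H_{n-1}$ to factor through $H_{n-1}(B)=0$ while $H_{n-1}(\partial B)\cong\mathbb{Z}\neq 0$, a contradiction. The elementary analytic route (Milnor--Hirsch) instead smooths $\rho$ into a smooth retraction by mollification and then studies $p(t)=\int_B\det D\!\left((1-t)\,\mathrm{id}+t\rho\right)dx$: this is a polynomial in $t$ that equals $\mathrm{Vol}(B)$ for all small $t\geq 0$ (there the interpolant is a diffeomorphism of $B$ onto itself), hence is identically $\mathrm{Vol}(B)$, yet at $t=1$ the map takes values in the measure-zero sphere $\partial B$, forcing $p(1)=0$ --- a contradiction. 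I expect this non-retraction step to be the only nontrivial part of the whole argument.
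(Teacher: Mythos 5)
Your proposal is correct, but note that the paper does not prove this lemma at all: it is quoted verbatim as a known classical result (Theorem 4.10 of the cited fixed-point monograph of Agarwal, Meehan and O'Regan) and is used as an off-the-shelf tool in the proof of Theorem 1, so there is no in-paper argument to compare against. Your two-stage architecture --- reduction to a closed ball via the nearest-point projection $r$ (using that a fixed point of $f\circ r|_B$ must land in $f(D)\subseteq D$, where $r$ acts as the identity), followed by the no-retraction principle via the ray map --- is the standard proof and is sound; in particular your implicit claim that $\rho|_{\partial B}=\mathrm{id}$ does hold, since for $\|x\|=R$ one has $\langle x,\,x-g(x)\rangle=R^2-\langle x,g(x)\rangle>0$ by Cauchy--Schwarz and the absence of fixed points, so the ray parameter vanishes there. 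Two small caveats on the non-retraction step. First, the homology route as stated fails for $n=1$, where $H_0(\partial B)\cong\mathbb{Z}^2$ rather than $\mathbb{Z}$; use reduced homology $\tilde H_{n-1}(S^{n-1})\cong\mathbb{Z}$, or dispose of $n=1$ by connectedness (a retraction would be a continuous surjection from a connected interval onto a two-point set). Second, in the analytic route, mollifying $\rho$ directly does not yield a retraction: the mollified map is neither identity on $\partial B$ nor valued in $\partial B$. The clean standard fix is to smooth the fixed-point-free map $g$ instead (approximate uniformly within $\delta/2$ where $\delta=\min_x\|x-g(x)\|>0$, rescale into the ball), and then build $\rho$ from the smooth $g$, which makes $\rho$ automatically smooth; also, at $t=1$ the crisp reason $p(1)=0$ is rank deficiency --- differentiating $\|\rho\|^2\equiv R^2$ gives $D\rho(x)^{T}\rho(x)=0$ with $\rho(x)\neq 0$, so $\det D\rho\equiv 0$ pointwise --- rather than the measure-zero image heuristic. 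With those repairs your sketch is a complete and correct proof of the lemma the paper merely cites.
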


For fundamental knowledge on matrix theory, see Ref. \cite{Horn2013}. Let $diag(a_i)$ denote the diagonal matrix with diagnoal entries $a_1, a_2, \cdots, a_N$, and let $col(a_i)$ denote the column vector of components $a_1, a_2, \cdots, a_N$. This paper considers only real square matrices. For a matrix $\mathbf{A}$, let $s(\mathbf{A})$ denote the maximum real part of an eigenvalue of $\mathbf{A}$. $\mathbf{A}$ is \emph{Metzler} if its off-diagonal entries are all nonnegative.

\begin{lm}(Section 2.1 in \cite{Varga2000})
Let $\mathbf{A}$ be an irreducible Metzler matrix. Then the following claims hold.
\begin{enumerate}
	\item[(a)] If there is a positive vector $\mathbf{x}$ such that $\mathbf{A}\mathbf{x}<\lambda \mathbf{x}$, then $s(\mathbf{A})<\lambda $.
	\item[(b)] If there is a positive vector $\mathbf{x}$ such that $\mathbf{A}\mathbf{x}=\lambda \mathbf{x}$, then $s(\mathbf{A})=\lambda $.
	\item[(c)] If there is a positive vector $\mathbf{x}$ such that $\mathbf{A}\mathbf{x}>\lambda \mathbf{x}$, then $s(\mathbf{A})>\lambda $.
\end{enumerate}
\end{lm}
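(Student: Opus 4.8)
The plan is to reduce everything to the Perron--Frobenius theory of nonnegative matrices by a diagonal shift, and then read off all three claims from a single scalar inequality obtained by pairing with a strictly positive left eigenvector.

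First I would choose a constant $c > 0$ large enough that $\mathbf{B} := \mathbf{A} + c\mathbf{I}$ has all entries nonnegative; since the off-diagonal entries of the Metzler matrix $\mathbf{A}$ are already nonnegative, this only requires $c$ to dominate the (possibly negative) diagonal entries. The matrix $\mathbf{B}$ is then nonnegative, and it inherits the irreducibility of $\mathbf{A}$ because the shift alters only the diagonal and hence leaves the associated directed graph unchanged. The spectrum of $\mathbf{B}$ is exactly that of $\mathbf{A}$ translated by $c$, so $s(\mathbf{A}) = \rho(\mathbf{B}) - c$, where $\rho(\mathbf{B})$ denotes the spectral radius (Perron root) of $\mathbf{B}$.

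Next I would invoke the Perron--Frobenius theorem for irreducible nonnegative matrices to obtain a strictly positive left eigenvector $\mathbf{v} > \mathbf{0}$ of $\mathbf{B}$ associated with $\rho(\mathbf{B})$. Translating back through the shift, this same $\mathbf{v}$ satisfies $\mathbf{v}^T\mathbf{A} = s(\mathbf{A})\,\mathbf{v}^T$ with $\mathbf{v} > \mathbf{0}$. This strictly positive left eigenvector is the sole nontrivial ingredient, and it is precisely here that irreducibility (and the cited Perron--Frobenius machinery) does the work; everything else is elementary.

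Finally, for (a) I would left-multiply the hypothesis $\mathbf{A}\mathbf{x} < \lambda\mathbf{x}$ by $\mathbf{v}^T$. Since $\mathbf{v} > \mathbf{0}$ and the vector inequality is strict in every component, the resulting scalar inequality is strict: $\mathbf{v}^T\mathbf{A}\mathbf{x} < \lambda\,\mathbf{v}^T\mathbf{x}$. Substituting $\mathbf{v}^T\mathbf{A} = s(\mathbf{A})\,\mathbf{v}^T$ gives $s(\mathbf{A})\,\mathbf{v}^T\mathbf{x} < \lambda\,\mathbf{v}^T\mathbf{x}$, and dividing by $\mathbf{v}^T\mathbf{x} > 0$ (positive because both $\mathbf{v}$ and $\mathbf{x}$ are positive) yields $s(\mathbf{A}) < \lambda$. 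Claim (c) follows by the identical argument with the inequality reversed, and claim (b) by replacing $<$ with $=$ (or, alternatively, by applying (a) to $\mathbf{A}\mathbf{x} = \lambda\mathbf{x} < (\lambda+\varepsilon)\mathbf{x}$ and (c) to $\mathbf{A}\mathbf{x} > (\lambda-\varepsilon)\mathbf{x}$ and letting $\varepsilon \to 0$). The one point demanding care is the preservation of strictness when passing from the componentwise inequality to the scalar one; this is guaranteed exactly because $\mathbf{v}$ is \emph{strictly} positive rather than merely nonnegative, which is why irreducibility cannot be dropped.
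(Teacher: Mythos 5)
Your proposal is correct, but there is no in-paper argument to compare it against: the paper states this lemma purely as a quoted result, citing Section~2.1 of Varga's \emph{Matrix Iterative Analysis}, and uses it as a black box (in the proof of its Lemma~5). What you have written is essentially the standard derivation hiding behind that citation: shift to $\mathbf{B}=\mathbf{A}+c\mathbf{I}\geq\mathbf{0}$, note the shift leaves the directed graph (hence irreducibility) untouched, extract a strictly positive left Perron eigenvector $\mathbf{v}$ of $\mathbf{B}$ (i.e., apply Perron--Frobenius to the irreducible nonnegative matrix $\mathbf{B}^T$), and settle all three claims by pairing the hypothesis with $\mathbf{v}$. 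All steps check out, including the point you flag: with $\mathbf{v}>\mathbf{0}$ and $\lambda\mathbf{x}-\mathbf{A}\mathbf{x}>\mathbf{0}$ componentwise, $\mathbf{v}^T(\lambda\mathbf{x}-\mathbf{A}\mathbf{x})$ is a sum of strictly positive terms, and $\mathbf{v}^T\mathbf{x}>0$ licenses the final division. One step you elide slightly: the identity $s(\mathbf{A})=\rho(\mathbf{B})-c$ needs not only that the spectrum translates by $c$ but also that $s(\mathbf{B})=\rho(\mathbf{B})$; this is immediate because Perron--Frobenius makes $\rho(\mathbf{B})$ an eigenvalue of the nonnegative matrix $\mathbf{B}$, while every eigenvalue $\mu$ satisfies $\operatorname{Re}\mu\leq|\mu|\leq\rho(\mathbf{B})$ --- worth a sentence, but not a gap. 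Your closing remark, however, overstates the role of irreducibility: it is indispensable for \emph{your method} (it is what makes $\mathbf{v}$ strictly positive), but the three conclusions themselves survive for reducible Metzler matrices via Collatz--Wielandt-type bounds, since for any positive $\mathbf{x}$ one has $\min_i(\mathbf{A}\mathbf{x})_i/x_i\leq s(\mathbf{A})\leq\max_i(\mathbf{A}\mathbf{x})_i/x_i$ even without irreducibility. So irreducibility is a convenience of the argument (and of the paper's intended application, where the matrix $\mathbf{M}$ is irreducible anyway), not a necessity exposed by counterexamples.
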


\subsection{A preliminary result}

For the GSCS model, let
\[
  \underline{C_i}=\frac{\alpha x_i}{\alpha x_i + \gamma\delta y_i z_i}, \quad \overline{C_i} = \frac{\alpha x_i+f(\beta\sum_{j=1}^{N}a_{ji})}{\alpha x_i + \gamma\delta y_i z_i + f(\beta\sum_{j=1}^{N}a_{ji})}, \quad i = 1, \cdots, N.
\]

The following lemma will be useful in the subsequent study.

\begin{lm}
Let $\mathbf{C}(t) = (C_1(t), C_2(t), \cdots, C_N(t))^T$ be a solution
to the SCS model. Then there are $t_0>0$ and $c>0$ such that
\[
  \min_{1 \leq i \leq N}C_i(t) \geq c, \quad t \geq t_0.
\]
\end{lm}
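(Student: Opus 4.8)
The plan is to establish a uniform positive lower bound by a persistence (propagation) argument that couples the nonnegativity of the two incoming terms of the GSCS model with the strong connectivity of $G$. The starting observation is that $\mathbf{C}(t)\in\Omega$ for all $t$, so $0\le C_i(t)\le 1$; consequently the external-attack term $\frac{\alpha x_i}{\delta y_i}$ and the internal-infection term $\frac{1}{\delta y_i}[1-C_i(t)]f\!\left(\beta\sum_{j}a_{ji}C_j(t)\right)$ are both nonnegative, since $1-C_i\ge 0$ and $f\ge 0$ on $\mathbb{R}_+$ (recall $f(0)=0$ and $f$ strictly increasing). These sign facts are what drive the lower bounds.

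First I would treat the directly attacked nodes. Let $S_0=\{i:x_i>0\}$, which is nonempty because $\sum_i x_i>0$. For $i\in S_0$, discarding the nonnegative infection term yields the scalar inequality
\[
\frac{dC_i}{dt}\ \ge\ \frac{\alpha x_i}{\delta y_i}-\Big(\frac{\alpha x_i}{\delta y_i}+\gamma z_i\Big)C_i .
\]
Comparing $C_i$ with the solution of the associated affine scalar ODE, whose equilibrium is exactly $\underline{C_i}=\frac{\alpha x_i}{\alpha x_i+\gamma\delta y_i z_i}>0$, gives $\liminf_{t\to\infty}C_i(t)\ge\underline{C_i}$. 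Hence there exist $t_1>0$ and $c_1>0$ with $C_i(t)\ge c_1$ for all $i\in S_0$ and $t\ge t_1$.

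Next I would propagate these bounds along directed edges. Suppose a set $S_k$ of nodes already satisfies $C_i(t)\ge c_k>0$ for $t\ge t_k$. If $a_{ji}>0$ for some $j\in S_k$, then, since every $C_\ell\ge 0$ and $f$ is increasing, $f\!\left(\beta\sum_\ell a_{\ell i}C_\ell\right)\ge f(\beta a_{ji}c_k)=:m>0$ for $t\ge t_k$; dropping the nonnegative attack term and using $C_i\le 1$ gives
\[
\frac{dC_i}{dt}\ \ge\ \frac{m}{\delta y_i}-\Big(\frac{\alpha x_i}{\delta y_i}+\gamma z_i+\frac{m}{\delta y_i}\Big)C_i,\qquad t\ge t_k,
\]
another affine scalar lower-comparison whose equilibrium $\frac{m}{\alpha x_i+\gamma\delta y_i z_i+m}$ is strictly positive. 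So node $i$ can be adjoined to the covered set with a fresh time $t_{k+1}$ and constant $c_{k+1}>0$. Because $G$ is strongly connected, every node is reachable from $S_0$ by a directed path, so this enlargement exhausts all $N$ nodes after at most $N-1$ rounds. Taking $t_0$ to be the last time produced and $c=\min$ over the finitely many constants yields $\min_{1\le i\le N}C_i(t)\ge c$ for $t\ge t_0$.

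The delicate points are mainly bookkeeping. One must check that the one-dimensional comparison principle applies at each step: the right-hand sides above are affine in $C_i$, hence Lipschitz and quasi-monotone, so Lemma 1 (or directly the scalar comparison theorem) gives $C_i(t)\ge v_i(t)$ for the comparison solution $v_i$. One must also confirm that the orientation of the infection term matches the reachability used in the propagation, namely that node $i$ is boosted by an in-neighbor $j$ with $a_{ji}>0$, so that strong connectivity supplies a directed chain from $S_0$ to every node; this is the step I expect to require the most care. The nonlinear factor $1-C_i$ causes no difficulty, since $C_i\le 1$ keeps it nonnegative and permits the affine linearization of the lower bound used throughout.
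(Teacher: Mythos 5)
Your proposal is correct and follows essentially the same route as the paper's own proof: a scalar affine comparison (via the Chaplygin lemma) giving $\liminf_{t\to\infty}C_i(t)\geq \underline{C_i}>0$ at an attacked node, followed by propagation of positive lower bounds along directed edges $a_{ji}>0$ using strong connectivity. Your explicit induction over sets $S_k$ merely formalizes what the paper compresses into ``the lemma follows by repeating the argument,'' and your handling of the $\varepsilon$-free eventual bound $C_i(t)\geq c_1$ is an equivalent substitute for the paper's $\underline{C_{i_0}}-\varepsilon$ device.
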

	
\begin{proof} Without loss of generality, assume $x_{i_0}>0$. It follows from the GSCS model that
\[
	\frac{dC_{i_0}(t)}{dt}\geq \frac{\alpha x_{i_0}}{\delta y_{i_0}}-\left(\frac{\alpha x_{i_0}}{\delta y_{i_0}} + \gamma z_{i_0}\right)C_{i_0}(t), \quad t\geq 0.
\]
Obviously, the comparison system 
\[
	\frac{du_{i_0}(t)}{dt}= \frac{\alpha x_{i_0}}{\delta y_{i_0}}-\left(\frac{\alpha x_{i_0}}{\delta y_{i_0}} + \gamma z_{i_0}\right)u_{i_0}(t), \quad t\geq 0,
\]
with $u_{i_0}(0) = C_{i_0}(0)$ admits $\underline{C_{i_0}} > 0$ as the globally stable equilibrium. By Lemma 1, we have 
\[
	C_{i_0}(t)\geq u_{i_0}(t), \quad t\geq 0. 
\]
So, 
\[
	\liminf_{t\rightarrow\infty}C_{i_0}(t)\geq \lim_{t \rightarrow \infty} u_{i_0}(t) = \underline{C_{i_0}}.
\]
Thus, for any $0 < \varepsilon < \underline{C_{i_0}}$, there is $t_1 > 0$ such that
\[
	C_{i_0}(t)\geq \underline{C_{i_0}} - \varepsilon, \quad t \geq t_1.
\]
As $G$ is strongly connected, there is $a_{i_0j_0} = 1$. Hence,
\[
	\frac{dC_{j_0}(t)}{dt}\geq \frac{1}{\delta y_{j_0}}f\left(\beta \left(\underline{C_{i_0}}-\varepsilon\right)\right)-\left[\frac{1}{\delta y_{j_0}}f\left(\beta \left(\underline{C_{i_0}}-\varepsilon\right)\right)+\gamma z_{j_0}\right]C_{j_0}(t), \quad t\geq t_1.
\]
Obviously, the comparison system
\[
	\frac{dv_{j_0}(t)}{dt} = \frac{1}{\delta y_{j_0}}f\left(\beta \left(\underline{C_{i_0}}-\varepsilon\right)\right)-\left[\frac{1}{\delta y_{j_0}}f\left(\beta \left(\underline{C_{i_0}}-\varepsilon\right)\right)+\gamma z_{j_0}\right]v_{j_0}(t), \quad t\geq t_1
\]
with $v_{j_0}(t_1)=C_{j_0}(t_1)$ admits $\frac{f\left(\beta \left(\underline{C_{i_0}}-\varepsilon\right)\right)}{f\left(\beta \left(\underline{C_{i_0}}-\varepsilon\right)\right)+\gamma\delta y_{j_0}z_{j_0}}$ as the globally stable equilibrium. By Lemma 1, we have
\[
	C_{j_0}(t) \geq v_{j_0}(t), \quad t \geq t_1.
\]
So,
\[
	\liminf_{t\rightarrow\infty}C_{j_0}(t)\geq \lim_{t \rightarrow \infty}v_{j_0}(t) = \frac{f\left(\beta \left(\underline{C_{i_0}}-\varepsilon\right)\right)}{f\left(\beta \left(\underline{C_{i_0}}-\varepsilon\right)\right)+\gamma\delta y_{j_0}z_{j_0}}.
\]
In view of the arbitrariness of $\varepsilon$, we get that 
\[
	\liminf_{t\rightarrow\infty}C_{j_0}(t)\geq \frac{f\left(\beta \underline{C_{i_0}}\right)}{f\left(\beta \underline{C_{i_0}}\right)+\gamma\delta y_{j_0}z_{j_0}} > 0. 
\]
The lemma follows by repeating the argument.
\end{proof}

\subsection{The equilibrium}

\begin{thm}
The GSCS model admits a unique equilibrium. Denote this equilibrium by $\mathbf{C}^*=(C_1^*,\cdots,C_N^*)^T$. Then $C_i^* > 0, \underline{C_i} \leq C_i^* \leq \overline{C_i}, 1 \leq i \leq N$.

\end{thm}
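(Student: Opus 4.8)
The plan is to recast the equilibrium condition as a fixed-point problem and then settle existence (with the two-sided bounds), positivity, and uniqueness in turn. Setting the right-hand side of the GSCS model to zero and solving the resulting linear equation for $C_i$, one sees that $\mathbf{C}^*$ is an equilibrium if and only if it is a fixed point of the map $\Phi=(\Phi_1,\dots,\Phi_N):\Omega\to\mathbb{R}^N$ given by
\[
\Phi_i(\mathbf{c})=\frac{\alpha x_i+f\bigl(\beta\sum_{j=1}^N a_{ji}c_j\bigr)}{\alpha x_i+\gamma\delta y_iz_i+f\bigl(\beta\sum_{j=1}^N a_{ji}c_j\bigr)}.
\]
Writing $\Phi_i(\mathbf{c})=\psi_i\bigl(\beta\sum_j a_{ji}c_j\bigr)$ with $\psi_i(u)=\frac{\alpha x_i+f(u)}{\alpha x_i+\gamma\delta y_iz_i+f(u)}$, I would first record that $v\mapsto\frac{A+v}{A+B+v}$ is increasing in $v$ and $f$ is increasing, so each $\psi_i$ is nondecreasing. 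Since $0\le\sum_j a_{ji}c_j\le\sum_j a_{ji}$ for $\mathbf{c}\in\Omega$, this monotonicity yields exactly $\underline{C_i}\le\Phi_i(\mathbf{c})\le\overline{C_i}$; hence $\Phi$ maps $\Omega$ into the box $D=\prod_{i=1}^N[\underline{C_i},\overline{C_i}]\subseteq\Omega$, and every fixed point automatically obeys the asserted bounds.

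For existence I would apply the Brouwer Fixed Point Theorem (Lemma 2) to $\Phi$ restricted to $D$: the set $D$ is nonempty, bounded, closed and convex; $\Phi$ is continuous (being built from $f\in C^2$ and rational functions with strictly positive denominators); and $\Phi(D)\subseteq D$ by the previous step. This produces an equilibrium $\mathbf{C}^*$ with $\underline{C_i}\le C_i^*\le\overline{C_i}$. To upgrade this to strict positivity $C_i^*>0$, I would invoke Lemma 4 applied to the constant solution $\mathbf{C}(t)\equiv\mathbf{C}^*$: being a genuine solution of the GSCS model, it must satisfy $\min_i C_i^*\ge c>0$. (Equivalently one can argue directly: some $x_{i_0}>0$ gives $C_{i_0}^*\ge\underline{C_{i_0}}>0$, and strong connectivity together with $f(0)=0$ and $f$ strictly increasing propagates positivity along every edge.)

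The main obstacle is uniqueness, which I would obtain from a monotonicity-plus-strict-concavity argument. The key analytic fact is that each $\psi_i$ is strictly concave: differentiating, $\psi_i''=h_i''(f)(f')^2+h_i'(f)f''$ where $h_i(v)=\frac{\alpha x_i+v}{\alpha x_i+\gamma\delta y_iz_i+v}$ satisfies $h_i'>0$ and $h_i''<0$, so the first term is strictly negative while the second is nonpositive by concavity of $f$. Combined with $\psi_i(0)=\underline{C_i}\ge 0$, strict concavity gives $\psi_i(\lambda u)>\lambda\psi_i(u)$ for all $u>0$, $\lambda\in(0,1)$. Because $\beta\sum_j a_{ji}c_j$ is linear and strictly positive whenever $\mathbf{c}>\mathbf{0}$ (strong connectivity forces $\sum_j a_{ji}\ge 1$), this translates into the strict subhomogeneity $\Phi(\lambda\mathbf{c})>\lambda\Phi(\mathbf{c})$ for $\mathbf{c}>\mathbf{0}$, $\lambda\in(0,1)$, alongside the monotonicity of $\Phi$. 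Given two positive equilibria $\mathbf{C}^*,\hat{\mathbf{C}}$, I would then set $\kappa=\min_i C_i^*/\hat{C}_i$ and suppose $\kappa<1$; monotonicity and strict subhomogeneity yield $C_i^*=\Phi_i(\mathbf{C}^*)\ge\Phi_i(\kappa\hat{\mathbf{C}})>\kappa\Phi_i(\hat{\mathbf{C}})=\kappa\hat{C}_i$ for every $i$, contradicting that equality holds at the minimizing index. Hence $\kappa\ge 1$, i.e. $\mathbf{C}^*\ge\hat{\mathbf{C}}$; the symmetric argument gives $\hat{\mathbf{C}}\ge\mathbf{C}^*$, so $\mathbf{C}^*=\hat{\mathbf{C}}$. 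The delicate points to get right are the strictness in the subhomogeneity inequality (which rests on $h_i''<0$, and so does \emph{not} require $f$ to be strictly concave) and the use of strong connectivity to guarantee that every coordinate of the infection term is positive on the open cone.
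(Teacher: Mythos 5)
Your proposal is correct and follows essentially the same route as the paper: the identical fixed-point map $H_i(\mathbf{w})=\frac{\alpha x_i+f(\beta\sum_j a_{ji}w_j)}{\alpha x_i+\gamma\delta y_iz_i+f(\beta\sum_j a_{ji}w_j)}$ on the box $\prod_i[\underline{C_i},\overline{C_i}]$ with Brouwer for existence, Lemma 4 applied to the equilibrium (as a constant solution) for positivity, and a ratio-extremum contradiction for uniqueness. Your strict-subhomogeneity lemma $\Phi(\lambda\mathbf{c})>\lambda\Phi(\mathbf{c})$, derived from $\psi_i''<0$, is just a cleanly packaged version of the paper's inline chain of inequalities at the maximizing index (the paper uses $\rho=\max_i C_i^*/C_i^{**}>1$ where you use $\kappa=\min_i C_i^*/\hat{C}_i<1$, which are mirror images), and your observation that strictness needs only concavity of $f$ plus the strict concavity of the rational map matches what the paper's computation actually uses.
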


\begin{proof}
Let $K=\prod_{i=1}^N \left[\underline{C_i}, \overline{C_i}\right]$.
Define a continuous mapping $\mathbf{H} =(H_1,\cdots,H_N)^T: K \rightarrow [0, 1]^N$ as follows.
\[
H_i(\mathbf{w})=\frac{\alpha x_i+f(\beta\sum_{j=1}^{N}a_{ji}w_j)}{\alpha x_i + \gamma\delta y_iz_i + f(\beta\sum_{j=1}^{N}a_{ji}w_j)}, \quad\mathbf{w}=(w_1,\cdots,w_N)^T\in K.
\]
It is trivial to show that $\mathbf{C}$ is an equilibrium of the GSCS model if and only if $\mathbf{C}$ is a fixed point of $\mathbf{H}$. Furthermore, it is easy to show that $\mathbf{H}$ maps $K$ into itself. It follows from Lemma 2 that $\mathbf{H}$ has a fixed point, denoted $\mathbf{C}^*=(C_1^*,\cdots,C_N^*)^T$. This implies that $ \mathbf{C}^*$ is an equilibrium of the GSCS model, where $ \underline{C_i} \leq C_i^* \leq \overline{C_i}, 1 \leq i \leq N$. By Lemma 4, $C_i^* > 0, 1 \leq i \leq N$.

The remaining thing to do is to show that $\mathbf{C}^*$ is the unique fixed point of $\mathbf{H}$. On the contrary, suppose $\mathbf{H}$ has a fixed point other than $\mathbf{C}^*$. Denote this equilibrium by $\mathbf{C}^{**}=(C_1^{**},\cdots,C_N^{**})^T$. Let 
\[
\rho=\max_{1 \leq i \leq N}\frac{C_i^*}{C_i^{**}},\quad i_0=\arg\max_{1 \leq i \leq N} \frac{C_i^*}{C_i^{**}}.
\]

\noindent Without loss of generality, assume $\rho>1$. It follows that 
\[
\begin{split}
C_{i_0}^*&=H_{i_0}(\mathbf{C}^*)\leq H_{i_0}(\rho\mathbf{C}^{**})
=\frac{\alpha x_{i_0}+f(\rho\beta\sum_{j=1}^{N}a_{ji_0}C_j^{**})}{\gamma\delta y_{i_0}z_{i_0}+\alpha x_{i_0}+f(\rho\beta\sum_{j=1}^{N}a_{ji_0}C_j^{**})}
\\
&<\frac{\alpha x_{i_0}+f(\rho\beta\sum_{j=1}^{N}a_{ji_0}C_j^{**})}{\gamma\delta y_{i_0}z_{i_0}+\alpha x_{i_0}+f(\beta\sum_{j=1}^{N}a_{ji_0}C_j^{**})}\leq \frac{\alpha x_{i_0}+\rho f(\beta\sum_{j=1}^{N}a_{ji_0}C_j^{**})}{\gamma\delta y_{i_0}z_{i_0}+\alpha x_{i_0}+f(\beta\sum_{j=1}^{N}a_{ji_0}C_j^{**})}\\
&<\rho H_{i_0}(\mathbf{C}^{**})=\rho C_{i_0}^{**}.
\end{split}
\]

\noindent This contradicts the assumption that $C_{i_0}^*=\rho C_{i_0}^{**}$. Hence, $\mathbf{C}^*$ is the unique fixed point of $\mathbf{H}$. The proof is complete.
\end{proof}

\subsection{The stability of the equilibrium}

\begin{thm}
	The equilibrium $\mathbf{C}^*$ of the GSCS model is stable with recpect to $\Omega$. 
\end{thm}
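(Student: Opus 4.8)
The plan is to exploit the order-preserving (cooperative) structure of the GSCS model and to squeeze an arbitrary trajectory between two extremal solutions that both converge to $\mathbf{C}^*$. First I would verify that the vector field is cooperative on $\Omega$. Writing the right-hand side of the GSCS model as $g_i(\mathbf{C})$, the off-diagonal partial derivatives are
\[
\frac{\partial g_i}{\partial C_j} = \frac{\beta a_{ji}}{\delta y_i}\,[1 - C_i]\, f'\Bigl(\beta\sum_{k=1}^N a_{ki}C_k\Bigr) \geq 0, \qquad i \neq j,
\]
since $1 - C_i \geq 0$ on $\Omega$, $f'>0$ ($f$ is strictly increasing), and $a_{ji}\geq 0$. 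Hence $g$ satisfies the monotonicity hypothesis of the Chaplygin Lemma (Lemma 1), and its standard consequence is that the semiflow induced by the model is order-preserving: componentwise ordered initial data in $\Omega$ yield componentwise ordered trajectories for all $t\geq 0$.

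Next I would construct the extremal solutions. Substituting $\mathbf{C}=\mathbf{1}$ into the model gives $\dot C_i=-\gamma z_i<0$, so $g(\mathbf{1})\leq \mathbf{0}$; substituting $\mathbf{C}=\mathbf{0}$ gives $\dot C_i=\tfrac{\alpha x_i}{\delta y_i}\geq 0$ (using $f(0)=0$), so $g(\mathbf{0})\geq \mathbf{0}$. Let $\mathbf{C}^{\mathrm{u}}(t)$ be the solution with $\mathbf{C}^{\mathrm{u}}(0)=\mathbf{1}$. For any fixed $s>0$ we have $\mathbf{C}^{\mathrm{u}}(s)\leq \mathbf{1}=\mathbf{C}^{\mathrm{u}}(0)$ (invariance of $\Omega$), so comparing the two solutions $\mathbf{C}^{\mathrm{u}}(\cdot)$ and $\mathbf{C}^{\mathrm{u}}(\cdot+s)$ by order preservation gives $\mathbf{C}^{\mathrm{u}}(t+s)\leq \mathbf{C}^{\mathrm{u}}(t)$ for all $t\geq 0$; thus each component of $\mathbf{C}^{\mathrm{u}}$ is non-increasing. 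Being bounded below by $\mathbf{0}$, $\mathbf{C}^{\mathrm{u}}(t)$ converges, and its limit is necessarily an equilibrium of the model, which by the uniqueness in Theorem 1 equals $\mathbf{C}^*$. Symmetrically, the solution $\mathbf{C}^{\mathrm{l}}(t)$ with $\mathbf{C}^{\mathrm{l}}(0)=\mathbf{0}$ is componentwise non-decreasing and bounded above, so it increases to the same unique equilibrium $\mathbf{C}^*$.

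Finally I would complete the squeeze. For an arbitrary solution $\mathbf{C}(t)$ with $\mathbf{C}(0)\in\Omega$ we have $\mathbf{0}=\mathbf{C}^{\mathrm{l}}(0)\leq \mathbf{C}(0)\leq \mathbf{C}^{\mathrm{u}}(0)=\mathbf{1}$, whence order preservation yields $\mathbf{C}^{\mathrm{l}}(t)\leq \mathbf{C}(t)\leq \mathbf{C}^{\mathrm{u}}(t)$ for all $t\geq 0$. Letting $t\to\infty$ and using the two limits above, the sandwich forces $\mathbf{C}(t)\to\mathbf{C}^*$, so $\mathbf{C}^*$ attracts every trajectory originating in $\Omega$. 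Because the extremal solutions approach $\mathbf{C}^*$ monotonically (without overshoot), trajectories starting near $\mathbf{C}^*$ remain trapped between suitably shifted extremal solutions, which also delivers Lyapunov stability; together these give global asymptotic stability with respect to $\Omega$.

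I expect the main obstacle to be the rigorous justification of the monotone convergence of the two extremal solutions and the identification of their limits as equilibria: namely, establishing genuine order preservation of the semiflow from Lemma 1, and then leaning on the uniqueness of Theorem 1 to pin both limits to the single point $\mathbf{C}^*$. The cooperativity computation and the invariance of $\Omega$ are routine; the delicate point is the interplay of monotonicity, boundedness, and uniqueness that collapses the whole of $\Omega$ onto one equilibrium.
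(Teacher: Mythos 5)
Your proposal is correct, but it takes a genuinely different route from the paper. The paper proves Theorem 2 with a Lyapunov/LaSalle argument: it sets $Z(\mathbf{C}(t))=\max_i C_i(t)/C_i^*$ and $z(\mathbf{C}(t))=\min_i C_i(t)/C_i^*$, defines $V(\mathbf{C}(t))=\max\{Z(\mathbf{C}(t))-1,0\}+\max\{1-z(\mathbf{C}(t)),0\}$, and uses the concavity and monotonicity of $f$ to show the Dini-derivative estimates $D^+Z\leq 0$ when $Z\geq 1$ (strictly when $Z>1$) and $D_+z\geq 0$ when $z\leq 1$ (strictly when $z<1$), concluding via the LaSalle Invariance Principle; this argument leans on Lemma 4 (the eventual uniform positive lower bound $\min_i C_i(t)\geq c$) as a preliminary. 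You instead exploit cooperativity: your off-diagonal Jacobian computation is right, $g(\mathbf{0})\geq\mathbf{0}$ and $g(\mathbf{1})\leq\mathbf{0}$ are right, the extremal trajectories from $\mathbf{0}$ and $\mathbf{1}$ are indeed monotone with limits that must be equilibria (by continuity of the flow), uniqueness from Theorem 1 pins both limits to $\mathbf{C}^*$, and the sandwich gives global attraction --- in fact attraction that is \emph{uniform} over $\Omega$, which the paper's argument does not directly provide. Two points deserve care in a full write-up. First, Lemma 1 as stated assumes equal initial data $\mathbf{x}(0)=\mathbf{y}(0)$, so order preservation of the semiflow for merely \emph{ordered} initial data is not a literal consequence of it; you correctly flag this, and it is supplied by the standard Kamke--M\"{u}ller comparison theorem for quasimonotone (cooperative) systems, which your cooperativity check justifies. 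Second, your closing trapping argument for Lyapunov stability implicitly needs the strict componentwise inequalities $\mathbf{C}^{\mathrm{l}}(s)<\mathbf{C}^*<\mathbf{C}^{\mathrm{u}}(s)$ so that a small ball around $\mathbf{C}^*$ fits inside the order interval $\left[\mathbf{C}^{\mathrm{l}}(s),\mathbf{C}^{\mathrm{u}}(s)\right]$; this follows from backward uniqueness of solutions (or strong monotonicity via the strong connectivity of $G$), or can be bypassed entirely by combining your uniform attraction with continuous dependence on initial data over a compact time interval. On balance, the paper's proof is self-contained given its stated lemmas but requires the delicate Dini-derivative claims and Lemma 4; your proof imports standard monotone-systems machinery beyond Lemma 1 as stated, but in exchange dispenses with Lemma 4, yields monotone extremal convergence and uniform global attraction, and is arguably more transparent.
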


\begin{proof}
Let $\mathbf{C}(t) = (C_1(t), C_2(t), \cdots, C_N(t))^T$ be a solution
to the GSCS model. By Lemma 4, there are $t_0>0$ and $c>0$ such that 
\[
  \min_{1 \leq i \leq N}C_i(t) \geq c, \quad t \geq t_0.
\]
Let 
	\[
	Z(\mathbf{C}(t))=\max_{1 \leq i \leq N} \frac{C_i(t)}{C_i^*}, \quad
	z(\mathbf{C}(t))=\min_{1 \leq i \leq N} \frac{C_i(t)}{C_i^*}, \quad t \geq t_0.
	\]
	Define a function $V$ as
	\[
	V(\mathbf{C}(t))=\max\{Z(\mathbf{C}(t))-1,0\}+\max\{1-z(\mathbf{C}(t)),0\}.
	\]
	
	\noindent It is easily verified that $V$ is positive definite with respect to $\mathbf{C}^*$, i.e., (a) $V(\mathbf{C}(t))\geq 0$, and (b) $V(\mathbf{C}(t))=0$ if and only if $\mathbf{C}(t)=\mathbf{C}^{*}$. Next , let us show that $D^+V(\mathbf{C}(t)) \leq 0, t \geq t_0$, where $D^+$ stands for the upper-right Dini derivative of $V$ along $\mathbf{C}(t)$. To this end, we need to show the following two claims.
	
	\emph{Claim 1:} $D^+Z(\mathbf{C}(t))\leq0$ if $Z(\mathbf{C}(t))\geq1$.
	Moreover, $D^+Z(\mathbf{C}(t))<0$ if $Z(\mathbf{C}(t))>1$.
	
	\emph{Claim 2:} $D_+z(\mathbf{C}(t))\geq0$ if $z(\mathbf{C}(t))\leq1$.
	Moreover, $D_+z(\mathbf{C}(t))> 0$ if  $z(\mathbf{C}(t))<1$. Here $D_+$ stands for the lower-right Dini derivative.
	
	\emph{Proof of Claim 1:} Choose $k_0$ such that  
	\[
	Z(\mathbf{C}(t))=\frac{C_{k_0}(t)}{C_{k_0}^*}, \quad D^+Z(\mathbf{C}(t))=\frac{C_{k_0}^{'}(t)}{C_{k_0}^*}.
	\] 
	Then, 
	\[
	\begin{split}
	\frac{C_{k_0}^{*}}{C_{k_0}(t)}C_{k_0}^{'}(t)
	&=\frac{\alpha x_{k_0}}{\delta y_{k_0}}\left(1-C_{k_0}(t)\right)\frac{C_{k_0}^{*}}{C_{k_0}(t)}+ \frac{1}{\delta y_{k_0}}\left(1-C_{k_0}(t)\right)\frac{C_{k_0}^{*}}{C_{k_0}(t)}f\left(\beta\sum_{j=1}^{N}a_{jk_0}C_j(t)\right)-\gamma z_{k_0} C_{k_0}^{*}\\
	&\leq
	\frac{\alpha x_{k_0}}{\delta y_{k_0}}\left(1-C_{k_0}^{*}\right)+ \frac{1}{\delta y_{k_0}}\left(1-C_{k_0}^{*}\right)\frac{C_{k_0}^{*}}{C_{k_0}(t)}f\left(\beta\sum_{j=1}^{N}a_{jk_0}C_j(t)\right)-\gamma z_{k_0} C_{k_0}^{*} \\
	&\leq 
	\frac{\alpha x_{k_0}}{\delta y_{k_0}}\left(1-C_{k_0}^{*}\right)+ \frac{1}{\delta y_{k_0}}\left(1-C_{k_0}^{*}\right)f\left(\beta\frac{C_{k_0}^{*}}{C_{k_0}(t)}\sum_{j=1}^{N}a_{jk_0}C_j(t)\right)-\gamma z_{k_0} C_{k_0}^{*}\\
	& \leq \frac{\alpha x_{k_0}}{\delta y_{k_0}}\left(1-C_{k_0}^{*}\right) + \frac{1}{\delta y_{k_0}}\left(1-C_{k_0}^{*}\right)f\left(\beta\sum_{j=1}^{N}a_{jk_0}C_j^{*}\right)-\gamma z_{k_0} C_{k_0}^{*}=0,
	\end{split}
	\]
	
	\noindent where the second inequality follows from the concavity of $f$, and the third inequality follows from the monotonicity of $f$. This implies $D^+Z(\mathbf{C}(t))\leq0$. As the first inequality is strict if $Z(\mathbf{C}(t))>1$, we get that $D^+Z(\mathbf{C}(t))<0$ if $Z(\mathbf{C}(t))>1$. Claim 1 is proven.
	
	The argument for Claim 2 is analogous to that for Claim 1 and hence is omitted. Next, consider three possibilities.
	
	Case 1: $Z(\mathbf{C}(t)) < 1$. Then $z(\mathbf{C}(t)) < 1$, $V(\mathbf{C}(t)) = 1 - z(\mathbf{C}(t))$. Hence, 
	$D^+V(\mathbf{C}(t)) = -D_+z(\mathbf{C}(t)) < 0$.
	
	Case 2: $z(\mathbf{C}(t)) > 1$. Then $Z(\mathbf{C}(t)) > 1$, $V(\mathbf{C}(t)) = Z(\mathbf{C}(t)) - 1$. Hence, 
	$D^+V(\mathbf{C}(t)) = D^+Z(\mathbf{C}(t)) < 0$.
	
	Case 3: $Z(\mathbf{C}(t)) \geq 1$, $z(\mathbf{C}(t)) \leq 1$. Then $V(\mathbf{C}(t)) = Z(\mathbf{C}(t)) - z(\mathbf{C}(t))$, 
	$D^+V(\mathbf{C}(t)) = D^+Z(\mathbf{C}(t)) - D_+z(\mathbf{C}(t)) \leq 0$.
	Moreover, the equality holds if and only if $\mathbf{C}(t) = \mathbf{C}^*$.
	
	The theorem follows from the LaSalle Invariance Principle.
\end{proof}

Let $C(t)$ denote the expected fraction of compromised nodes in the cyber network at time $t$, $C^*$ the expected fraction of compromised nodes in the cyber network when the expected network state is $\mathbf{C}^*$.
\begin{equation}
  C(t) = \frac{1}{N}\sum_{i=1}^NC_i(t), \quad C^* = \frac{1}{N}\sum_{i=1}^NC_i^*.
\end{equation}
The following result is a corollary of Theorem 2.

\begin{cor} 
	Consider the GSCS model (1). Then $C(t) \rightarrow C^*$ as $t \rightarrow \infty$.
\end{cor}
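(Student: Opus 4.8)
The plan is to deduce this scalar convergence directly from the global attractivity of $\mathbf{C}^*$ that was established in Theorem 2. Although Theorem 2 is phrased in terms of stability, its proof constructs a function $V$ that is positive definite with respect to $\mathbf{C}^*$ and satisfies $D^+V(\mathbf{C}(t)) \leq 0$ for $t \geq t_0$, with equality holding only at $\mathbf{C}^*$; hence the LaSalle Invariance Principle yields that every solution $\mathbf{C}(t)$ of the GSCS model converges to $\mathbf{C}^*$. I would first record this as the key ingredient, namely that $\lim_{t\rightarrow\infty} C_i(t) = C_i^*$ for each $i = 1, \dots, N$.

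Second, I would observe that $C(t)$ is a fixed finite linear combination of the coordinates $C_i(t)$, namely $C(t) = \frac{1}{N}\sum_{i=1}^{N} C_i(t)$ by the definition in (2). Since a finite sum of convergent sequences converges to the sum of the limits, the limit may be interchanged with the (finite) summation to give
\[
\lim_{t\rightarrow\infty} C(t) = \frac{1}{N}\sum_{i=1}^{N} \lim_{t\rightarrow\infty} C_i(t) = \frac{1}{N}\sum_{i=1}^{N} C_i^* = C^*,
\]
which is exactly the assertion.

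The proof has essentially no obstacle of its own: all of the analytical difficulty has already been absorbed into Theorem 2 (together with Lemma 4, which supplies the uniform positive lower bound $\min_i C_i(t) \geq c$ needed for $V$ to be well defined on the relevant region). The one point that genuinely deserves care is to confirm that what Theorem 2 delivers is true global attractivity of $\mathbf{C}^*$ on $\Omega$, and not merely Lyapunov stability; this is precisely what the strict inequalities $D^+Z(\mathbf{C}(t)) < 0$ (when $Z > 1$) and $D_+z(\mathbf{C}(t)) > 0$ (when $z < 1$) in Claims 1 and 2 guarantee, so the invariant set on which $D^+V = 0$ reduces to the singleton $\{\mathbf{C}^*\}$ and the appeal to LaSalle is legitimate. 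Once that is granted, the corollary is immediate.
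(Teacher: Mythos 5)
Your proof is correct and coincides with the paper's intended argument: the paper offers no separate proof, deriving the corollary immediately from Theorem~2, whose LaSalle-based proof in fact establishes global attractivity of $\mathbf{C}^*$, after which $C(t)\rightarrow C^*$ follows by averaging the coordinate-wise limits as you do. Your observation that the word ``stable'' in Theorem~2 must be read as delivering attractivity---secured by the strict inequalities $D^+Z(\mathbf{C}(t))<0$ and $D_+z(\mathbf{C}(t))>0$ in Claims~1 and~2---correctly identifies the only point requiring care.
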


Obviously, $C^*$ is dependent upon the four technical levels, the interconnection network, and the attack and defense schemes. We refer to the four technical levels and the interconnection network as \emph{parameters}, because they are almost fixed. We refer to the attack and defense schemes as \emph{independent variables}, because the attack scheme is flexibly choosable by the attacker, and the defense scheme is flexibly choosable by the defender. Formally, 
\[
C^* = C^*(\mathbf{x}, \mathbf{y}, \mathbf{z}; \alpha, \beta, \delta, \gamma, G).
\]

\subsection{The limit security of cyber networks}

In practice, $C^*$ can be estimated simply through sampling and averaging. This method for estimating $C^*$ is valuable, because it does not require the defender to know the attack and infection tecnical levels as well as the attack scheme. Therefore, $C^*$ can be used to evaluate the security of the cyber network. Below let us define a security metric of cyber networks under APTs.

\begin{de}
Given the four technical levels, the interconnected network, the attack scheme and the defense scheme, the \emph{limit security} of the cyber network is defined as
\begin{equation}
  S_L = 1 - C^*, 
\end{equation}
\end{de}

This security metric of cyber networks is rational, because the higher the limit security, the securer the cyber network would be. The limit security is dependent upon the four technical levels, the interconnection network, the attack scheme and the defense scheme. Formally,
\[
C^* = C^*(\mathbf{x}, \mathbf{y}, \mathbf{z}; \alpha, \beta, \delta, \gamma, G).
\]

\section{The influence of some factors on the limit security of a cyber network}

In this section, we theoretically investigate the influence of some factors, including the technical levels, the attack and defense resources per unit time per node, and the addition of new edges to the interconnection network, on the limit security of a cyber network. For this purpose, define an irreducible Metzler matrix as follows.
\[
\mathbf{M} = diag \left(\beta(1-C_i^*)f^{'}\left(\beta \sum_{j=1}^{N}a_{ji}C_j^*\right)\right)\mathbf{A}^T - diag\left(\alpha x_i + \gamma \delta y_i z_i + f\left(\beta \sum_{j=1}^N a_{ji}C_j^*\right)\right).
\]

\begin{lm}
	$\mathbf{M}$ is invertible, and $\mathbf{M}^{-1}$ is negative.
\end{lm}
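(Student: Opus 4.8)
The plan is to reduce both assertions to the single spectral statement $s(\mathbf{M}) < 0$ and then invoke standard M-matrix theory. Since $\mathbf{M}$ is an irreducible Metzler matrix, $-\mathbf{M}$ is a Z-matrix (nonpositive off-diagonal entries); once $s(\mathbf{M})<0$ is established, every eigenvalue of $-\mathbf{M}$ has positive real part, so $-\mathbf{M}$ is a nonsingular irreducible M-matrix. This at once makes $\mathbf{M}$ invertible and forces $(-\mathbf{M})^{-1}$ to be entrywise positive, the irreducibility upgrading nonnegativity to strict positivity. Consequently $\mathbf{M}^{-1} = -(-\mathbf{M})^{-1}$ is entrywise negative, which is exactly the claim; this passage from $s(\mathbf{M})<0$ to the positivity of $(-\mathbf{M})^{-1}$ is standard and can be cited from Ref.~\cite{Varga2000}.

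The crux is therefore to prove $s(\mathbf{M})<0$, for which I would apply Lemma 3(a) with the test vector $\mathbf{x}=\mathbf{C}^*$ and $\lambda=0$: it suffices to show that $\mathbf{C}^*$ is positive (true by Theorem 1) and that $\mathbf{M}\mathbf{C}^* < \mathbf{0}$. Writing $\theta_i = \beta\sum_{j=1}^N a_{ji}C_j^*$, the $i$-th entry is
\[
(\mathbf{M}\mathbf{C}^*)_i = (1-C_i^*)f'(\theta_i)\theta_i - \big(\alpha x_i + \gamma\delta y_i z_i + f(\theta_i)\big)C_i^*.
\]
The decisive simplification comes from the equilibrium identity: multiplying the defining equation of $\mathbf{C}^*$ by $\delta y_i$ and rearranging gives $C_i^*\big(\alpha x_i + \gamma\delta y_i z_i + f(\theta_i)\big) = \alpha x_i + f(\theta_i)$. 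Substituting this collapses the second term and yields the clean expression $(\mathbf{M}\mathbf{C}^*)_i = (1-C_i^*)f'(\theta_i)\theta_i - \alpha x_i - f(\theta_i)$.

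It then remains to bound what survives. Two facts do the job: concavity of $f$ together with $f(0)=0$ gives the supporting-line inequality $f(0)\le f(\theta_i)+f'(\theta_i)(0-\theta_i)$, i.e. $f'(\theta_i)\theta_i \le f(\theta_i)$; and strong connectedness of $G$ (so $\sum_j a_{ji}\ge 1$) together with $C_j^*>0$ forces $\theta_i>0$, hence $f(\theta_i)>0$. Combining these with $0 < C_i^* \le 1$ from Theorem 1 gives $(1-C_i^*)f'(\theta_i)\theta_i \le (1-C_i^*)f(\theta_i) < f(\theta_i)$, so $(\mathbf{M}\mathbf{C}^*)_i < -\alpha x_i \le 0$ for every $i$. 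This delivers $\mathbf{M}\mathbf{C}^* < \mathbf{0}$ strictly, completing the application of Lemma 3(a). I expect the main obstacle to be pure bookkeeping rather than any deep difficulty: deriving the equilibrium identity in the stated form and, in particular, checking that the \emph{strict} inequality persists even for indices $i$ with $x_i=0$ — the strictness there being supplied by the factor $1-C_i^* < 1$ multiplying the strictly positive quantity $f(\theta_i)$.
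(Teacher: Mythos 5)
Your proof is correct and takes essentially the same route as the paper's: both apply Lemma 3(a) with the positive test vector $\mathbf{C}^*$ and $\lambda=0$, combining the concavity bound $f'(\theta_i)\theta_i \leq f(\theta_i)$ with the equilibrium identity to obtain $\mathbf{M}\mathbf{C}^* \leq -col\left(\alpha x_i + f(\theta_i)C_i^*\right) < \mathbf{0}$, hence $s(\mathbf{M})<0$, and then conclude that $\mathbf{M}^{-1}$ is negative from the irreducible Metzler--Hurwitz structure. Your only departures are cosmetic: you substitute the equilibrium identity before bounding rather than after, and you make explicit the strictness at indices with $x_i=0$ (via $C_i^* f(\theta_i)>0$), a point the paper leaves implicit in its final display.
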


\begin{proof} 
	As $f$ is concave, we have 
	\[
	f^{'}\left(\beta \sum_{j=1}^{N}a_{ji}C_j^*\right) \leq \frac{f\left(\beta \sum_{j=1}^{N}a_{ji}C_j^*\right)}{\beta\sum_{j=1}^Na_{ji}c_j^*}.
	\]
	So,
	\[
	\begin{split}
	\mathbf{M}\mathbf{C}^* &= diag \left(\beta(1-C_i^*)f^{'}\left(\beta \sum_{j=1}^{N}a_{ji}C_j^*\right)\right)\mathbf{A}^T\mathbf{C}^* - diag\left(\alpha x_i + \gamma \delta y_i z_i + f\left(\beta \sum_{j=1}^N a_{ji}C_j^*\right)\right)\mathbf{C}^*\\
	& \leq diag \left(\beta(1-C_i^*)\frac{f\left(\beta \sum_{j=1}^{N}a_{ji}C_j^*\right)}{\beta\sum_{j=1}^Na_{ji}c_j^*}\right)\mathbf{A}^T\mathbf{C}^* - diag\left(\alpha x_i + \gamma \delta y_i z_i + f\left(\beta \sum_{j=1}^N a_{ji}C_j^*\right)\right)\mathbf{C}^*\\
	&= - col\left(\alpha x_i + f\left(\beta \sum_{j=1}^Na_{ji}C_j^*\right)C_i^*\right) < \mathbf{0}.
	\end{split}
	\] 
	It follows from Lemma 3(a) that $s(\mathbf{M})<0$. This implies that $\mathbf{M}$ is invertible. As $\mathbf{M}$ is Metzler, irreducible and Hurwitz, $\mathbf{M}^{-1}$ is negative \cite{Narendra2010}.
\end{proof}

\subsection{The influence of the four technical levels}

\begin{thm} 
For the GSCS model (1), we have $\frac{\partial \mathbf{C}^*}{\partial \alpha} > \mathbf{0}$, $\frac{\partial \mathbf{C}^*}{\partial \beta} > \mathbf{0}$, $\frac{\partial \mathbf{C}^*}{\partial \gamma} < \mathbf{0}$, $\frac{\partial \mathbf{C}^*}{\partial \delta} < \mathbf{0}$.
\end{thm}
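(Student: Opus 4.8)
The plan is to regard the equilibrium $\mathbf{C}^*$ as an implicitly defined function of the four technical levels and to differentiate the equilibrium equation. Writing the right-hand side of the GSCS model (1) at equilibrium and clearing the factor $\delta y_i$, the components of $\mathbf{C}^*$ satisfy
\[
F_i(\mathbf{C}^*;\alpha,\beta,\gamma,\delta) := \alpha x_i - \left(\alpha x_i + \gamma\delta y_i z_i\right)C_i^* + \left(1 - C_i^*\right)f\!\left(\beta\sum_{j=1}^N a_{ji}C_j^*\right) = 0, \quad i = 1,\dots,N.
\]
By Theorem 1 this system has a unique solution with $0 < C_i^* < 1$, so the implicit function theorem applies once the Jacobian in $\mathbf{C}^*$ is shown to be nonsingular.

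First I would compute the Jacobian $\partial \mathbf{F}/\partial \mathbf{C}^*$. A direct differentiation shows that its off-diagonal $(i,k)$ entry is $\beta(1-C_i^*)f'(\beta\sum_j a_{ji}C_j^*)a_{ki}$ and that it carries the extra diagonal term $-\left(\alpha x_i + \gamma\delta y_i z_i + f(\beta\sum_j a_{ji}C_j^*)\right)$; collecting these, the Jacobian is precisely the matrix $\mathbf{M}$ defined just before Lemma 5. Lemma 5 guarantees that $\mathbf{M}$ is invertible and that $\mathbf{M}^{-1}$ is entrywise negative, hence $-\mathbf{M}^{-1}$ is entrywise positive. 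The implicit function theorem then yields, for each parameter $\theta\in\{\alpha,\beta,\gamma,\delta\}$,
\[
\frac{\partial \mathbf{C}^*}{\partial \theta} = -\mathbf{M}^{-1}\frac{\partial \mathbf{F}}{\partial \theta}.
\]
Because every entry of $-\mathbf{M}^{-1}$ is strictly positive, the sign of $\partial\mathbf{C}^*/\partial\theta$ is dictated by the sign of the vector $\partial\mathbf{F}/\partial\theta$: if that vector is nonnegative and not identically zero, the product is strictly positive, and symmetrically for the negative case.

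It then remains to evaluate the four partials of $\mathbf{F}$ with respect to the parameters, holding $\mathbf{C}^*$ fixed. I expect $\partial F_i/\partial\alpha = x_i(1-C_i^*) \geq 0$, $\partial F_i/\partial\beta = (1-C_i^*)f'(\beta\sum_j a_{ji}C_j^*)\sum_j a_{ji}C_j^* \geq 0$, $\partial F_i/\partial\gamma = -\delta y_i z_i C_i^* < 0$, and $\partial F_i/\partial\delta = -\gamma y_i z_i C_i^* < 0$. Using $0<C_i^*<1$, $f'>0$, $\sum_i x_i>0$, and the strong connectivity of $G$ (so that $\sum_j a_{ji}C_j^*>0$ for every $i$), the first two vectors are nonnegative and nonzero while the last two are strictly negative, giving the claimed signs after multiplication by $-\mathbf{M}^{-1}$. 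The main obstacle is the identification of $\partial\mathbf{F}/\partial\mathbf{C}^*$ with $\mathbf{M}$ — the routine but indispensable bookkeeping that makes Lemma 5 applicable — together with the care needed to promote the merely nonnegative vectors $\partial\mathbf{F}/\partial\alpha$ and $\partial\mathbf{F}/\partial\beta$ to strictly positive derivatives, which relies essentially on the strict positivity of every entry of $-\mathbf{M}^{-1}$ rather than on entrywise positivity of $\partial\mathbf{F}/\partial\theta$ itself.
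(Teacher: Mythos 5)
Your proposal is correct and follows essentially the same route as the paper: implicitly differentiating the equilibrium identity $F_i(\mathbf{C}^*;\cdot)=0$, recognizing the Jacobian as the matrix $\mathbf{M}$, and invoking Lemma 5 together with the strong connectivity of $G$ to fix the signs. The paper only carries out the $\beta$ case explicitly, and your extra care in the $\alpha$ case --- promoting the merely nonnegative, nonzero vector $\left(x_i(1-C_i^*)\right)_i$ to a strictly positive derivative via the entrywise strict negativity of $\mathbf{M}^{-1}$ --- is exactly the detail the paper's ``the arguments for the remaining claims are similar'' leaves implicit.
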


\begin{proof}
	We prove only $\frac{\partial \mathbf{C}^*}{\partial \beta} > \mathbf{0}$, because the arguments for the remaining claims are similar. As $\mathbf{C}^*$ is the equilibrium for the GSCS model, we have
	\[
	F_i(\beta; C_1^*, C_2^*, \cdots, C_N^*) = \alpha x_i - (\alpha x_i + \gamma\delta y_i z_i) C_i^* + (1-C_i^*)f\left(\beta\sum_{j=1}^Na_{ji}C_j^*\right) = 0, \quad 1\leq i\leq N.
	\]
	Differentiating on both sides with respect to $\beta$, we get
	\[
	\frac{\partial F_i}{\partial \beta} + \frac{\partial F_i}{\partial C_1^*} \cdot \frac{\partial C_1^*}{\partial \beta} + \cdots + \frac{\partial F_i}{\partial C_N^*} \cdot \frac{\partial C_N^*}{\partial \beta} = 0,\quad 1\leq i\leq N.
	\]
	Calculations show that
	\[
	\mathbf{M}\frac{\partial \mathbf{C}^*}{\partial \beta} = - diag \left((1-C_i^*)f^{'}(\beta \sum_{j=1}^{N}a_{ji}C_j^*)\right)\mathbf{A}^T\mathbf{C}^*.
	\]
	By Lemma 5, we have
	\[
	\frac{\partial \mathbf{C}^*}{\partial \beta} = - \mathbf{M}^{-1} \cdot diag \left((1-C_i^*)f^{'}(\beta \sum_{j=1}^{N}a_{ji}C_j^*)\right)\mathbf{A}^T\mathbf{C}^*.
	\]
	where $\mathbf{M}^{-1}$ is negative. As $G$ is strongly connected, $\mathbf{A}^T\mathbf{C}^*$ is positive. Hence, $\frac{\partial \mathbf{C}^*}{\partial \beta} > \mathbf{0}$.
\end{proof}

As a corollary of this theorem, the influence of the four technical levels on the limit security of a cyber network is shown as follows.

\begin{cor} 
	For the GSCS model (1), we have $\frac{\partial S_L}{\partial \alpha} < 0$, $\frac{\partial S_L}{\partial \beta} < 0$, $\frac{\partial S_L}{\partial \gamma} > 0$, $\frac{\partial S_L}{\partial \delta} > 0$.
\end{cor}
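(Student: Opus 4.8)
The plan is to reduce the scalar claim about $S_L$ directly to the vector claim of Theorem 4 via the defining relation $S_L = 1 - C^* = 1 - \frac{1}{N}\sum_{i=1}^{N}C_i^*$. First I would observe that $S_L$ depends on a technical level $p \in \{\alpha, \beta, \gamma, \delta\}$ only through the equilibrium $\mathbf{C}^*$, which is differentiable in $p$ by the implicit differentiation already carried out in the proof of Theorem 4 (where $\mathbf{M}$ is shown to be invertible, so the implicit function theorem applies). Since the average is a fixed finite linear combination, differentiation commutes with the sum and
\[
\frac{\partial S_L}{\partial p} = -\frac{1}{N}\sum_{i=1}^{N}\frac{\partial C_i^*}{\partial p}, \qquad p \in \{\alpha, \beta, \gamma, \delta\}.
\]

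The second step is simply to read off the componentwise signs from Theorem 4. Because $\frac{\partial \mathbf{C}^*}{\partial \alpha} > \mathbf{0}$ means $\frac{\partial C_i^*}{\partial \alpha} > 0$ for every $i$, the average on the right-hand side is strictly positive, and the leading minus sign yields $\frac{\partial S_L}{\partial \alpha} < 0$; the identical argument applied to $\beta$ gives $\frac{\partial S_L}{\partial \beta} < 0$. For $\gamma$ and $\delta$, Theorem 4 gives $\frac{\partial C_i^*}{\partial \gamma} < 0$ and $\frac{\partial C_i^*}{\partial \delta} < 0$ componentwise, so the corresponding averages are strictly negative and, after the sign flip, $\frac{\partial S_L}{\partial \gamma} > 0$ and $\frac{\partial S_L}{\partial \delta} > 0$.

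There is essentially no obstacle here: the only point worth stating explicitly is that a strict componentwise vector inequality forces the corresponding strict inequality for the arithmetic mean, which is immediate because all $N$ summands carry the same strict sign. Thus the corollary is a one-line consequence of Theorem 4, and the substantive work already resides in that theorem — and, behind it, in Lemma 5, which guarantees that $\mathbf{M}^{-1}$ is negative and hence fixes the signs of the sensitivities $\frac{\partial \mathbf{C}^*}{\partial p}$.
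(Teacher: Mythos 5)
Your proof is correct and is essentially the paper's own (implicit) argument: the corollary is stated there without proof precisely because it follows from the componentwise sign result by writing $S_L = 1 - \frac{1}{N}\sum_{i=1}^{N} C_i^*$, differentiating through the finite average, and flipping the sign, with differentiability of $\mathbf{C}^*$ guaranteed by the implicit function theorem since Lemma 5 shows $\mathbf{M}$ is invertible. One bookkeeping correction: in the paper's numbering the vector-level result on the technical levels $\alpha, \beta, \gamma, \delta$ is Theorem 3 (Theorem 4 concerns the per-node resources $x_i$, $y_i$, $z_i$), so your citations should point to Theorem 3 rather than Theorem 4.
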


This corollary manifests that the limit security of a cyber network goes up with the prevention and recovery technical levels, and comes down with the attack and infection technical levels. These results accord with our intuition. Hence, the defender must try his best to enhance the prevention and recovery technical levels.

\subsection{The influence of the attack and defense resources per unit time per node}

\begin{thm}
For the GSCS model (1), we have $\frac{\partial \mathbf{C}^*}{\partial x_i} > \mathbf{0}$, $\frac{\partial \mathbf{C}^*}{\partial y_i} < \mathbf{0}$, $\frac{\partial \mathbf{C}^*}{\partial z_i} < \mathbf{0}$, $1\leq i\leq N$.
\end{thm}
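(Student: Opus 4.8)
The plan is to mirror the proof of Theorem 3: treat the equilibrium identities as an implicit system and differentiate with respect to each of the scalar variables $x_k$, $y_k$, $z_k$ in turn. Recall that the equilibrium $\mathbf{C}^*$ satisfies
\[
F_i(C_1^*, \cdots, C_N^*) = \alpha x_i - (\alpha x_i + \gamma\delta y_i z_i) C_i^* + (1-C_i^*)f\left(\beta\sum_{j=1}^N a_{ji}C_j^*\right) = 0, \quad 1 \leq i \leq N,
\]
and that the Jacobian $\left(\partial F_i/\partial C_j^*\right)$ is precisely the matrix $\mathbf{M}$ introduced just before Lemma 5. The crucial structural observation, which distinguishes this argument from that of Theorem 3, is that each of $x_k$, $y_k$, $z_k$ appears in one equation only, namely $F_k$. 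Consequently the explicit derivative of the vector field with respect to any such variable is a scalar multiple of the standard basis vector $\mathbf{e}_k$.

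First I would record these single nonzero entries. Differentiating $F_k$ directly gives
\[
\frac{\partial F_k}{\partial x_k} = \alpha(1-C_k^*), \quad \frac{\partial F_k}{\partial y_k} = -\gamma\delta z_k C_k^*, \quad \frac{\partial F_k}{\partial z_k} = -\gamma\delta y_k C_k^*,
\]
with all remaining explicit partials vanishing. By Theorem 1 we have $0 < C_k^* < 1$ (positivity from the lower bound, and $C_k^* \leq \overline{C_k} < 1$ because $\gamma\delta y_k z_k > 0$), so the first quantity is strictly positive and the other two are strictly negative.

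Next, for any variable $w \in \{x_k, y_k, z_k\}$, implicit differentiation of the whole system yields $\mathbf{M}\,\partial\mathbf{C}^*/\partial w = -\,\partial\mathbf{F}/\partial w$, and since the right-hand side is a scalar multiple of $\mathbf{e}_k$ I would obtain
\[
\frac{\partial \mathbf{C}^*}{\partial w} = -\left(\frac{\partial F_k}{\partial w}\right)\mathbf{M}^{-1}\mathbf{e}_k,
\]
so that $\partial\mathbf{C}^*/\partial w$ is a scalar multiple of the $k$-th column of $-\mathbf{M}^{-1}$. Invoking Lemma 5, every entry of $\mathbf{M}^{-1}$ is negative, hence every entry of that column of $-\mathbf{M}^{-1}$ is positive. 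Combining this with the signs of the scalars computed above immediately gives $\partial\mathbf{C}^*/\partial x_k > \mathbf{0}$ and $\partial\mathbf{C}^*/\partial y_k, \partial\mathbf{C}^*/\partial z_k < \mathbf{0}$.

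The step I expect to require the most care is the sign bookkeeping at the very end. In Theorem 3 the forcing vector was entrywise one-signed and dense, so multiplying by the negative matrix $\mathbf{M}^{-1}$ produced a definite sign for the whole vector more or less automatically. Here the forcing is supported on the single index $k$, so the conclusion about the off-diagonal sensitivities $\partial C_i^*/\partial x_k$ with $i \neq k$ rests entirely on the strict, \emph{entrywise} negativity of $\mathbf{M}^{-1}$; it would not suffice to know only that $\mathbf{M}$ is Hurwitz or that the diagonal entries of its inverse are negative. Verifying that Lemma 5 indeed supplies the full off-diagonal negativity, which it does through the irreducible Metzler and Hurwitz structure, is therefore the linchpin of the argument.
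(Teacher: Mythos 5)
Your proposal is correct and is exactly the argument the paper intends: the paper omits the proof of this theorem, stating only that it is ``analogous to that for the previous theorem,'' and your implicit differentiation of the equilibrium system, reduction of the forcing term to a scalar multiple of $\mathbf{e}_k$, and invocation of the entrywise negativity of $\mathbf{M}^{-1}$ from Lemma 5 carries out precisely that analogy. Your closing observation --- that the single-index support of the forcing makes the full (not merely diagonal) negativity of $\mathbf{M}^{-1}$ essential --- is a correct and worthwhile clarification of a point the paper leaves implicit.
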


The argument for the theorem is analogous to that for the previous theorem. As a corollary of this theorem, the influence of the attack and defense resources per unit time per node on the limit security of a cyber network is shown as follows.

\begin{cor} 
	For the GSCS model (1), we have $\frac{\partial S_L}{\partial x_i} < 0$, $\frac{\partial S_L}{\partial y_i} > 0$, $\frac{\partial S_L}{\partial z_i} > 0$, $1\leq i\leq N$.
\end{cor}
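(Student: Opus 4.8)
The plan is to reuse the machinery from the proof of Theorem~3 almost verbatim, replacing the role of the technical level $\beta$ with each of the per-node resources $x_i$, $y_i$, $z_i$ in turn. Fix an index $i$, and write the $N$ equilibrium identities as
\[
F_k = \alpha x_k - (\alpha x_k + \gamma\delta y_k z_k)C_k^* + (1-C_k^*)f\left(\beta\sum_{j=1}^N a_{jk}C_j^*\right) = 0, \quad 1 \le k \le N.
\]
The first thing I would check is that the Jacobian $\left(\partial F_k/\partial C_l^*\right)_{k,l}$ is exactly the matrix $\mathbf{M}$ of Lemma~5: the diagonal pieces $-(\alpha x_k + \gamma\delta y_k z_k) - f(\cdot)$ assemble the second $diag$-term of $\mathbf{M}$, while the chain-rule piece $\beta(1-C_k^*)f'(\cdot)a_{lk}$ assembles $diag\!\left(\beta(1-C_i^*)f'(\cdot)\right)\mathbf{A}^T$. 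This identification is the linchpin, since Lemma~5 already furnishes that $\mathbf{M}$ is invertible with $\mathbf{M}^{-1}$ negative.

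Next I would differentiate the system with respect to the chosen parameter and apply the Implicit Function Theorem, just as in Theorem~3. The structural feature that distinguishes this case is that each of $x_i$, $y_i$, $z_i$ appears explicitly in only the single equation $F_i$, so the vector of explicit partials collapses to a multiple of the standard basis vector $\mathbf{e}_i$:
\[
\frac{\partial \mathbf{F}}{\partial x_i} = \alpha(1-C_i^*)\,\mathbf{e}_i, \quad \frac{\partial \mathbf{F}}{\partial y_i} = -\gamma\delta z_i C_i^*\,\mathbf{e}_i, \quad \frac{\partial \mathbf{F}}{\partial z_i} = -\gamma\delta y_i C_i^*\,\mathbf{e}_i.
\]
Solving $\mathbf{M}\,\partial\mathbf{C}^*/\partial x_i = -\,\partial\mathbf{F}/\partial x_i$ then gives $\partial\mathbf{C}^*/\partial x_i = -\alpha(1-C_i^*)\,\mathbf{M}^{-1}\mathbf{e}_i$, and analogously $\partial\mathbf{C}^*/\partial y_i = \gamma\delta z_i C_i^*\,\mathbf{M}^{-1}\mathbf{e}_i$ and $\partial\mathbf{C}^*/\partial z_i = \gamma\delta y_i C_i^*\,\mathbf{M}^{-1}\mathbf{e}_i$.

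Finally I would read off the signs. By Theorem~1 we have $0 < C_i^* \le \overline{C_i} < 1$ (the strict upper bound because $\gamma\delta y_i z_i > 0$), so the scalar prefactors $\alpha(1-C_i^*)$, $\gamma\delta z_i C_i^*$ and $\gamma\delta y_i C_i^*$ are all strictly positive. Since $\mathbf{M}^{-1}\mathbf{e}_i$ is simply the $i$-th column of $\mathbf{M}^{-1}$, and Lemma~5 asserts that every entry of $\mathbf{M}^{-1}$ is negative, this column is a strictly negative vector; hence $\partial\mathbf{C}^*/\partial x_i > \mathbf{0}$ while $\partial\mathbf{C}^*/\partial y_i < \mathbf{0}$ and $\partial\mathbf{C}^*/\partial z_i < \mathbf{0}$. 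I expect the main obstacle to be conceptual rather than computational: in Theorem~3 the full-component positivity of $\partial\mathbf{C}^*/\partial\beta$ came from $\mathbf{A}^T\mathbf{C}^*$ being a positive vector, whereas here the forcing term touches only one coordinate, so the claim that the \emph{entire} vector $\partial\mathbf{C}^*/\partial x_i$ (not merely its $i$-th entry) is strictly signed rests wholly on the full negativity of $\mathbf{M}^{-1}$ — the genuinely global, connectivity-driven content of Lemma~5. Verifying the explicit-partial computations and confirming $C_i^* < 1$ are the only routine remaining steps.
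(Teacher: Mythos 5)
Your proposal is correct and takes essentially the same route as the paper: the paper derives this corollary from Theorem 4, whose proof is declared analogous to that of Theorem 3, and you carry out exactly that implicit-function-theorem argument, correctly identifying the Jacobian with $\mathbf{M}$, computing the single-coordinate forcing terms, and using the entrywise negativity of $\mathbf{M}^{-1}$ from Lemma 5 (together with $0 < C_i^* < 1$) to sign the full vectors. The only step you leave unstated is the trivial final passage from the signs of $\frac{\partial \mathbf{C}^*}{\partial x_i}$, $\frac{\partial \mathbf{C}^*}{\partial y_i}$, $\frac{\partial \mathbf{C}^*}{\partial z_i}$ to the stated claim via $S_L = 1 - \frac{1}{N}\sum_{k=1}^N C_k^*$, which simply flips each sign.
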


This corollary demonstrates that the limit security of a cyber network rises with the resource per unit time used for preventing or recovering a node, and falls with the resource per unit time used for attacking a node. Again, these results are consistent with our intuition. As a consequence, the defender is suggested to configure more defense resource.

\subsection{The influence of the addition of new edges to the interconnection network}

\begin{thm}
For the GSCS model, we have $\frac{\partial \mathbf{C}^*}{\partial a_{ij}} > \mathbf{0}$, $1\leq i, j\leq N$, $i \neq j$.
\end{thm}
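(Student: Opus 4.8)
The plan is to mimic the implicit-differentiation argument already used for Theorems 3 and 4, exploiting the fact that the Jacobian of the equilibrium equations is exactly the matrix $\mathbf{M}$ of Lemma 5. Writing the equilibrium condition componentwise as
\[
F_k(a_{ij}; C_1^*,\ldots,C_N^*) = \alpha x_k - (\alpha x_k + \gamma\delta y_k z_k)C_k^* + (1-C_k^*)f\left(\beta\sum_{l=1}^{N}a_{lk}C_l^*\right) = 0, \quad 1 \le k \le N,
\]
I would differentiate both sides with respect to $a_{ij}$, with $i \neq j$ fixed. The decisive structural observation is that $a_{ij}$ enters $F_k$ only through the sum $\sum_{l}a_{lk}C_l^*$, and only when $k = j$; hence $\partial F_k/\partial a_{ij} = 0$ for every $k \neq j$, while
\[
\frac{\partial F_j}{\partial a_{ij}} = \beta(1-C_j^*)f'\left(\beta\sum_{l=1}^{N}a_{lj}C_l^*\right)C_i^*.
\]
This quantity is strictly positive, because $0 < C_j^* < 1$ and $C_i^* > 0$ by Theorem 1 together with Lemma 4, and $f' > 0$ since $f$ is strictly increasing.

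Next I would assemble the implicit-function identity
\[
\frac{\partial F_k}{\partial a_{ij}} + \sum_{m=1}^{N}\frac{\partial F_k}{\partial C_m^*}\cdot\frac{\partial C_m^*}{\partial a_{ij}} = 0, \quad 1 \le k \le N,
\]
and verify that the coefficient matrix $\left(\partial F_k/\partial C_m^*\right)$ coincides with $\mathbf{M}$; this is precisely the computation underlying the proof of Theorem 3. It follows that
\[
\mathbf{M}\,\frac{\partial \mathbf{C}^*}{\partial a_{ij}} = - col\left(\frac{\partial F_k}{\partial a_{ij}}\right),
\]
whose right-hand side is the vector $-\phi\,\mathbf{e}_j$, where $\mathbf{e}_j$ is the $j$-th standard basis vector and $\phi = \beta(1-C_j^*)f'\left(\beta\sum_{l}a_{lj}C_l^*\right)C_i^* > 0$. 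Since $\mathbf{M}$ is invertible by Lemma 5, this gives $\frac{\partial \mathbf{C}^*}{\partial a_{ij}} = \phi\,(-\mathbf{M}^{-1})\mathbf{e}_j$, i.e. the strictly positive scalar $\phi$ times the $j$-th column of $-\mathbf{M}^{-1}$.

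The one place where care is needed — and the genuine difference from Theorems 3 and 4 — is that the forcing vector is \emph{not} strictly positive: it has a single nonzero component and zeros elsewhere, so I cannot simply quote ``negative matrix times positive vector.'' Instead I must use that $\mathbf{M}$ is irreducible, which is exactly what makes Lemma 5 yield that $-\mathbf{M}^{-1}$ is entrywise \emph{strictly} positive rather than merely nonnegative. Granting that, the $j$-th column of $-\mathbf{M}^{-1}$ is a strictly positive vector, and multiplying it by the positive scalar $\phi$ shows every component of $\frac{\partial \mathbf{C}^*}{\partial a_{ij}}$ is strictly positive, establishing $\frac{\partial \mathbf{C}^*}{\partial a_{ij}} > \mathbf{0}$. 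I expect the only real obstacle to be pinning down this strict positivity of $-\mathbf{M}^{-1}$ from the irreducibility of $\mathbf{M}$, which is supplied by Lemma 5 via the cited M-matrix theory.
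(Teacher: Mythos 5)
Your proposal is correct and follows exactly the route the paper intends: the paper omits the proof of Theorem 5, remarking only that it is ``analogous to that for Theorem 3,'' and your implicit-differentiation argument through the Jacobian $\mathbf{M}$ of Lemma 5 is precisely that analogy carried out in full. You also correctly pinpoint the one place where the analogy genuinely needs more than Theorem 3's reasoning --- the forcing vector is $\phi\,\mathbf{e}_j$ rather than a strictly positive vector, so the conclusion requires the entrywise \emph{strict} negativity of $\mathbf{M}^{-1}$, which Lemma 5 indeed supplies via the irreducibility and Hurwitz stability of the Metzler matrix $\mathbf{M}$.
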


The argument for the theorem is analogous to that for Theorem 3. As a corollary of this theorem, the addition of new edges to the interconnection network on the limit security of a cyber network is shown as follows.

\begin{cor} 
	For the GSCS model (1), we have $\frac{\partial S_L}{\partial a_{ij}} < 0$, $1\leq i, j\leq N$, $i \neq j$.
\end{cor}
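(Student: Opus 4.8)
The plan is to reduce the scalar claim to the vector inequality $\frac{\partial \mathbf{C}^*}{\partial a_{ij}} > \mathbf{0}$ of the preceding theorem and then establish that inequality by implicit differentiation, in the spirit of the proof of Theorem 3. Indeed, since $S_L = 1 - C^* = 1 - \frac{1}{N}\sum_{k=1}^N C_k^*$, once every component of $\frac{\partial \mathbf{C}^*}{\partial a_{ij}}$ is shown to be positive we obtain
\[
\frac{\partial S_L}{\partial a_{ij}} = -\frac{1}{N}\sum_{k=1}^N \frac{\partial C_k^*}{\partial a_{ij}} < 0
\]
at once. So the whole task is to prove $\frac{\partial \mathbf{C}^*}{\partial a_{ij}} > \mathbf{0}$.

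First I would record the equilibrium conditions
\[
F_k(a_{ij}; C_1^*, \ldots, C_N^*) = \alpha x_k - (\alpha x_k + \gamma\delta y_k z_k)C_k^* + (1 - C_k^*) f\left(\beta\sum_{l=1}^N a_{lk}C_l^*\right) = 0, \quad 1 \le k \le N,
\]
and note that the off-diagonal entry $a_{ij}$ ($i \ne j$) enters these equations \emph{explicitly} only through the $f$-argument of $F_j$, via the summand $\beta a_{ij}C_i^*$. Consequently $\frac{\partial \mathbf{F}}{\partial a_{ij}}$ is supported on the single coordinate $j$:
\[
\frac{\partial F_j}{\partial a_{ij}} = \beta\,(1 - C_j^*)\,f'\left(\beta\sum_{l=1}^N a_{lj}C_l^*\right)C_i^*, \qquad \frac{\partial F_k}{\partial a_{ij}} = 0 \quad (k \ne j).
\]
Differentiating the identity $\mathbf{F} = \mathbf{0}$ with respect to $a_{ij}$, and observing that the Jacobian $\left(\frac{\partial F_k}{\partial C_l^*}\right)_{k,l}$ coincides exactly with the matrix $\mathbf{M}$ defined before Lemma 5, I obtain the linear system $\mathbf{M}\,\frac{\partial \mathbf{C}^*}{\partial a_{ij}} = -\mathbf{b}$, where $\mathbf{b} = \frac{\partial \mathbf{F}}{\partial a_{ij}} = b_j\,\mathbf{e}_j$ is the single-coordinate forcing vector above and $\mathbf{e}_j$ is the $j$-th standard basis vector.

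Next I would sign $\mathbf{b}$ and invert $\mathbf{M}$. By Theorem 1 we have $C_i^* > 0$ and $C_j^* \le \overline{C_j} < 1$, and $f' > 0$ because $f$ is strictly increasing; hence $b_j > 0$, so $\mathbf{b} \ge \mathbf{0}$ with $\mathbf{b} \ne \mathbf{0}$. By Lemma 5, $\mathbf{M}$ is invertible and $\mathbf{M}^{-1}$ is negative, so $-\mathbf{M}^{-1}$ has \emph{every} entry strictly positive. Therefore
\[
\frac{\partial \mathbf{C}^*}{\partial a_{ij}} = -\mathbf{M}^{-1}\mathbf{b} = b_j\left(-\mathbf{M}^{-1}\mathbf{e}_j\right),
\]
which is $b_j$ times the $j$-th column of $-\mathbf{M}^{-1}$ and hence strictly positive in every component. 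This proves $\frac{\partial \mathbf{C}^*}{\partial a_{ij}} > \mathbf{0}$, and the corollary follows from the reduction in the first paragraph.

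The main obstacle is \emph{not} the implicit differentiation but the concluding positivity step. In the proof of Theorem 3 the forcing vector $diag\left((1-C_i^*)f'(\cdots)\right)\mathbf{A}^T\mathbf{C}^*$ is strictly positive in every coordinate, so the mere nonpositivity of $\mathbf{M}^{-1}$ would already yield the conclusion. Here, by contrast, $\mathbf{b}$ is supported on the single coordinate $j$, so positivity of \emph{every} component of $\frac{\partial \mathbf{C}^*}{\partial a_{ij}}$ depends essentially on $\mathbf{M}^{-1}$ being strictly negative in \emph{every} entry --- equivalently, on the irreducibility of $\mathbf{M}$, which is inherited from the strong connectivity of $G$ and is exactly what Lemma 5 delivers. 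Spelling out this strictness, rather than merely invoking ``analogous to Theorem 3'', is the point I would be most careful about.
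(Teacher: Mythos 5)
Your proposal is correct and takes essentially the same route as the paper: the paper proves Theorem 5 by the same implicit differentiation against the matrix $\mathbf{M}$ used for Theorem 3 (stating only that the argument is ``analogous'') and then passes to $S_L = 1 - \frac{1}{N}\sum_{k=1}^N C_k^*$ exactly as you do. Your closing observation --- that here the forcing vector $b_j\mathbf{e}_j$ is supported on a single coordinate, so the conclusion genuinely requires the \emph{entrywise strict} negativity of $\mathbf{M}^{-1}$ guaranteed by Lemma 5 via irreducibility, rather than mere nonpositivity as would suffice in Theorem 3 --- is precisely the detail the paper's ``analogous'' remark glosses over, and you have it right.
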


This corollary manifests that the limit security of a cyber network declines with the addition of new edges to the interconnection network. Hence, a well-connected cyber network is more vulnerable to APTs. Therefore, the defender is suggested to limit the number of connections in the interconnection network.

\section{The influence of two other factors on the limit security of a cyber network}

In this section, we experimentally examine the influence of two factors, the ratio of the prevention resource to the recovery resource, and the defense resource per unit time with given ratio of the attack resource to the defense resource, on the limit security of a cyber network. In the following experiments, the generic function in the GSCS model is set to be $f(x) = \frac{x}{1 + x}$, and the interconnection network takes value from a set of six non-isomorphic trees shown in Fig. 2.

\begin{figure}[H]
	\subfigure[$G_1$]{\includegraphics[width=0.2\textwidth]{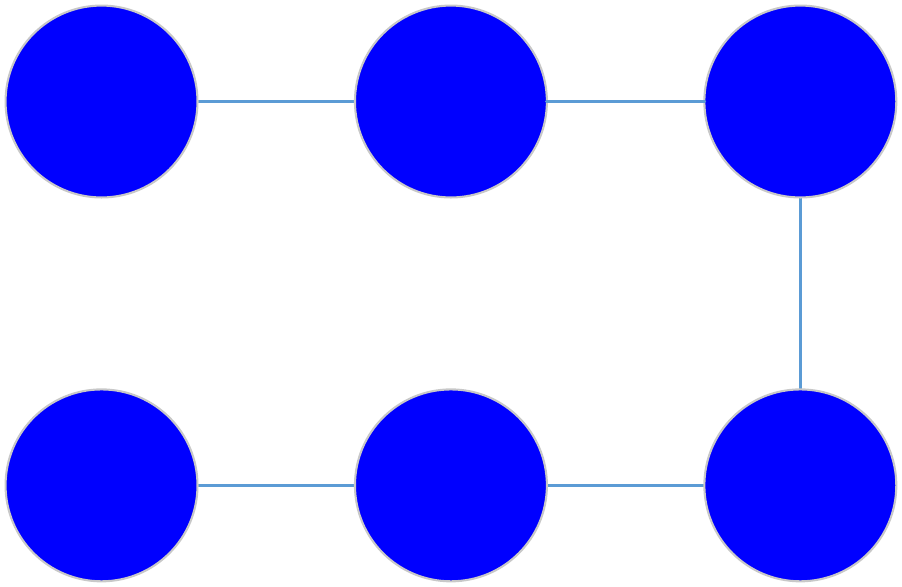}\label{a}}
	\hspace{5ex}
	\subfigure[$G_2$]{\includegraphics[width=0.33\textwidth]{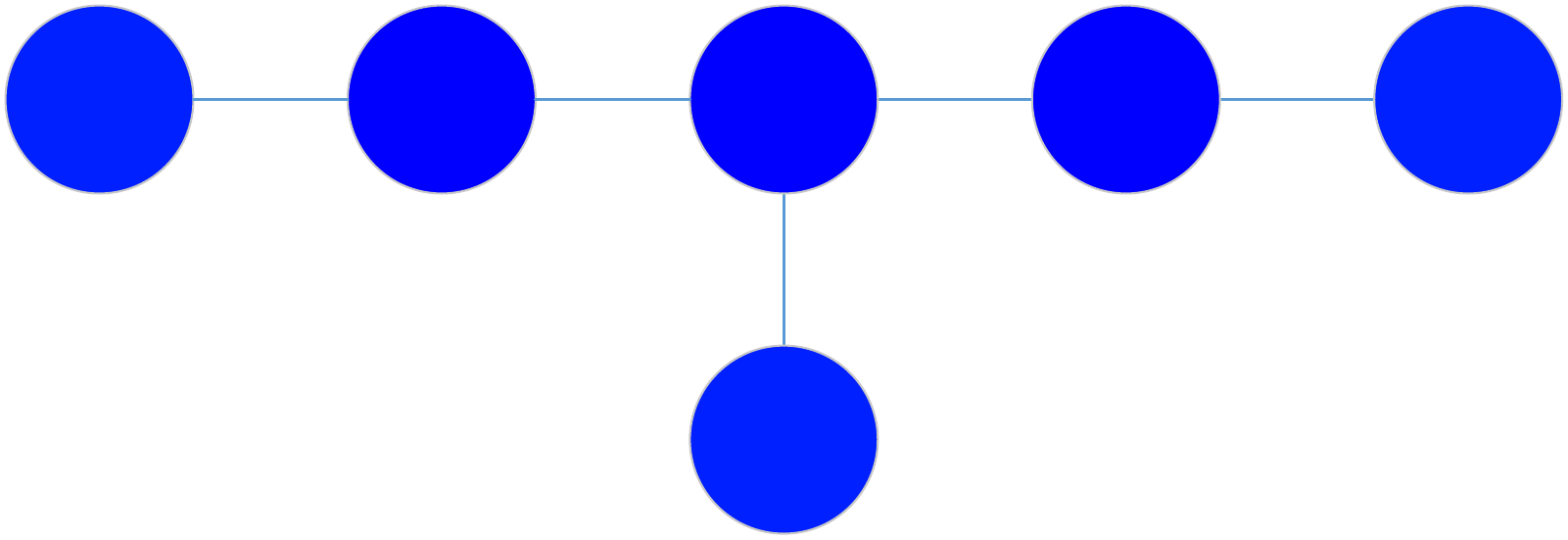}\label{a}}
	\hspace{5ex}
	\subfigure[$G_3$]{\includegraphics[width=0.33\textwidth]{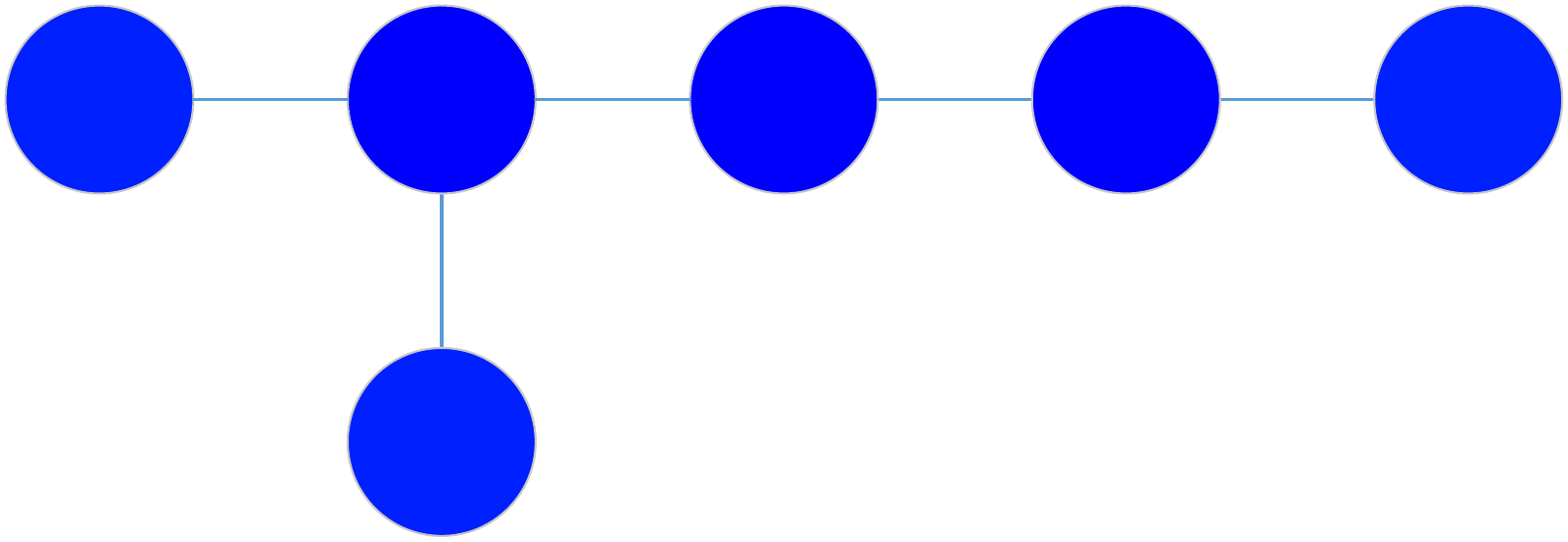}\label{a}}
	\hspace{2ex}
	\subfigure[$G_4$]{\includegraphics[width=0.3\textwidth]{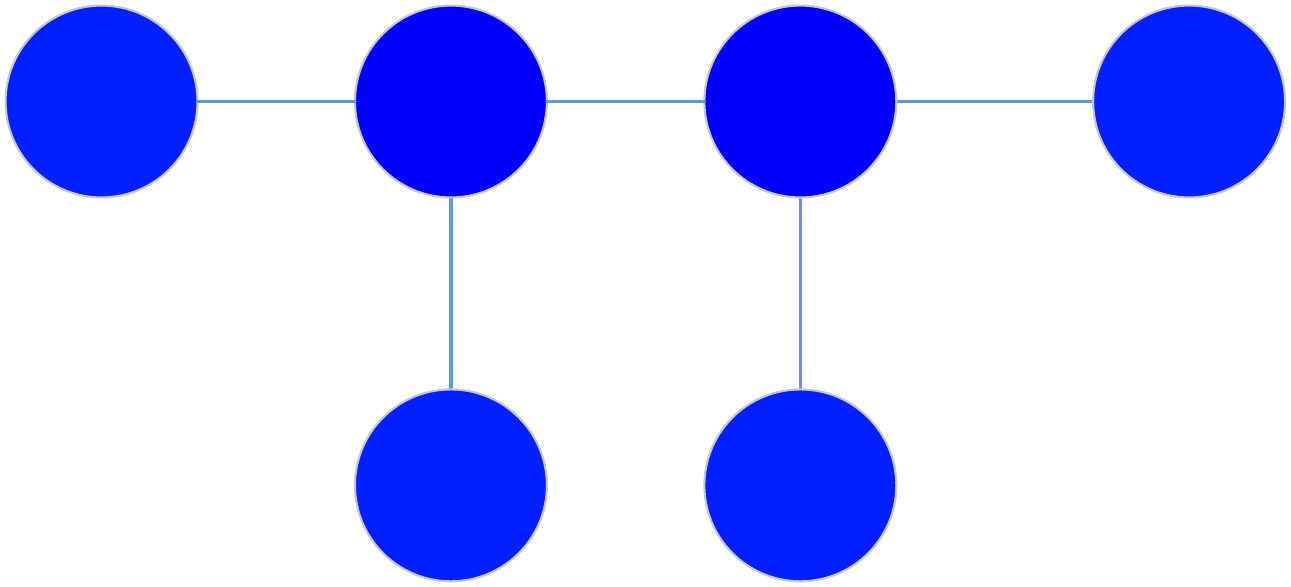}\label{a}}
	\hspace{5ex}
	\subfigure[$G_5$]{\includegraphics[width=0.3\textwidth]{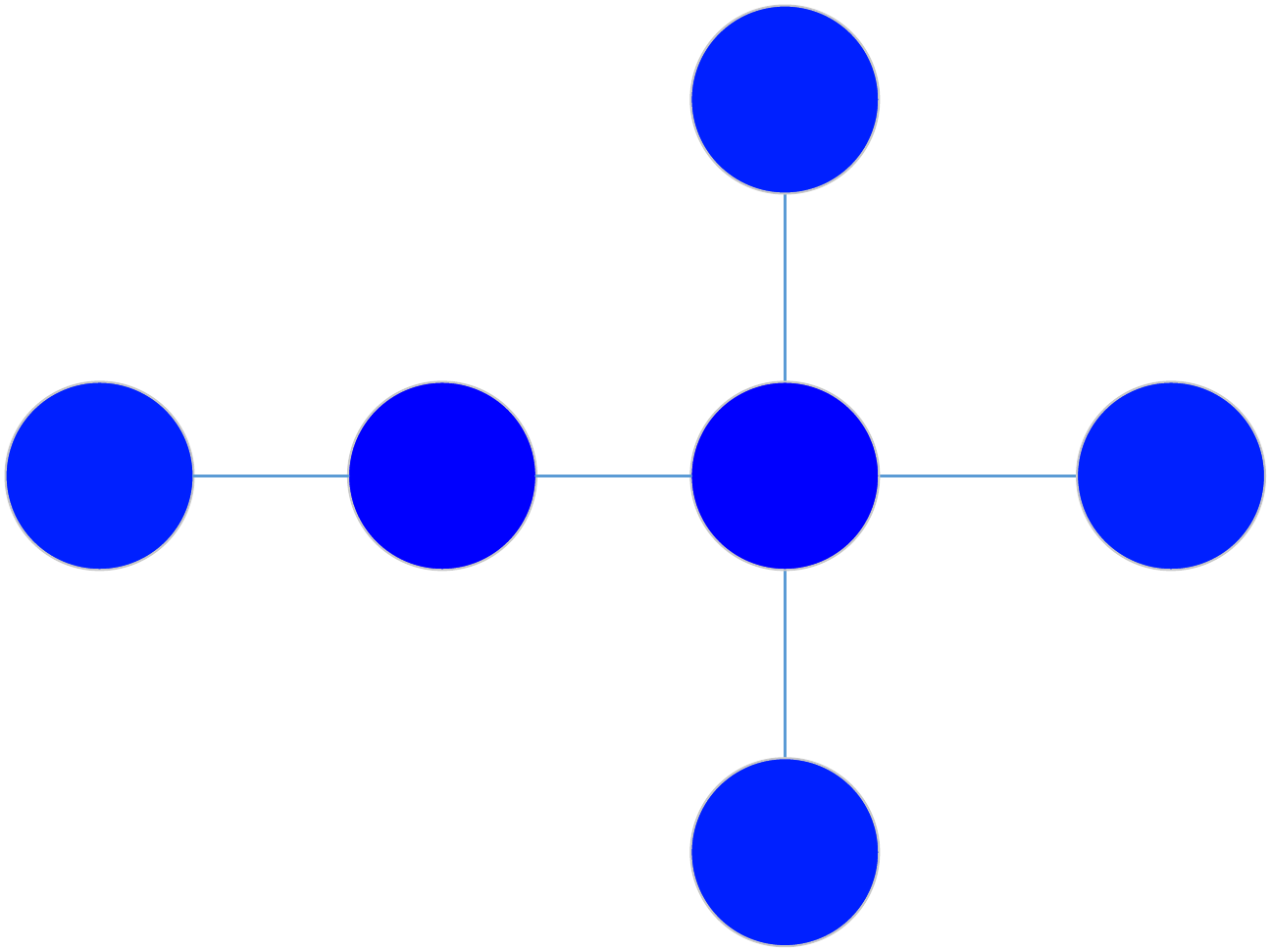}\label{a}}
	\hspace{10ex}	
	\subfigure[$G_6$]{\includegraphics[width=0.2\textwidth]{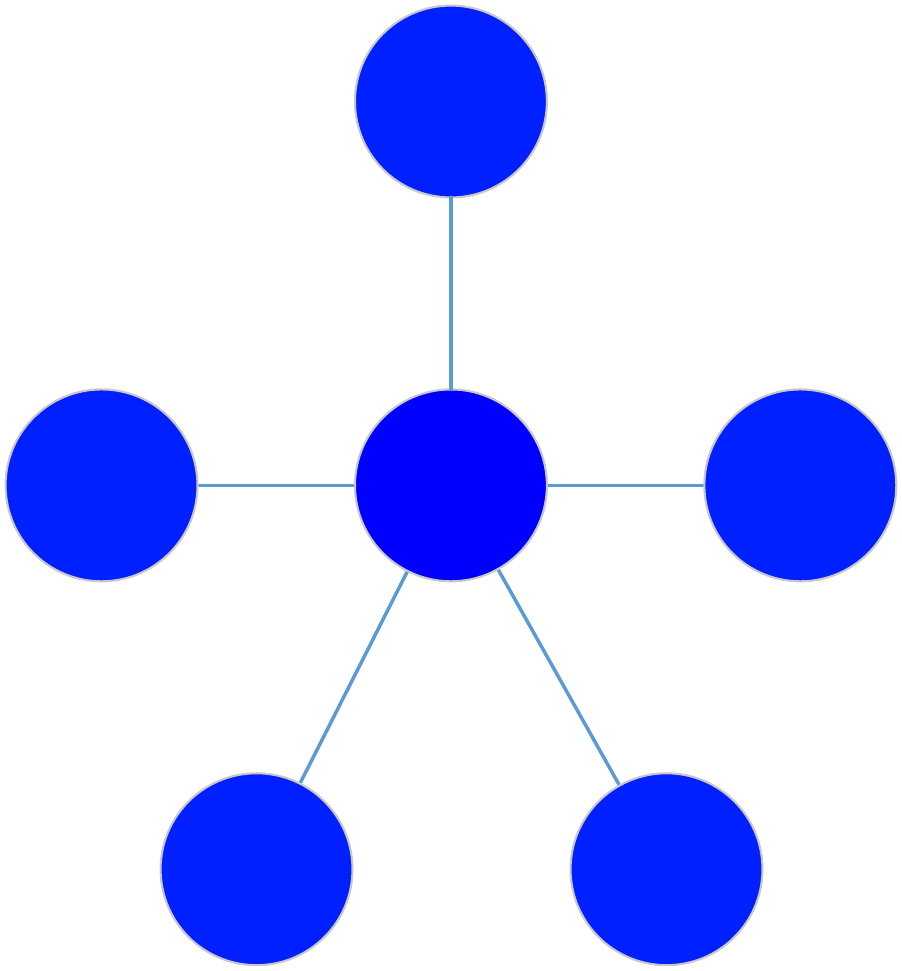}\label{a}}
	\caption{Six non-isomorphic trees with six nodes and five edges.}
	\vspace{2ex}
\end{figure}

\subsection{The influence of the ratio of the prevention resource to the recovery resource}

For a GSCS model, the ratio of the prevention resource to the recovery resource is 
\[
  r_{PR} = \frac{||\mathbf{y}||_1}{||\mathbf{z}||_1}. 
\]
We examine the influence of $r_{PR}$ on the limit security of a cyber network through simulation experiments.

\begin{expe}
	Consider 504 GSCS models, where $\alpha = 0.05$, $\beta = 0.01$, $\gamma = 1$, $\delta = 1$, $G$ varies from $G_1$ to $G_6$, $||\mathbf{x}||_1 = 1$, $||\mathbf{y}||_1 = \frac{r}{1 + r}$, $||\mathbf{z}||_1 = \frac{1}{1 + r}$, $r \in \{\frac{1}{4}, \frac{1}{3}, \frac{1}{2}, 1, 2, 3, 4\}$, with (a) uniform $\mathbf{x}$, $\mathbf{y}$ and $\mathbf{z}$; (b) uniform $\mathbf{x}$ and $\mathbf{y}$, degree-first $\mathbf{z}$; (c) uniform $\mathbf{x}$ and $\mathbf{z}$, degree-first $\mathbf{y}$; (d) uniform $\mathbf{x}$, degree-first $\mathbf{y}$ and $\mathbf{z}$; (e) degree-first $\mathbf{x}$, uniform $\mathbf{y}$ and $\mathbf{z}$; (f) degree-first $\mathbf{x}$ and $\mathbf{z}$, uniform $\mathbf{y}$; (g) degree-first $\mathbf{x}$ and $\mathbf{y}$, uniform $\mathbf{z}$; (h) degree-first $\mathbf{x}$, $\mathbf{y}$ and $\mathbf{z}$; (i) degree-last $\mathbf{x}$, uniform $\mathbf{y}$ and $\mathbf{z}$; (j) degree-last $\mathbf{x}$, uniform $\mathbf{y}$, degree-first $\mathbf{z}$; (k) degree-last $\mathbf{x}$, degree-first $\mathbf{y}$, uniform $\mathbf{z}$; (l) degree-last $\mathbf{x}$, degree-first $\mathbf{y}$ and $\mathbf{z}$. For each of the GSCS model, the limit security of the cyber network is shown  shown in Fig. 3. It can be seen that, with the increase of $r_{PR}$, the limit security of a cyber network goes up first but then it goes down. Moreover, the limit security attains the maximum in the proximity of $r_{PR} = 1$.
\end{expe}

Many similar experiments exhibit qualitatively similar phenomena. It is concluded that, with the increase of the ratio of the prevention resource to the recovery resource, the limit security of a cyber network goes up first but then it goes down. Moreover, the limit security attains the maximum when the prevention resource is close to the recovery resource. Hence, the defender is suggested to distribute the total defense resource equally to prevention and recovery. 

\begin{figure}[H]
	\centering
	\subfigure[]{\includegraphics[width=0.23\textwidth,height=3.2cm]{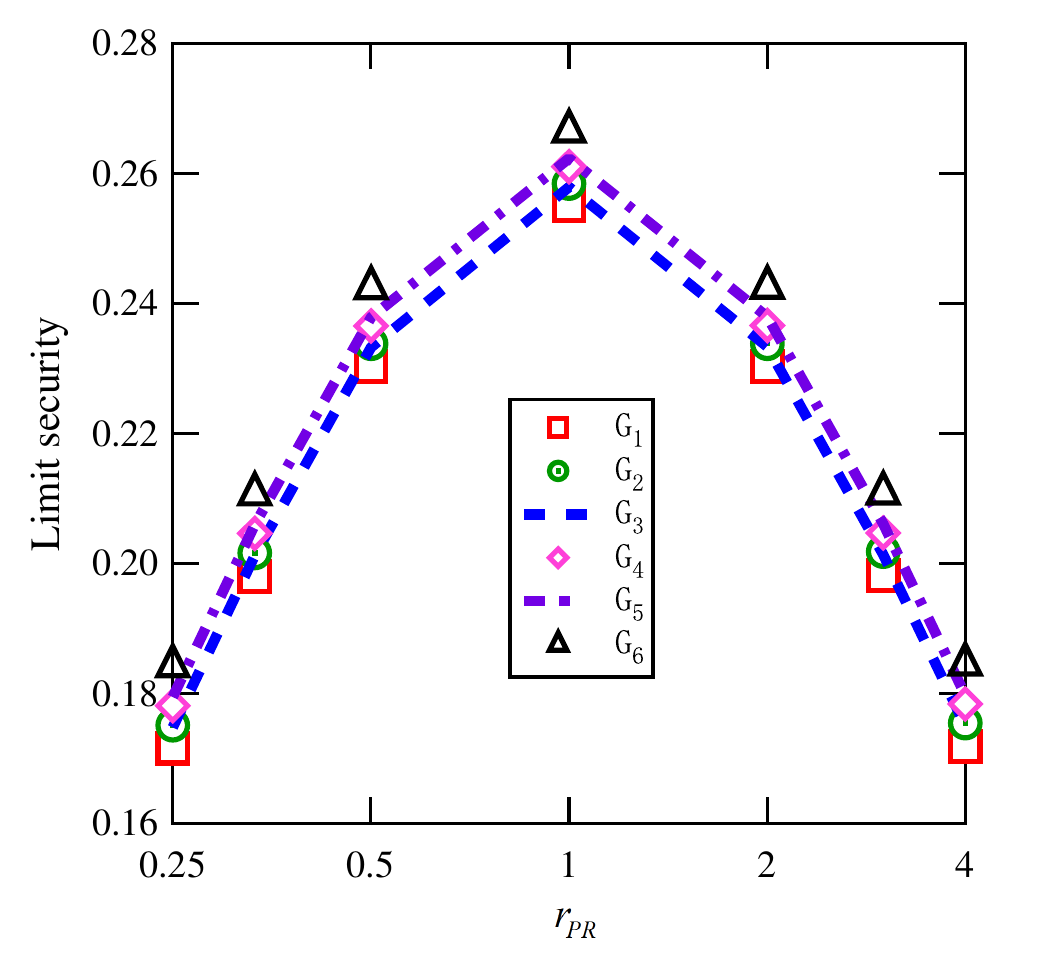}\label{a}}
	\subfigure[]{\includegraphics[width=0.23\textwidth,height=3.2cm]{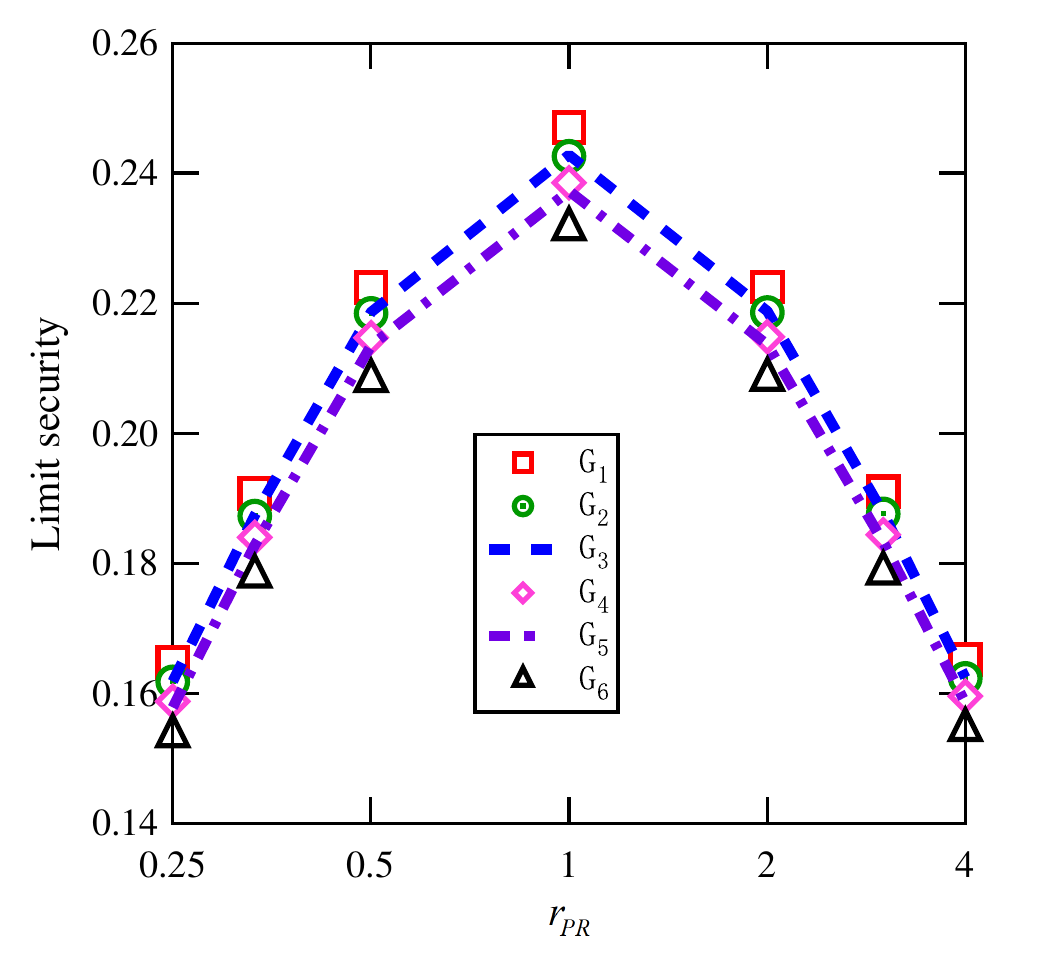}\label{a}}
	\subfigure[]{\includegraphics[width=0.23\textwidth,height=3.2cm]{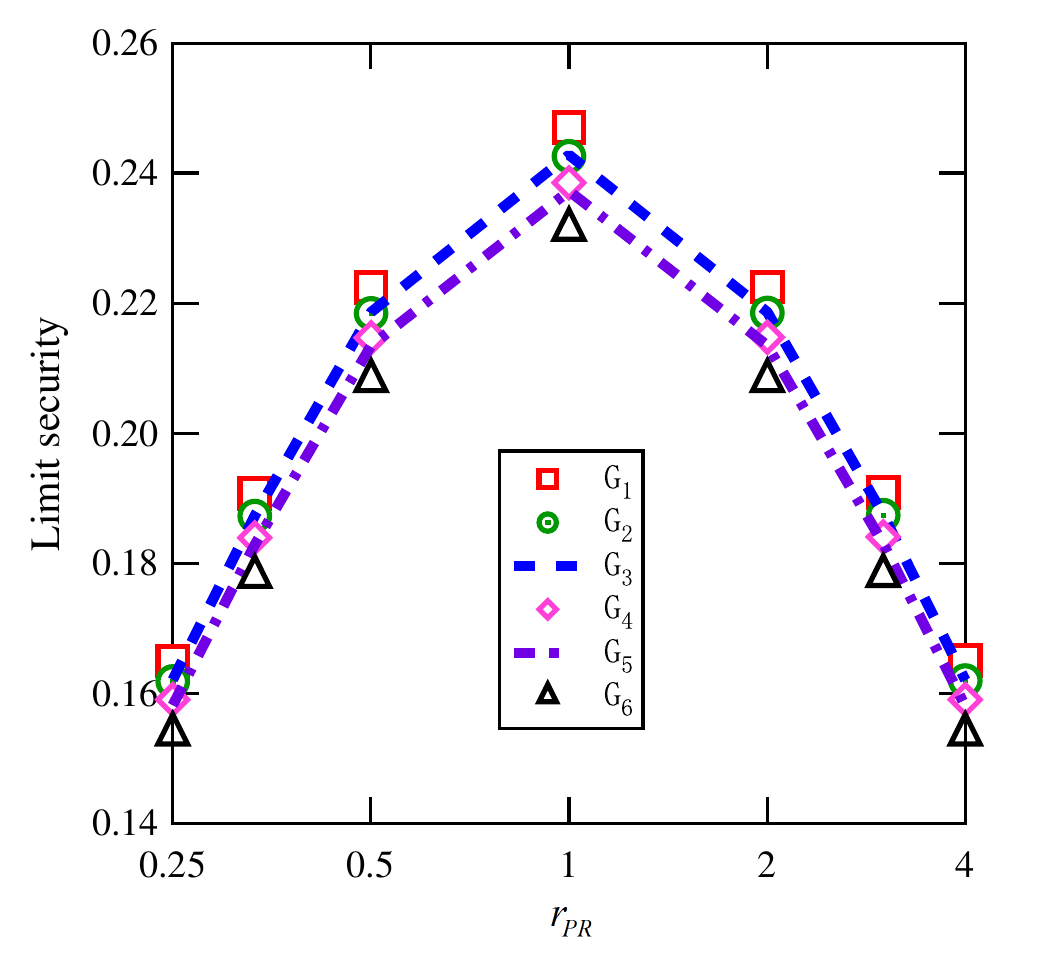}\label{a}}
	\subfigure[]{\includegraphics[width=0.23\textwidth,height=3.2cm]{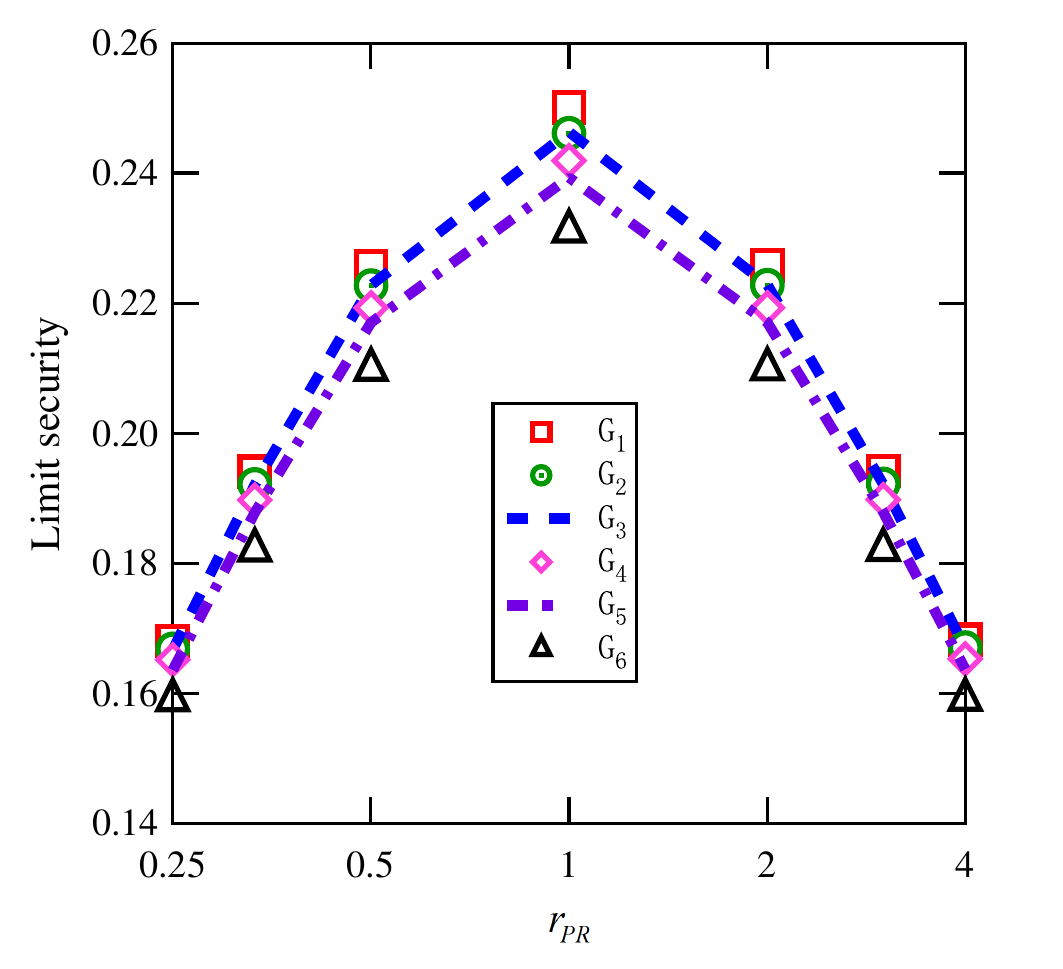}\label{a}}\\
	\subfigure[]{\includegraphics[width=0.23\textwidth,height=3.2cm]{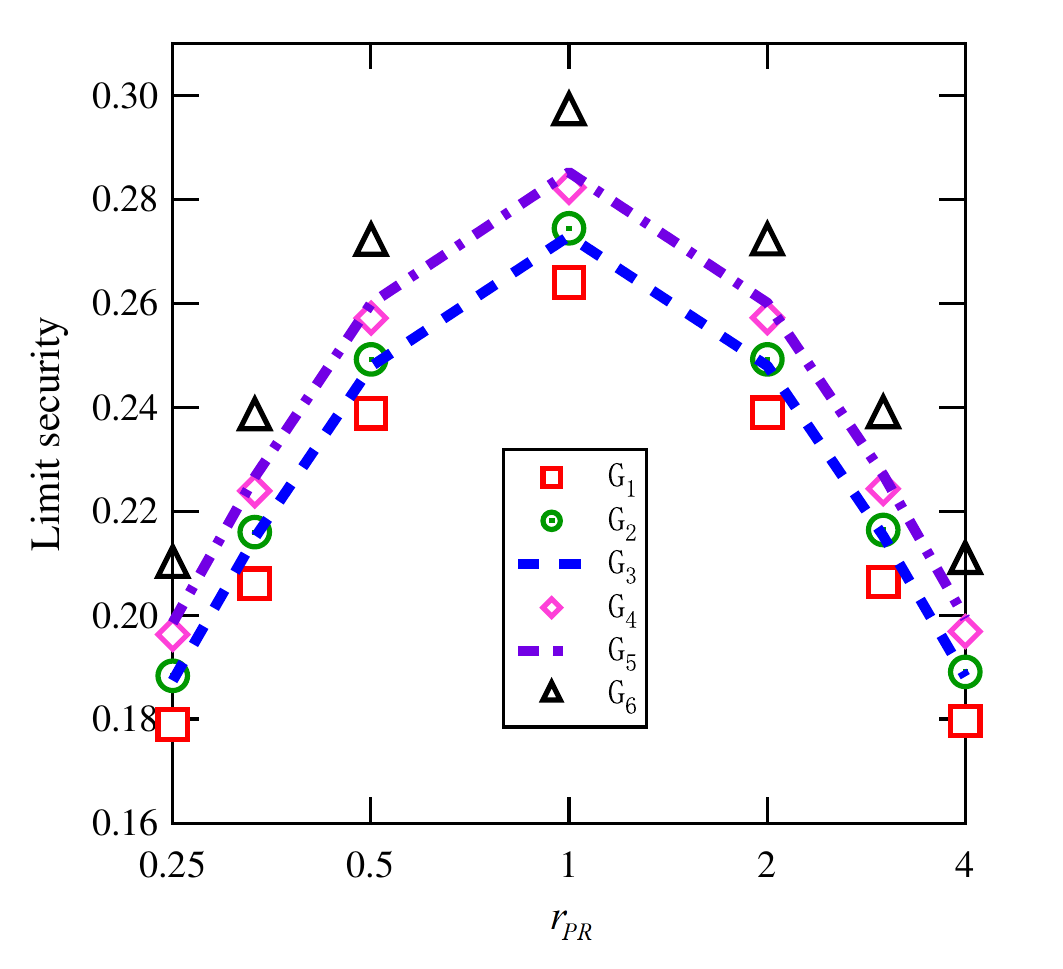}\label{a}}
	\subfigure[]{\includegraphics[width=0.23\textwidth,height=3.2cm]{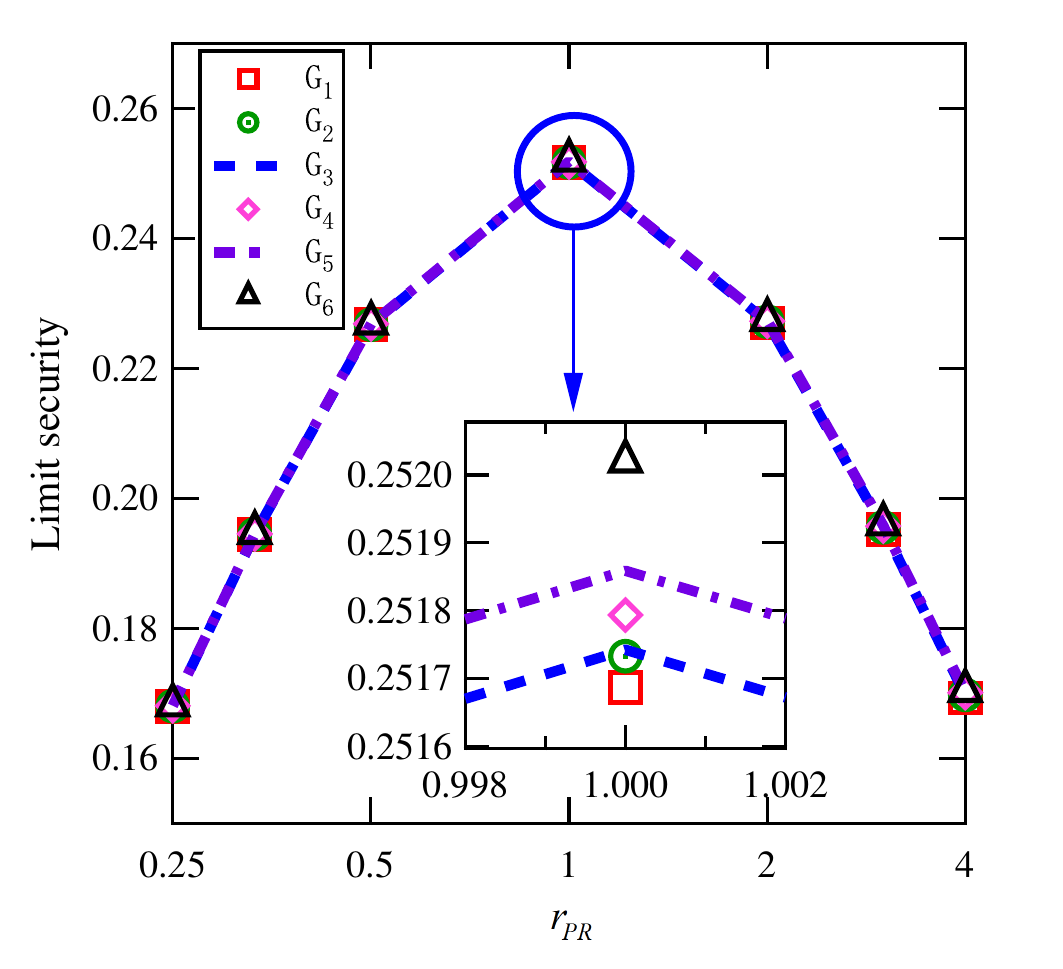}\label{a}}
	\subfigure[]{\includegraphics[width=0.23\textwidth,height=3.2cm]{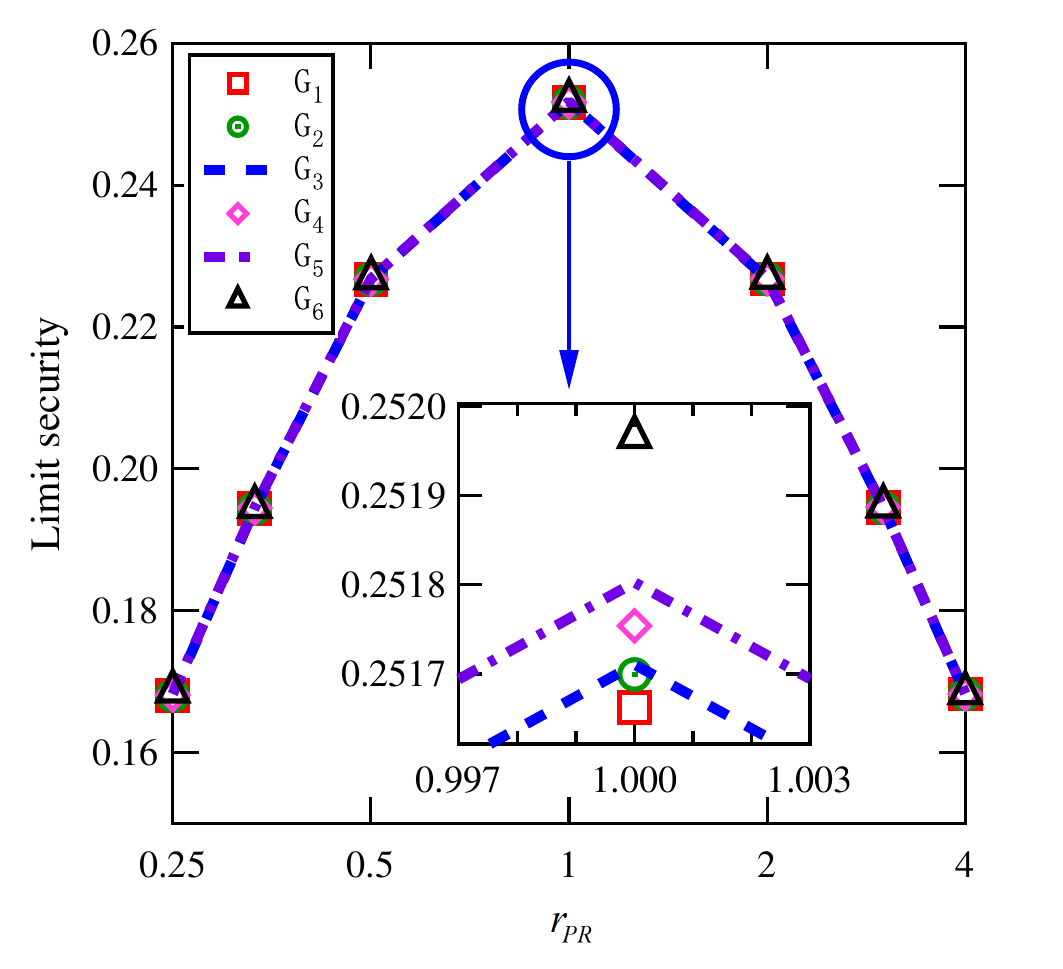}\label{a}}
	\subfigure[]{\includegraphics[width=0.23\textwidth,height=3.2cm]{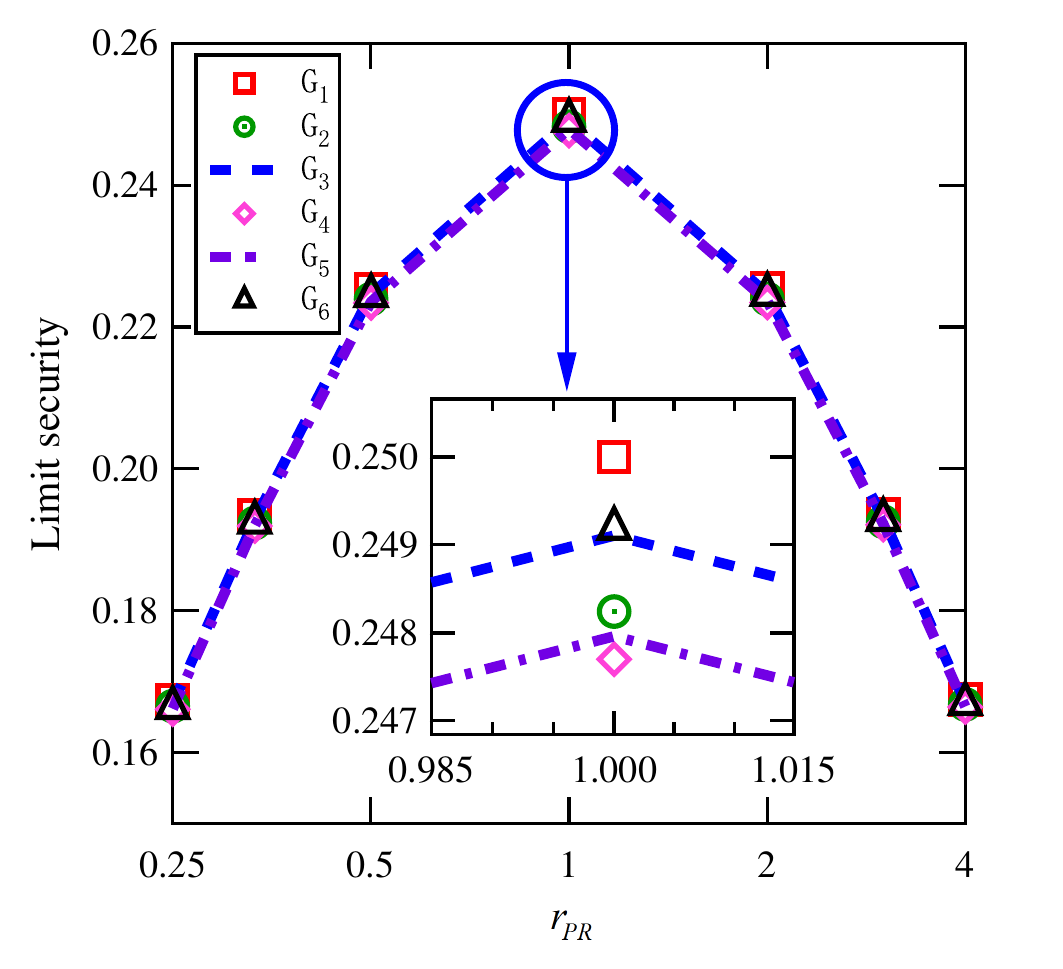}\label{a}}\\
	\subfigure[]{\includegraphics[width=0.23\textwidth,height=3.2cm]{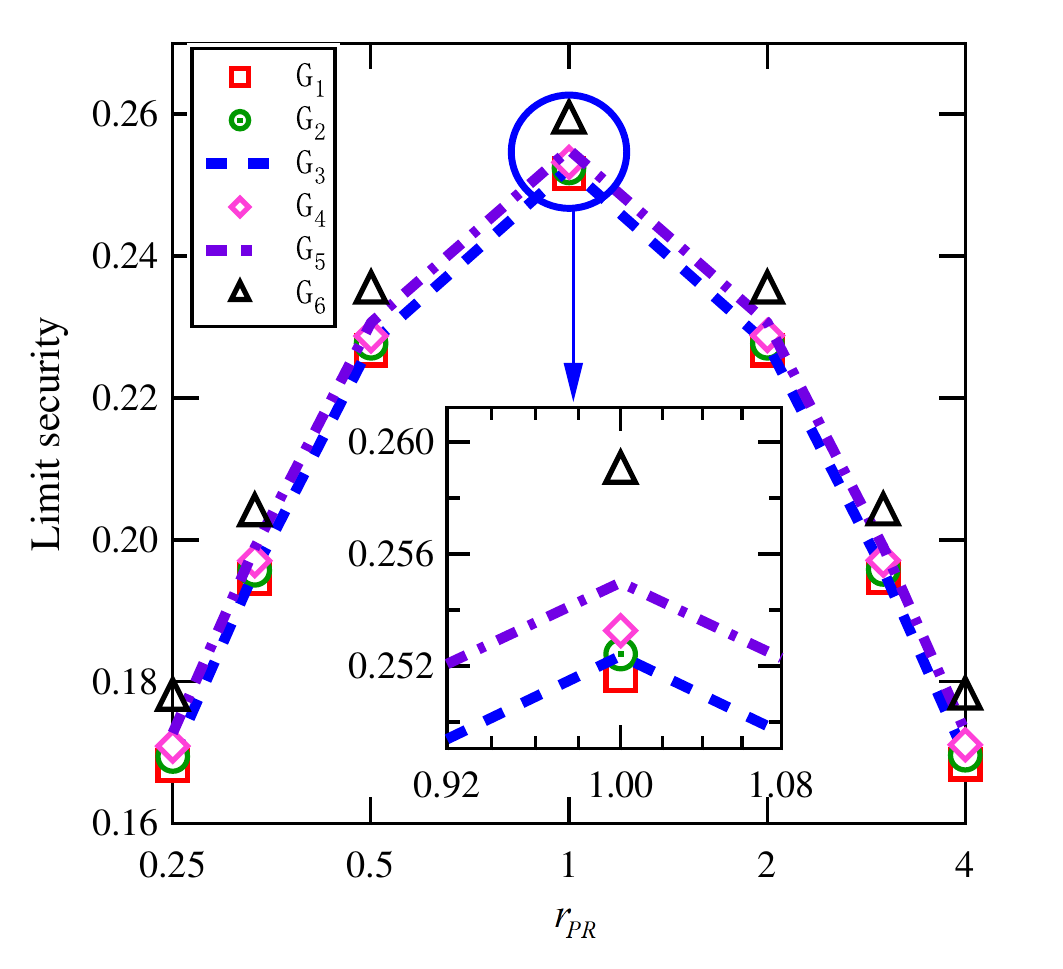}\label{a}}
	\subfigure[]{\includegraphics[width=0.23\textwidth,height=3.2cm]{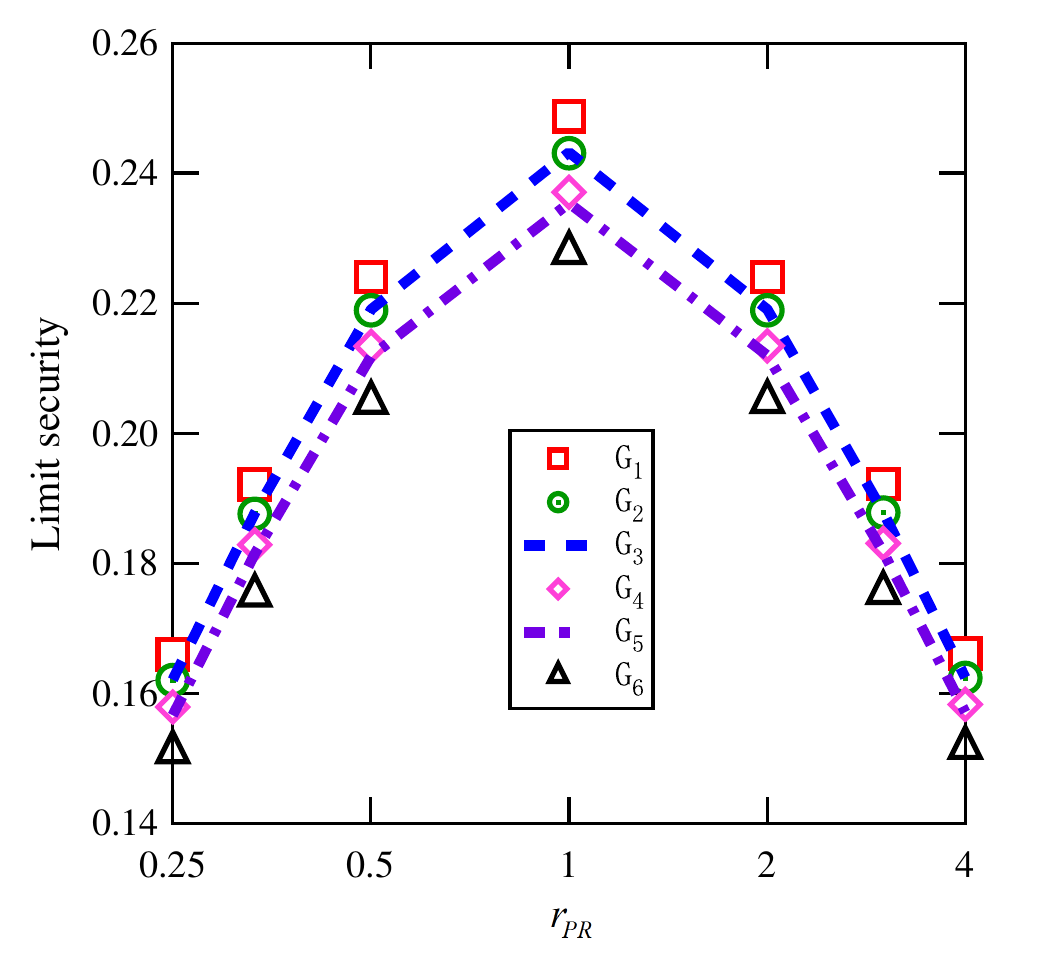}\label{a}}
	\subfigure[]{\includegraphics[width=0.23\textwidth,height=3.2cm]{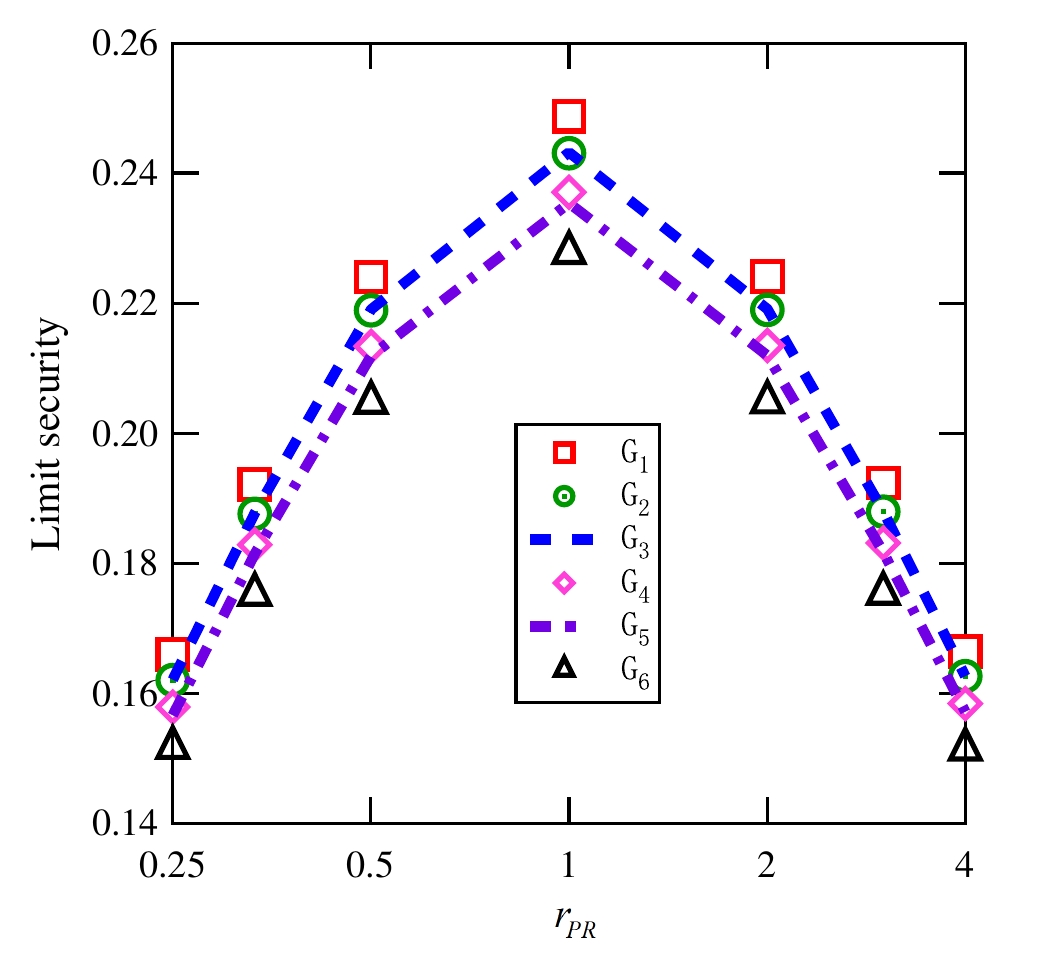}\label{a}}
	\subfigure[]{\includegraphics[width=0.23\textwidth,height=3.2cm]{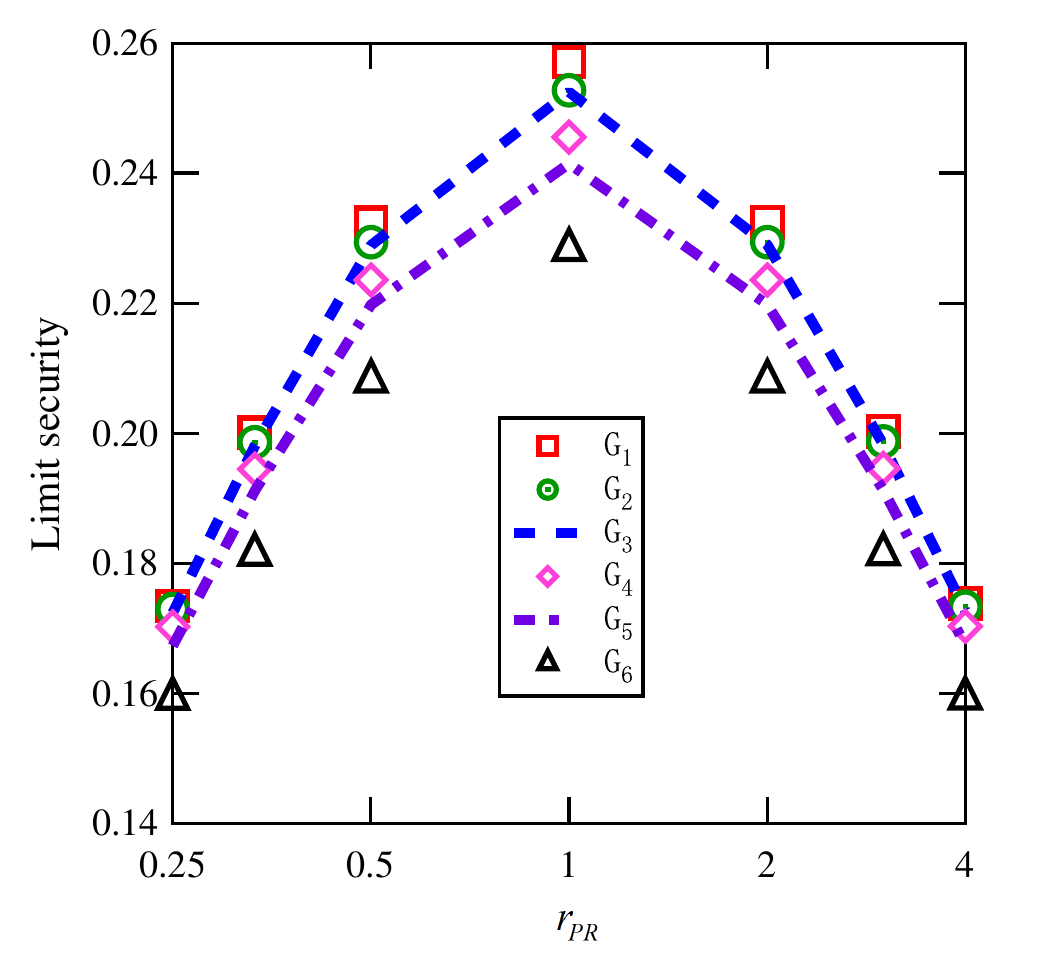}\label{a}}
	\caption{The limit security of the cyber network for each of the 504 GSCS models, where $\alpha = 0.05$, $\beta = 0.01$, $\gamma = 1$, $\delta = 1$, $G$ varies from $G_1$ to $G_6$, $||\mathbf{x}||_1 = 1$, $||\mathbf{y}||_1 = \frac{r}{1 + r}$, $||\mathbf{z}||_1 = \frac{1}{r}$, $r \in \{\frac{1}{4}, \frac{1}{3}, \frac{1}{2}, 1, 2, 3, 4\}$, with (a) uniform $\mathbf{x}$, $\mathbf{y}$ and $\mathbf{z}$; (b) uniform $\mathbf{x}$ and $\mathbf{y}$, degree-first $\mathbf{z}$; (c) uniform $\mathbf{x}$ and $\mathbf{z}$, degree-first $\mathbf{y}$; (d) uniform $\mathbf{x}$, degree-first $\mathbf{y}$ and $\mathbf{z}$; (e) degree-first $\mathbf{x}$, uniform $\mathbf{y}$ and $\mathbf{z}$; (f) degree-first $\mathbf{x}$ and $\mathbf{z}$, uniform $\mathbf{y}$; (g) degree-first $\mathbf{x}$ and $\mathbf{y}$, uniform $\mathbf{z}$; (h) degree-first $\mathbf{x}$, $\mathbf{y}$ and $\mathbf{z}$; (i) degree-last $\mathbf{x}$, uniform $\mathbf{y}$ and $\mathbf{z}$; (j) degree-last $\mathbf{x}$, uniform $\mathbf{y}$, degree-first $\mathbf{z}$; (k) degree-last $\mathbf{x}$, degree-first $\mathbf{y}$, uniform $\mathbf{z}$; (l) degree-last $\mathbf{x}$, degree-first $\mathbf{y}$ and $\mathbf{z}$. It can be seen that, with the increase of $r_{PR}$, the limit security of a cyber network goes up first but then it goes down. Moreover, the limit security attains the maximum in the proximity of $r_{PR} = 1$.}
	\vspace{2ex}
\end{figure}

\subsection{The influence of defense resource per unit time given the ratio of the attack resource to the defense resource}

For a GSCS model, the ratio of the attack resource to the defense resource is 
\[
r_{AD} = \frac{||\mathbf{x}||_1}{||\mathbf{y}||_1 + ||\mathbf{z}||_1}. 
\]
Obviously, the limit security of a cyber network declines with $r_{AD}$. A question arises naturally: given the ratio of the attack resource to the defense resource, how about the impact of the defense resource on the limit security of a cyber network? Now, let us answer the question through simulation experiments.

\begin{expe}
	Consider 504 GSCS models, where $\alpha = 0.1$, $\beta = 0.05$, $\gamma = 0.5$, $\delta = 1$, $G$ varies from $G_1$ to $G_6$, $r_{AD} = r$, $||\mathbf{y}||_1 = ||\mathbf{z}||_1 = s$, $s \in \{2, 3, \cdots, 10\}$, $||\mathbf{y}||_1 = s_1$, with (a) $r = \frac{1}{2}$, uniform $\mathbf{x}$, $\mathbf{y}$ and $\mathbf{z}$; (b) $r = 1$, uniform $\mathbf{x}$, $\mathbf{y}$ and $\mathbf{z}$; (c) $r = 2$, uniform $\mathbf{x}$, $\mathbf{y}$ and $\mathbf{z}$; (d) $r = \frac{1}{2}$, uniform $\mathbf{x}$, degree-first $\mathbf{y}$ and $\mathbf{z}$; (e) $r = 1$, uniform $\mathbf{x}$, degree-first $\mathbf{y}$ and $\mathbf{z}$; (f) $r = 2$, uniform $\mathbf{x}$, degree-first $\mathbf{y}$ and $\mathbf{z}$; (g) $r = \frac{1}{2}$, degree-first $\mathbf{x}$, uniform $\mathbf{y}$ and $\mathbf{z}$; (h) $r = 1$, degree-first $\mathbf{x}$, uniform $\mathbf{y}$ and $\mathbf{z}$; (i) $r = 2$, degree-first $\mathbf{x}$, uniform $\mathbf{y}$ and $\mathbf{z}$; (j) $r = \frac{1}{2}$, degree-first $\mathbf{x}$, $\mathbf{y}$ and $\mathbf{z}$; (k) $r = 1$, degree-first $\mathbf{x}$, $\mathbf{y}$ and $\mathbf{z}$; (l) $r =2$, degree-first $\mathbf{x}$, $\mathbf{y}$ and $\mathbf{z}$. For each of the GSCS models, the limit security of the cyber network is shown in Fig. 4. It can be seen that the limit security of a cyber network ascends with $s$. 
\end{expe}

Many similar experiments exhibit qualitatively similar phenomena. It is concluded that, given the ratio of the attack resource to the defense resource, the limit security of a cyber network goes up with the defense resource. This result sounds a good news to the defender. Indeed, configuring more defense resource is always an effective means of protecting against APTs.

\begin{figure}[H]
	\centering
	\subfigure[]{\includegraphics[width=0.23\textwidth,height=3.2cm]{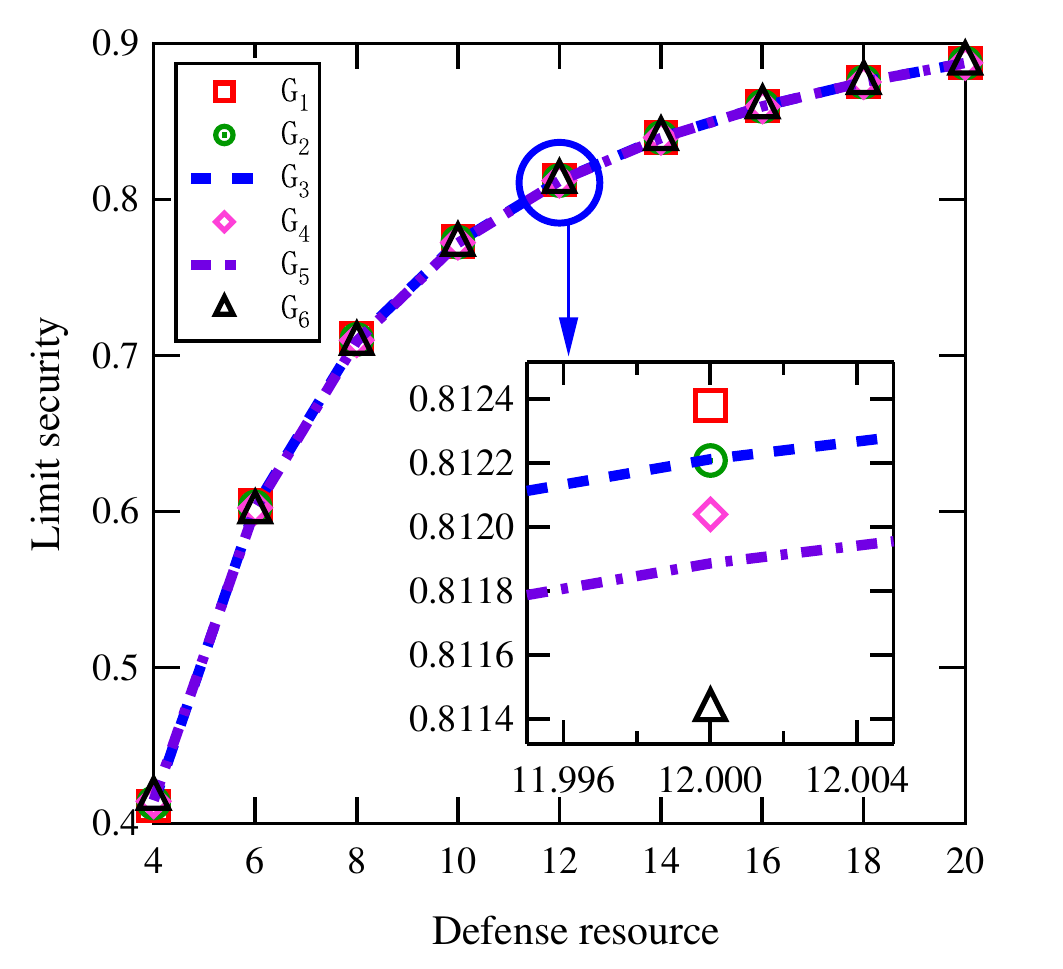}\label{a}}
	\subfigure[]{\includegraphics[width=0.23\textwidth,height=3.2cm]{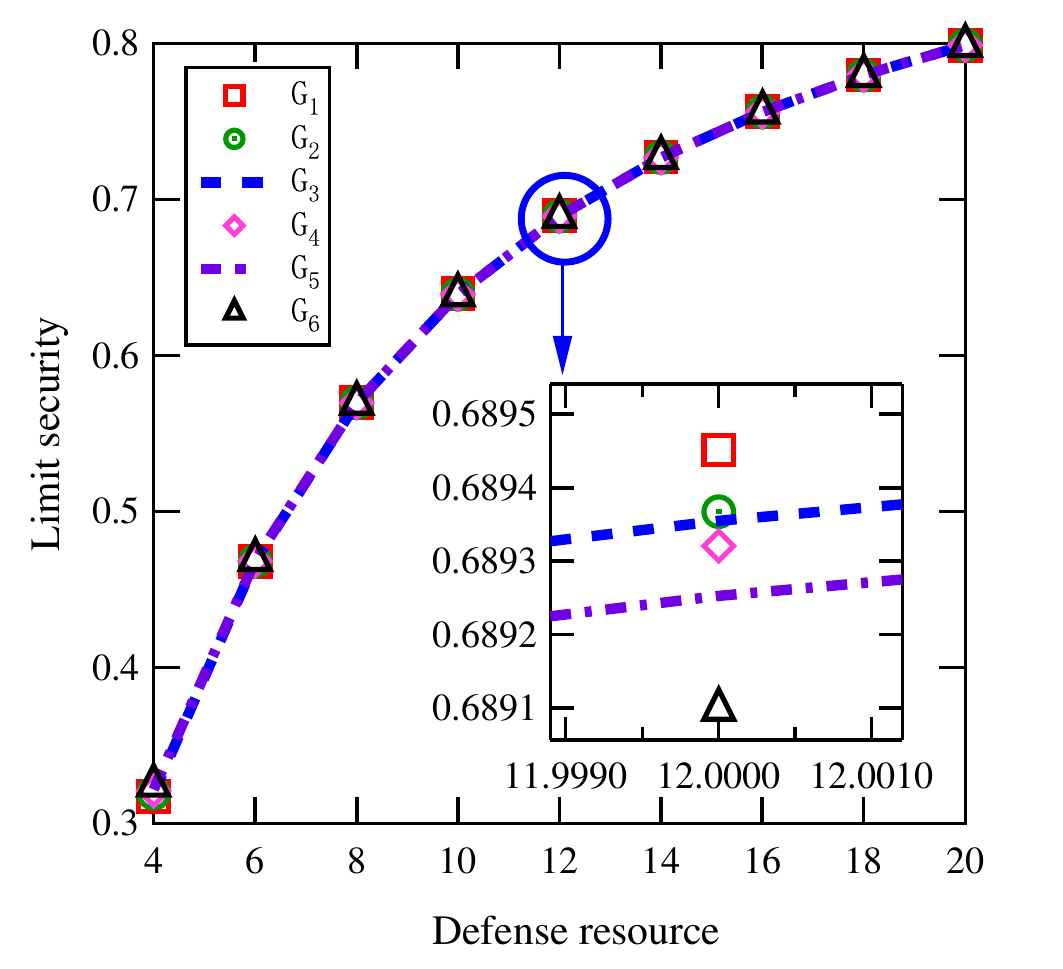}\label{a}}
	\subfigure[]{\includegraphics[width=0.23\textwidth,height=3.2cm]{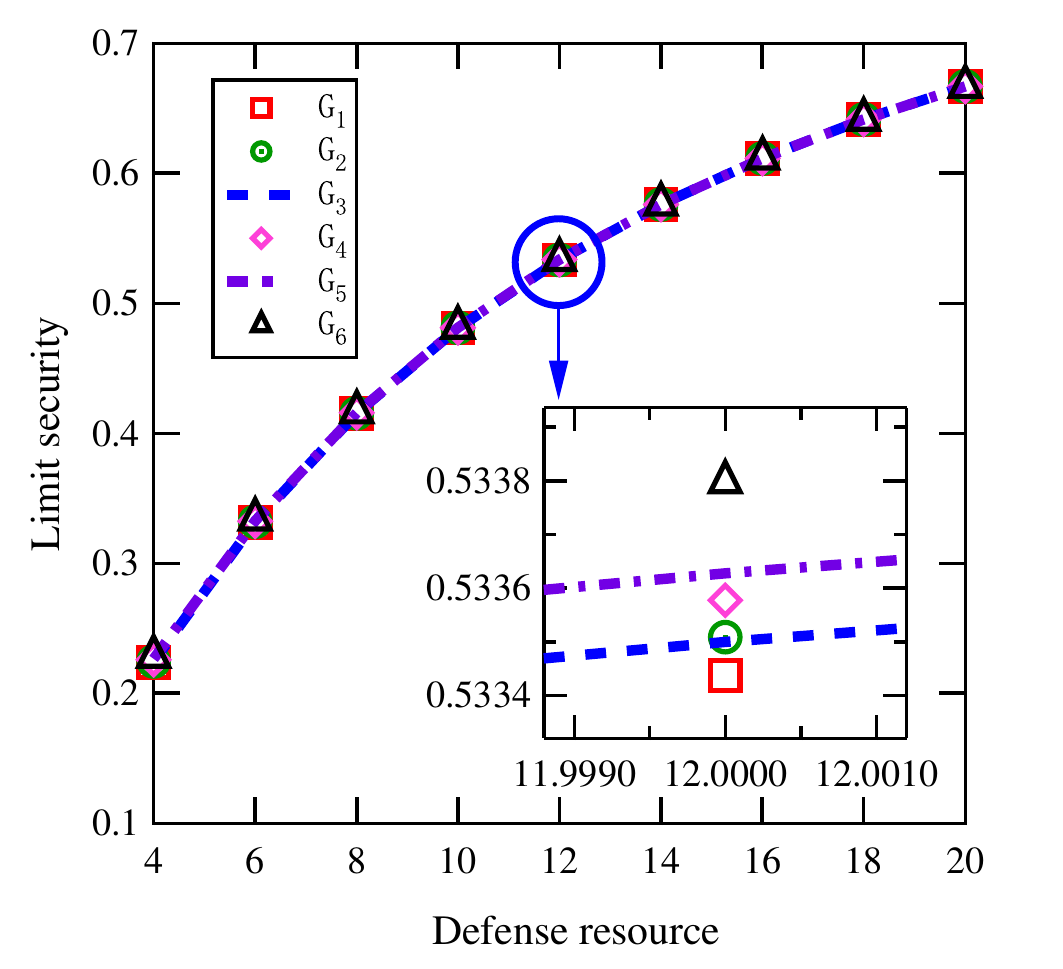}\label{a}}
	\subfigure[]{\includegraphics[width=0.23\textwidth,height=3.2cm]{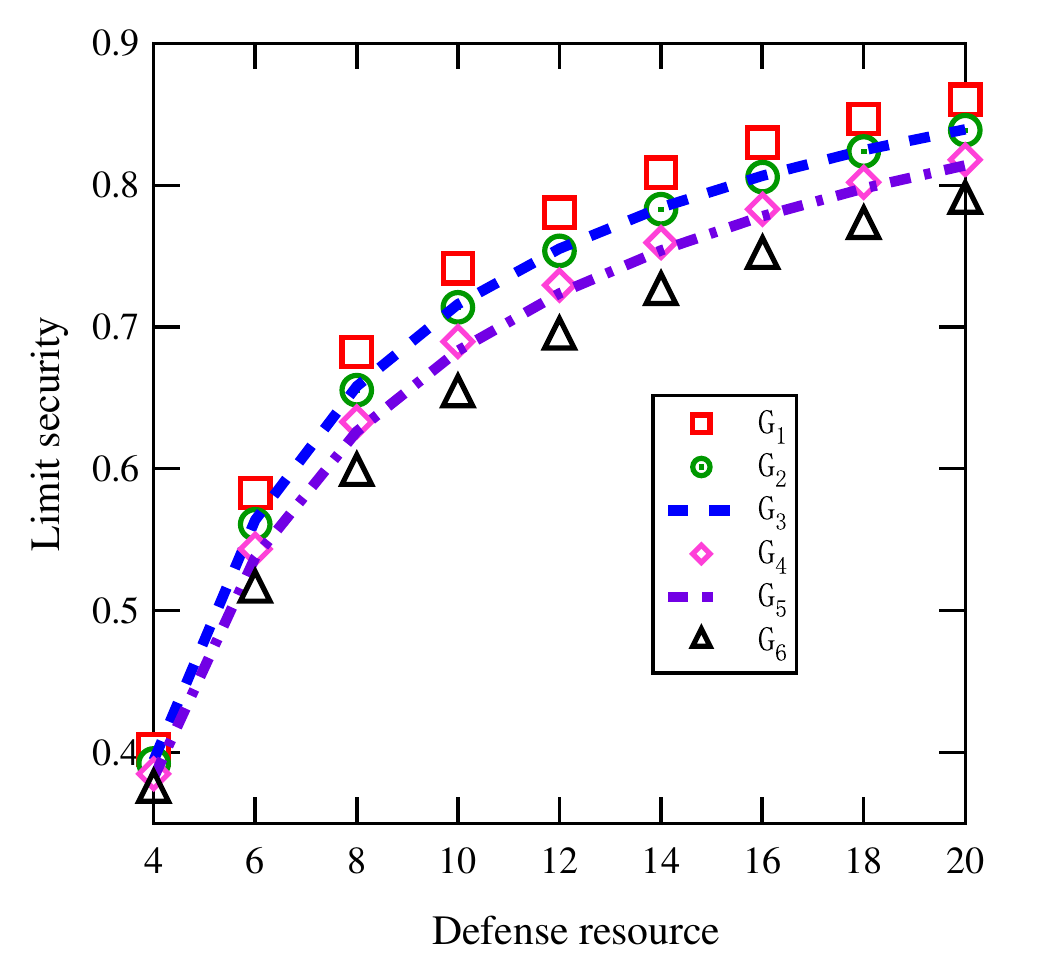}\label{a}}\\
	\subfigure[]{\includegraphics[width=0.23\textwidth,height=3.2cm]{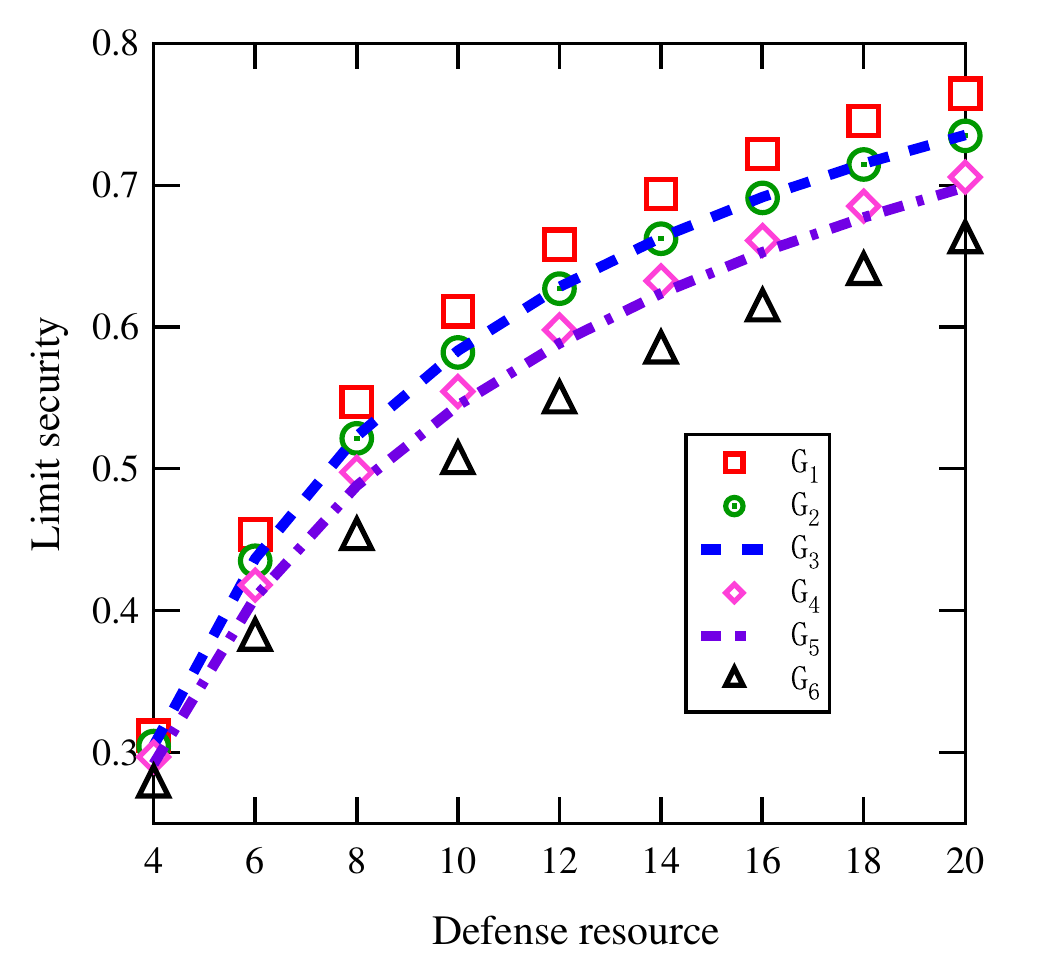}\label{a}}
	\subfigure[]{\includegraphics[width=0.23\textwidth,height=3.2cm]{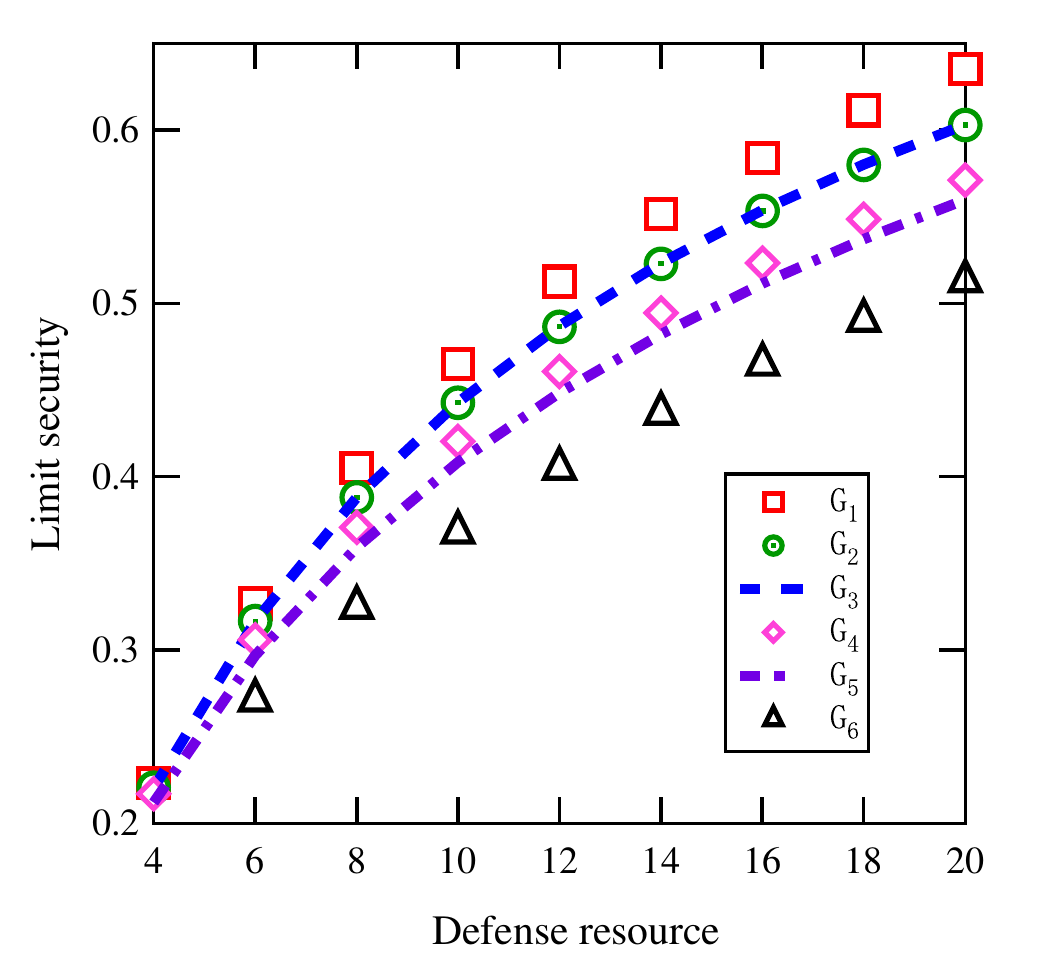}\label{a}}
	\subfigure[]{\includegraphics[width=0.23\textwidth,height=3.2cm]{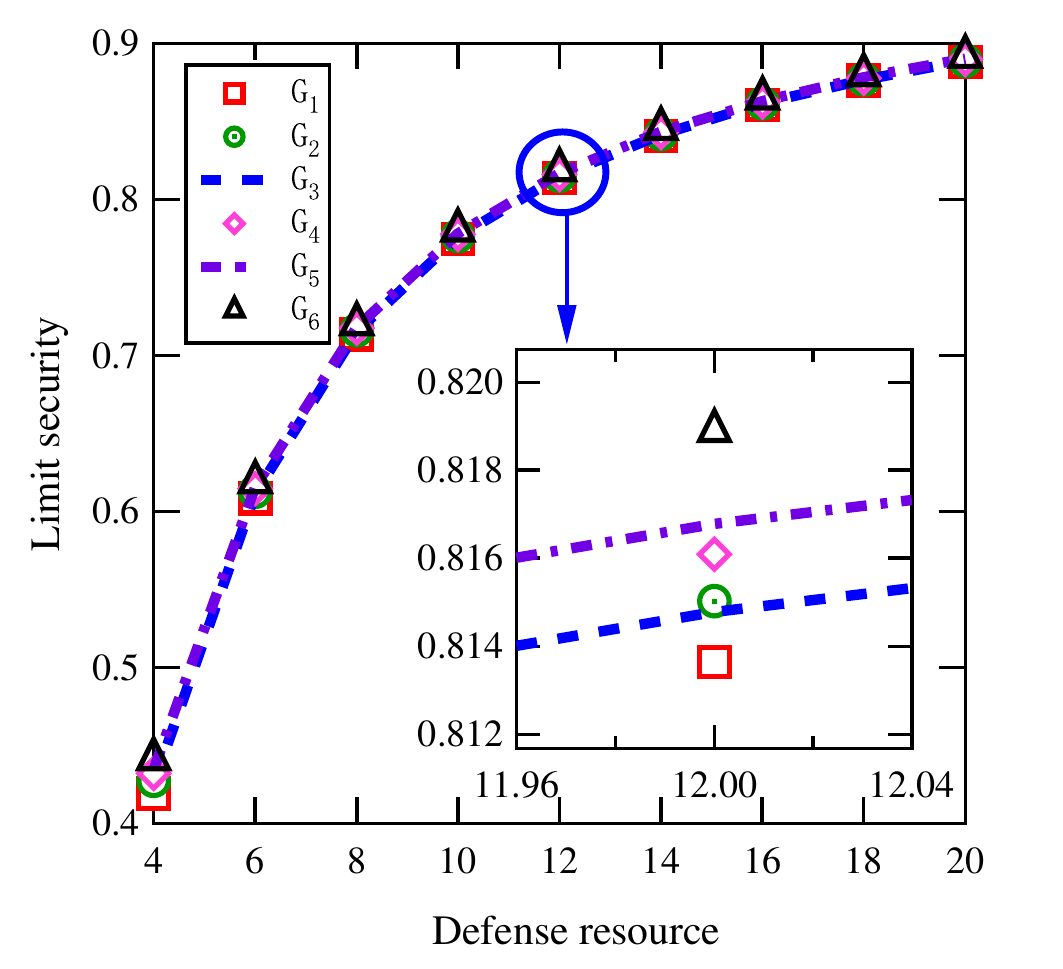}\label{a}}
	\subfigure[]{\includegraphics[width=0.23\textwidth,height=3.2cm]{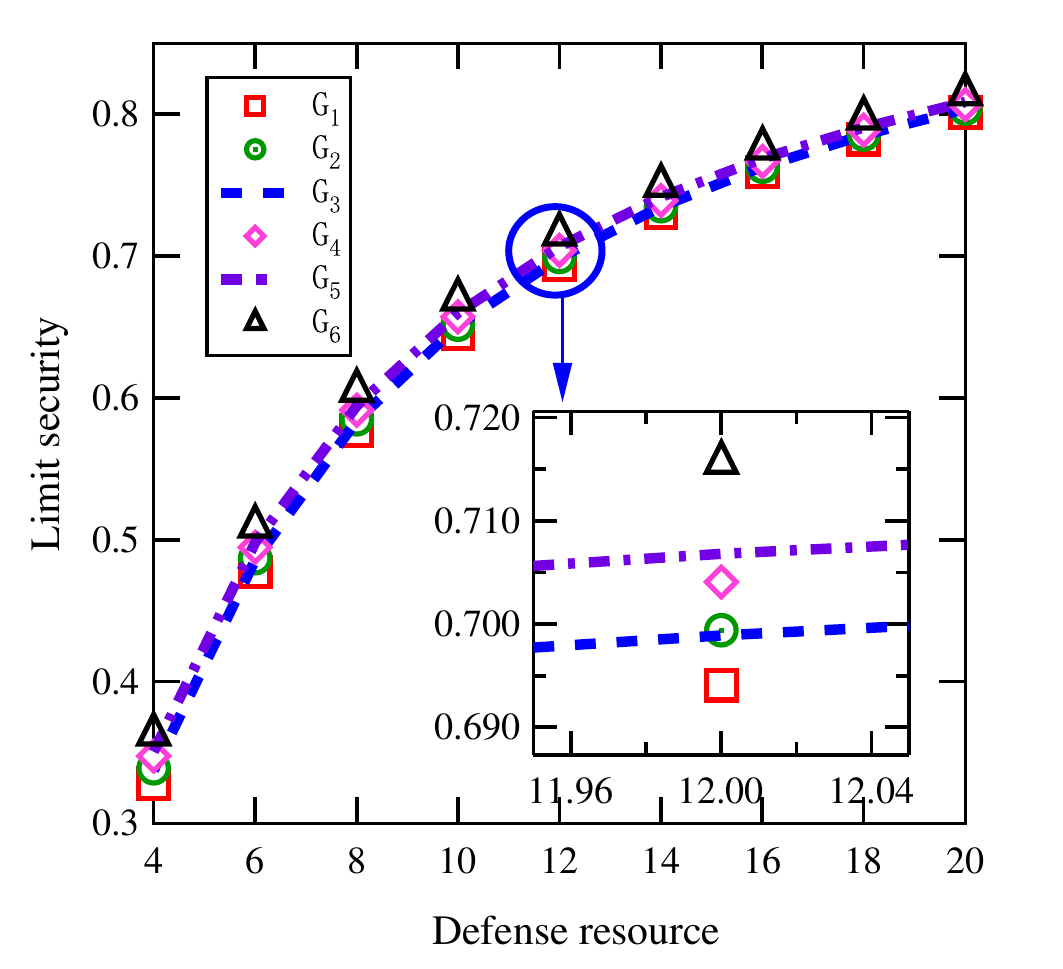}\label{a}}\\
	\subfigure[]{\includegraphics[width=0.23\textwidth,height=3.2cm]{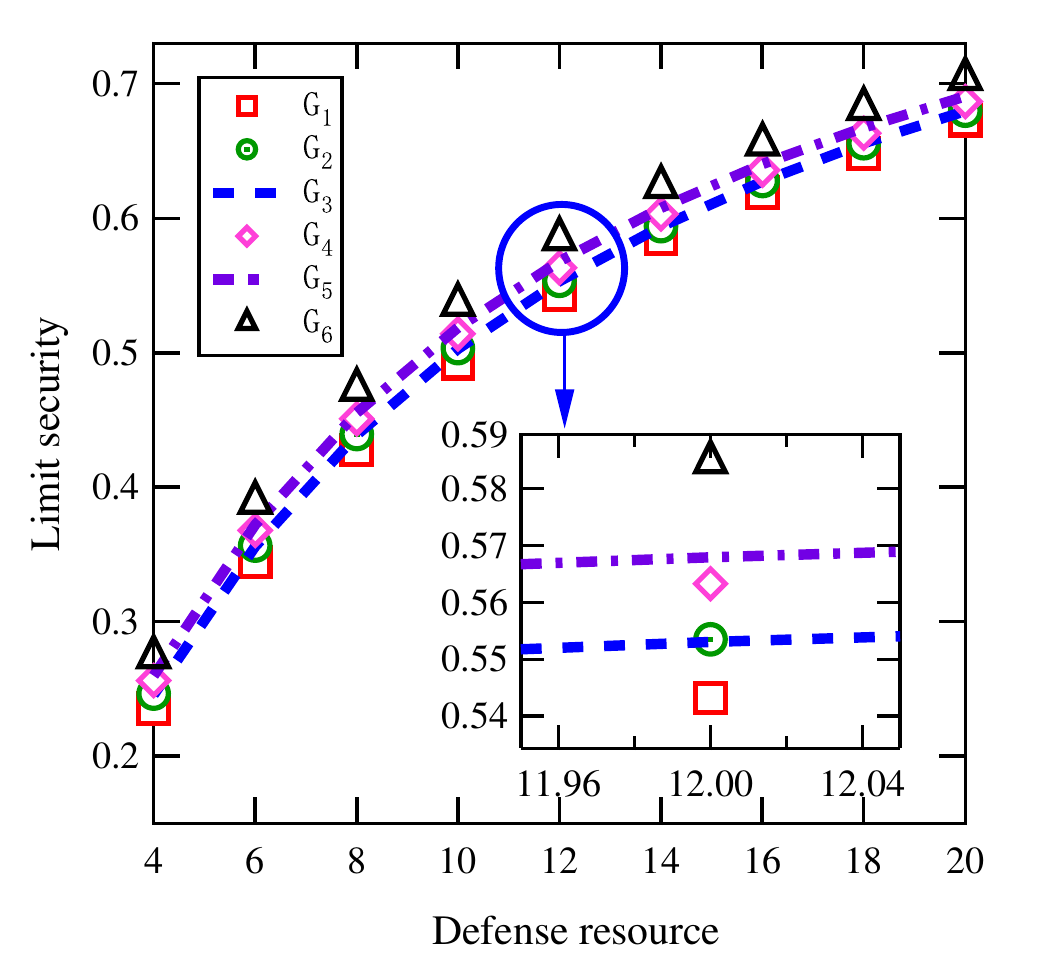}\label{a}}
	\subfigure[]{\includegraphics[width=0.23\textwidth,height=3.2cm]{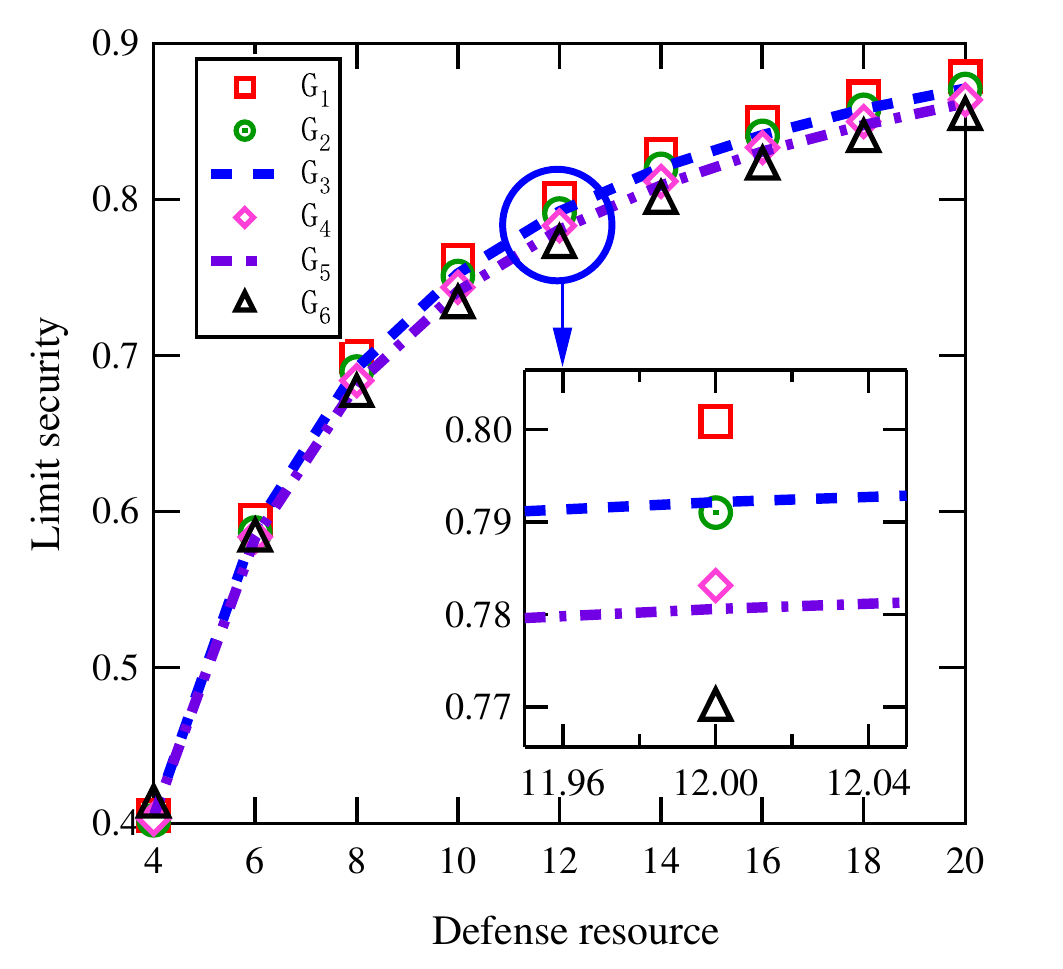}\label{a}}
	\subfigure[]{\includegraphics[width=0.23\textwidth,height=3.2cm]{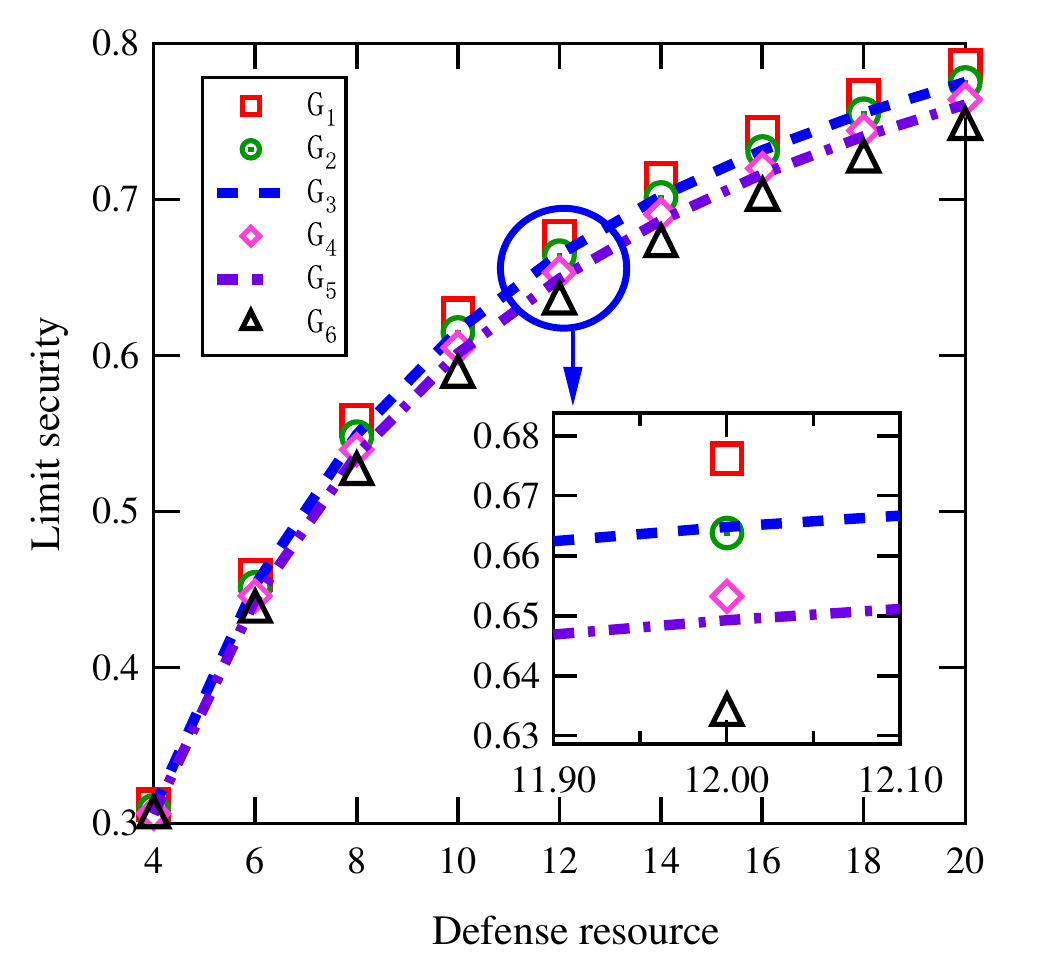}\label{a}}
	\subfigure[]{\includegraphics[width=0.23\textwidth,height=3.2cm]{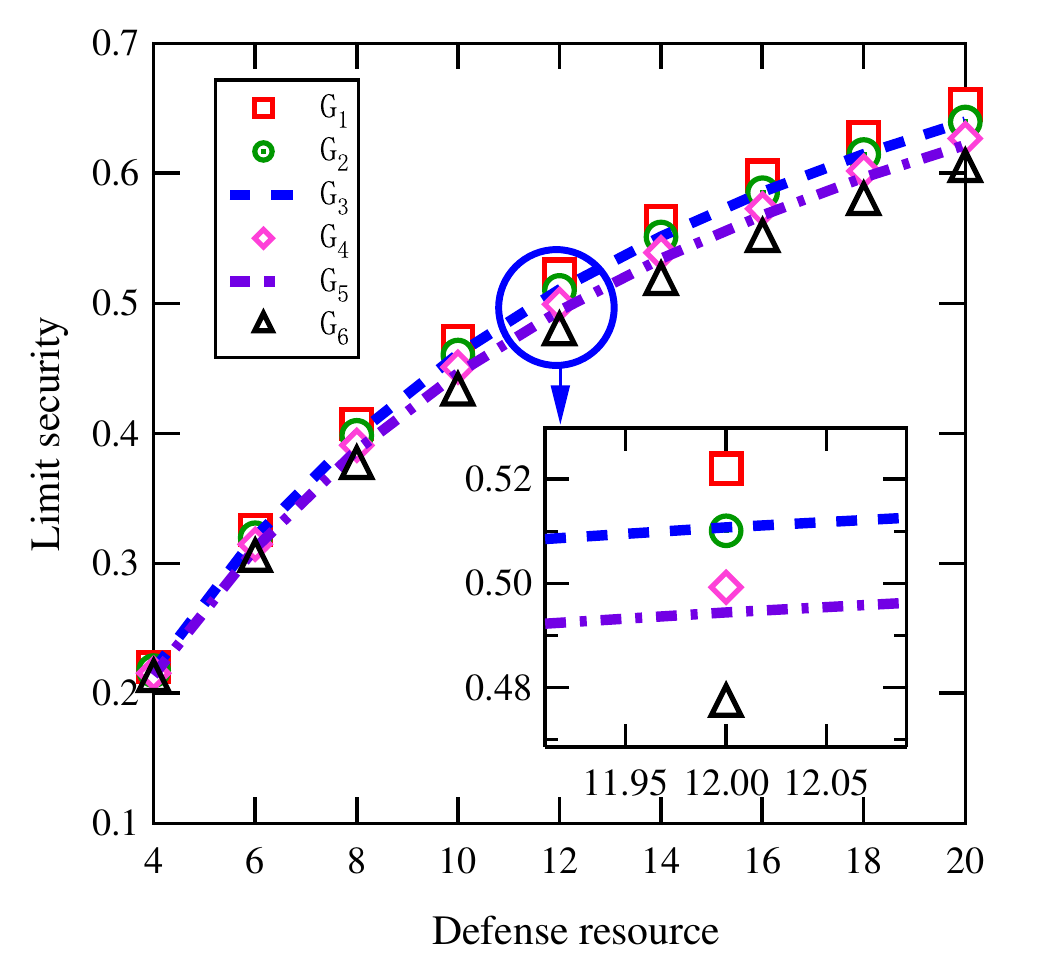}\label{a}}
	\caption{The limit security of the cyber network for each of the 504 GSCS models, where $\alpha = 0.1$, $\beta = 0.05$, $\gamma = 0.5$, $\delta = 1$, $G$ varies from $G_1$ to $G_6$, $r_{AD} = r$, $||\mathbf{y}||_1 = ||\mathbf{z}||_1 = s$, $s \in \{2, 3, \cdots, 10\}$, $||\mathbf{y}||_1 = s_1$, with (a) $r = \frac{1}{2}$, uniform $\mathbf{x}$, $\mathbf{y}$ and $\mathbf{z}$; (b) $r = 1$, uniform $\mathbf{x}$, $\mathbf{y}$ and $\mathbf{z}$; (c) $r = 2$, uniform $\mathbf{x}$, $\mathbf{y}$ and $\mathbf{z}$; (d) $r = \frac{1}{2}$, uniform $\mathbf{x}$, degree-first $\mathbf{y}$ and $\mathbf{z}$; (e) $r = 1$, uniform $\mathbf{x}$, degree-first $\mathbf{y}$ and $\mathbf{z}$; (f) $r = 2$, uniform $\mathbf{x}$, degree-first $\mathbf{y}$ and $\mathbf{z}$; (g) $r = \frac{1}{2}$, degree-first $\mathbf{x}$, uniform $\mathbf{y}$ and $\mathbf{z}$; (h) $r = 1$, degree-first $\mathbf{x}$, uniform $\mathbf{y}$ and $\mathbf{z}$; (i) $r = 2$, degree-first $\mathbf{x}$, uniform $\mathbf{y}$ and $\mathbf{z}$; (j) $r = \frac{1}{2}$, degree-first $\mathbf{x}$, $\mathbf{y}$ and $\mathbf{z}$; (k) $r = 1$, degree-first $\mathbf{x}$, $\mathbf{y}$ and $\mathbf{z}$; (l) $r =2$, degree-first $\mathbf{x}$, $\mathbf{y}$ and $\mathbf{z}$. For each of the GSCS models, the limit security of the cyber network is shown in Fig. 4. It can be seen that the limit security of a cyber network ascends with $s$.}
	\vspace{2ex}
\end{figure}

\section{Concluding remarks}

This paper is devoted to measuring the security of cyber networks under APTs. An APT-based cyber attack-defense process has been modeled as a dynamical system, which is shown to exhibit the global stability. Thereby, the limit security has been introduced as a new security metric of cyber networks. The influence of different factors on the limit security has been expounded. On this basis, some means of defending against APTs are recommended.

There are lots of open problems about APTs. In the case that the attack scheme is available, the defender must maximize the limit security over all possible defense schemes, so as to minimize the loss caused by APTs. When the attack scheme is not avaliable, the defender should furher minimize this maximized limit security over all possible attack schemes, so as to evaluate the worst-case security of the cyber network. In this work, the attack and defense schemes are both assumed to be unvaried over time. In practice, the attacker may flexibly alter the attack scheme to chase the highest profit, and the defender may flexibly change the defense scheme to maximize the security of the cyber network. In such scenarios, the evaluation of the security of cyber networks would involve optimal control theory \cite{Donald2012, YangLX2016, ZhangTR2017} or/and dynamic game theory \cite{Isaacs1999, Bressan2011}.

\section*{Acknowledgments}

This work is supported by Natural Science Foundation of China (Grant Nos. 61572006, 71301177), Sci-Tech Support Program of China (Grant
No. 2015BAF05B03), Basic and Advanced Research Program of
Chongqing (Grant No. cstc2013jcyjA1658) and Fundamental Research Funds for the Central Universities (Grant No. 106112014CDJZR008823).

\end{document}